\numberwithin{equation}{section}
\newtheorem{definition}{Definition}[section]
\newtheorem{theorem}{Theorem}[section]
\newtheorem{lemma}[theorem]{Lemma}
\newtheorem{proposition}[theorem]{Proposition}
\newtheorem{corollary}[theorem]{Corollary}
\newtheorem{example}{Example}[section]
\newenvironment{ManualCorollary}[1]{%
	\ManualCorollaryInner
}{\endManualCorollaryInner}
\newenvironment{ManualProposition}[1]{%
	\ManualPropositionInner
}{\endManualPropositionInner}
\newenvironment{ManualTheorem}[1]{%
	\ManualTheoremInner
}{\endManualTheoremInner}
\definecolor{MyLightRed}{RGB}{244, 213, 245}
\definecolor{WordRed}{RGB}{255, 0, 102}
\definecolor{RedDarkLightest}{HTML}{ff0088}
\definecolor{RedDarkLight}{HTML}{ea005f}
\definecolor{RedPurple}{HTML}{aa007f}
\definecolor{Purple}{HTML}{911146}
\definecolor{WordLightGreen}{RGB}{140, 214, 192}
\definecolor{WordGreen}{RGB}{0, 176, 80}
\definecolor{GreenLightest}{HTML}{00ffa0}
\definecolor{GreenLighter1}{HTML}{00b383}
\definecolor{GreenLighter2}{HTML}{00aa7f}
\definecolor{GreenDark}{HTML}{225522}
\definecolor{GreenTeal}{HTML}{008080}
\definecolor{WordIceBlue}{RGB}{223, 227, 229}
\definecolor{MyVeryLightBlue}{RGB}{211, 245, 247}
\definecolor{WordBlueVeryLight}{RGB}{0, 176, 240}
\definecolor{WordBlueLight}{RGB}{0, 112, 192}
\definecolor{WordBlueDark}{RGB}{46, 116, 181}
\definecolor{WordBlueDarker}{RGB}{31, 78, 121}
\definecolor{WordBlueDarker25}{RGB}{54, 96, 146}
\definecolor{WordBlueDarker50}{RGB}{36, 64, 98}
\definecolor{WordBlueDarkest}{RGB}{0, 32, 96}
\definecolor{WordBlue}{RGB}{19, 65, 99}
\definecolor{MyBlue}{RGB}{0, 64, 128}
\definecolor{MyDarkBlue}{RGB}{0, 51, 102}
\definecolor{BlueVeryDark}{HTML}{222255}
\definecolor{WordAquaLighter80}{RGB}{218, 238, 243}
\definecolor{WordAquaLighter60}{RGB}{183, 222, 232}
\definecolor{WordAquaLighter40}{RGB}{146, 205, 220}
\definecolor{WordAquaDarker25}{RGB}{49, 134, 155}
\definecolor{WordAquaDarker50}{RGB}{33, 89, 103}
\definecolor{WordVeryLightTeal}{RGB}{223, 236, 235}
\definecolor{WordLightTeal}{RGB}{160, 199, 197}
\definecolor{WordDarkTealLighter80}{RGB}{207, 223, 234}
\definecolor{WordDarkTeal}{RGB}{72, 123, 119}
\definecolor{WordDarkerTeal}{RGB}{48, 82, 80}
\definecolor{WordTurquoiseLighter80}{RGB}{209, 238, 249}
\title{The connection between the $PQ$ penny flip game and the dihedral groups}
\author{
	Theodore Andronikos$^1$ and Alla Sirokofskich$^2$ \\
	$^1$Department of Informatics, Ionian University, \\
	Corfu, Greece; andronikos@ionio.gr \\
	$^2$Department of History and Philosophy of Sciences, \\
	National and Kapodistrian University of Athens, \\
	Athens, Greece; asirokof@math.uoa.gr \\
}
\begin{document}

\maketitle

\begin{abstract}
		This paper is inspired by the PQ penny flip game. It employs group-theoretic concepts to study the original game and also its possible extensions. We show that the PQ penny flip game can be associated with the dihedral group $D_{8}$. We prove that within $D_{8}$ there exist precisely two classes of winning strategies for Q. We establish that there are precisely two different sequences of states that can guaranteed Q's win with probability $1.0$. We also show that the game can be played in the all dihedral groups $D_{8 n}$, $n \geq 1$, with any significant change. We examine what happens when Q can draw his moves from the entire $U(2)$ and we conclude that again, there are exactly two classes of winning strategies for Q, each class containing now an infinite number of equivalent strategies, but all of them send the coin through the same sequence of states as before. Finally, we consider general extensions of the game with the quantum player having $U(2)$ at his disposal. We prove that for Q to surely win against Picard, he must make both the first and the last move.

	\textbf{Keywords:} Game theory, quantum game theory, PQ penny flip game, groups, winning strategy.
\end{abstract}

\section{Introduction} \label{sec:Introduction}

It is rather unnecessary to stress the importance of game theory. It has been extensively used for decades now to help researchers and practitioners make sense of situations involving conflict, competition, and cooperation. The abstraction of players who antagonize each other in a specified framework by devising elaborate strategies has been employed in the fields of economics, political and social sciences, biology, and, naturally, to computer science. Game theorists have developed an enormous technical machinery for the quantitative assessment of the players' strategies and their payoffs. Of particular significance is the assumption that the players are rational, which means that they seek to maximize their payoffs. In this paper we use only very basic and easy to grasp notions from game theory. These can be found in all standard textbooks, such as \cite{Maschler2020}, \cite{Dixit2015}, and \cite{Myerson1997}. The emergence of the quantum era in information and computation also brought about the creation of the field of quantum game theory. This recent field is devoted to the study of classical games in the quantum setting, giving an exciting new perspective and results that are beyond the grasp of the classical realm.

\subsection{Related work}

The year 1999 was an important milestone for the creation of the field of quantum games. In that year two influential works were published. In a seminal paper Meyer \cite{Meyer1999} introduced the PQ penny flip game, which can be considered the quantum analogue of the classical penny flip game. The other influential work from 1999 was by Eisert et al. in \cite{Eisert1999}. There the authors presented a novel technique, known now as the Eisert-Wilkens-Lewenstein protocol, that has gained wide acceptance in the field.

In Meyer's PQ penny flip game, the two players are the famous tv characters Picard and Q from the tv series Star Trek. They consecutively ``toss'' a quantum coin and if at the end of the game the coin is found heads up Q wins, otherwise Picard wins. There is a metaphor behind the two players: Picard represents the classical player and Q the quantum player. For Picard the game is perceived as the classical penny flip game, but for Q the quantumness of the coin is evident and can be exploited to his advantage. Meyer demonstrated that Q can always win with probability $1.0$ by employing the Hadamard operator. Afterwards, many researchers generalized this game to $n$-dimensional quantum systems. Important results in this direction were obtained by \cite{Wang2000}, \cite{Ren2008}, and \cite{Salimi2009}. These results indicated that under a specific set of rules, the quantum player does have an advantage over the classical player. Nonetheless, this need not always hold as the authors in \cite{Anand2015} pointed out. There it was shown that if the the rules of the PQ penny flip game are appropriately modified, it is even    possible that Picard may win the game. Another related problem, namely that of quantum gambling based on Nash-equilibrium was examined in \cite{Zhang2017}. The association of every finite variant of the PQ penny flip game with finite automata so that strategies are words accepted by the corresponding automaton was established in \cite{Andronikos2018}. In that work the underlying assumption was that Q will always use the Hadamard operator. The present paper is also focused on the PQ penny flip game and its possible extensions, but this time without any limitations, as Q is free to choose his moves from the entire $U(2)$.

With respect to the Eisert-Wilkens-Lewenstein scheme, many important results have been obtained. We mention that several quantum adaptations of the famous prisoners' dilemma have been defined and studied, giving quantum strategies that are better than any classical strategy (\cite{Eisert1999}). Some recent results were presented in \cite{Giannakis2019}, where the correspondence of typical conditional strategies used in the classical repeated prisoners' dilemma game to languages accepted by quantum automata was established, and in \cite{Rycerz2020}, where the Eisert–Wilkens–Lewenstein scheme was extended. Quantum games, especially coin tossing, have also been fruitfully utilized in many quantum cryptographic protocols. In such a setting Alice and Bob assume the role of remote parties that, despite not trusting each other, they have to agree on a random bit (see \cite{Bennett2014} and references therein). This has been extended in \cite{Aharon2010} to quantum dice rolling when multiple outcomes and parties are involved. In a different but quite similar line of thought, Parrondo games were studied via quantum lattice gas automata in \cite{Meyer2002} and in \cite{Giannakis2015a} it was shown that quantum automata accepting infinite words can capture winning strategies for abstract quantum games. Recently abstract sequential quantum game were investigated in \cite{Andronikos2021}. In passing we note that games have been cast not only in a quantum setting, but also in a biological setting. Some well-known classical games, such as the prisoners' dilemma, can be expressed via biological  bio-inspired concepts (see \cite{Kastampolidou2020}, \cite{Theocharopoulou2019} and \cite{Kastampolidou2020a} for more references).

\subsection{Contribution}

This paper is inspired by previous research on the PQ penny flip game. Its novelty is mainly attributed to its use of group-theoretic concepts to study the original game and its possible extensions. We show for the first time, to the best of our knowledge, that the original PQ penny flip game can be associated with the dihedral group $D_{8}$. Interpreting the game in terms of stabilizers and fixed sets, which are basic but helpful group notions, enables us to easily explain and replicate Q's strategy. First, we prove that within $D_{8}$ there exist precisely two classes of winning strategies for Q. Each class contains many different strategies, but all these strategies are equivalent in the sense that they drive the coin through the same sequence of states. We establish for the first time in the literature that there are precisely two different sequences of states that can guaranteed Q's win with probability $1.0$. We then proceed to show that the same game can be played in the all dihedral groups $D_{8 n}$, $n \geq 1$, with any significant change in the winning strategies of Q. This allows us to conclude that in a way the smallest group that captures the essence of the game is $D_{8}$. Subsequently, we examine what happens when Q can draw his moves from the entire $U(2)$. We provide the definitive answer that, perhaps surprisingly, the situation remains the same. Again, there are exactly two classes of winning strategies for Q, each class containing now an infinite number of equivalent strategies, but all of them send the coin through the same sequence of states as before. In a final analysis, the original PQ penny flip game can be succinctly summarized by saying that there are precisely two paths of states that lead to Q's win and, of course, no path that leads to Picard's win. Finally, we consider general extensions of the game without any restrictions in the number of rounds and with the quantum player having $U(2)$ at his disposal. Our examination, will uncover a very important fact, namely that for the quantum player to surely win against the classical player the tremendous advantage of in terms of available quantum actions is not enough. Q must also make both the first and the last move, or else he is not certain to win.

\subsection{Organization}

The paper is structured as follows. Section \ref{sec:Introduction} sets the stage and gives the most relevant references. Section \ref{sec:Background} introduces the notation and terminology used in this article. Section \ref{sec:Connecting PQG & $D_8$} proves the connection of the game with the dihedral group $D_{8}$ and Section \ref{sec:Group Theoretic Analysis of PQG} analyzes Q's strategy in terms of group concepts. Sections \ref{sec:Enlarging the Operational Space of the PQG} and \ref{sec:Extending the PQG} contain the most important results of this work, and, finally, Section \ref{sec:Conclusions} provides a summary of our conclusions and sketches some ideas for possible future work.

\section{Background} \label{sec:Background}

\subsection{The $PQ$ penny flip game}

In what is now regarded as a landmark paper~\cite{Meyer1999}, Meyer defined the \textit{penny flip} game between the famous television personas Picard and Q from the TV series Star Trek. From now on for brevity we shall refer to it as the $PQG$ (the Picard - Q game). This game is much more than a coin flipping game; its importance lies in the fact that it demonstrates the advantage of quantum strategies over classical strategies. The human player Picard can only employ classical strategies, while the quantum player, Q, is capable of using quantum strategies. This asymmetry is the reason why, no matter what Picard does, Q always wins with probability $1.0$. Picard is confined to just the two classical moves available in a 2-dimensional system: he can either do nothing, or he can flip the coin. Doing nothing means that the coin remains in its current state, while flipping the coin changes its state from heads to tails or vice versa. Q's advantage stems from the fact that he can potentially choose from an infinite pool of allowable moves; the only obvious restriction being that his move must be represented by a unitary operator. The game begins with the coin \emph{heads up} and the two players act on the coin following a predetermined order. Q acts first, then Picard and last Q again. If, after Q's last action, the coin is found \emph{heads up}, then Q wins. If the coin is found \emph{tails up}, then Picard wins.

In the context of the $PQG$ and its extensions, it is convenient to employ the terminology outlined in Definition \ref{def:Winning and Dominant Strategies}, adapted from \cite{Maschler2020} and \cite{Myerson1997}. Informally, the word \emph{strategy} implies a \emph{rational} plan on behalf of each player. This plan ultimately consists of actions, or moves that the player makes as the game evolves.

\begin{definition} [Winning and dominant strategies] \label{def:Winning and Dominant Strategies} \
	\begin{itemize}
		\item	A \emph{strategy} is a function that associates an admissible action to every round that the player makes a move. It is convenient to represent strategies as finite sequences of moves from the player's repertoire.
		\item	A strategy $\sigma_{P}$ for Picard is a \emph{winning strategy} if for \emph{every} strategy $\sigma_{Q}$ of Q, Picard wins the game with probability $1.0$.
		\item	Symmetrically, a strategy $\sigma_{Q}$ for Q is a \emph{winning strategy} if for \emph{every} strategy $\sigma_{P}$ of Picard, Q wins the game with probability $1.0$.
		\item	A strategy for Picard or Q is \emph{dominated} if there exists another strategy that has greater probability to win for \emph{every} strategy of the other player. A strategy that dominates all other strategies is called \emph{dominant}.
	\end{itemize}
\end{definition}

Of course, in the original $PQG$, Picard's strategy is just one move, e.g., $( F )$. Q's strategy on the other hand is a sequence of two moves: $(H, H)$. Moreover, a winning strategy for Q is also a dominant strategy. Meyer proved that the use of the Hadamard operator constitutes a \emph{winning} strategy for Q. If Q uses the Hadamard operator, he will win with probability $1.0$, irrespective of Picard's moves.

In more technical terms, the game takes place in the 2-dimensional complex Hilbert space $\mathcal{H}_2$. The computational basis of $\mathcal{H}_2$ is denoted by $B$ and consists of the kets $\ket{0} = \begin{bmatrix} 1 \\ 0 \end{bmatrix}$ and $\ket{1} = \begin{bmatrix} 0 \\ 1 \end{bmatrix}$:

\begin{align} \label{eq:Computational Basis B2}
	B = \{ \ket{0}, \ket{1} \} \ .
\end{align}

Typically, $\ket{0}$ and $\ket{1}$ capture the state of the coin being \emph{heads up} or \emph{tails up}, respectively. Picard's moves \emph{do nothing} and \emph{flip} the coin correspond to the \emph{identity} operator $I$ and the \emph{flip} operator $F$, respectively. As already mentioned, Q's winning strategy is the Hadamard operator $H$. In $\mathcal{H}_2$ the players' moves are represented by the following $2 \times 2$ matrices:

\begin{align} \label{eq:PQ Matrices}
	I =
	\begin{bmatrix}
		1 & 0 \\
		0 & 1
	\end{bmatrix}, \quad
	F =
	\begin{bmatrix}
		0 & 1 \\
		1 & 0
	\end{bmatrix}, \quad \text{and} \quad
	H =
	\begin{bmatrix}
		\begin{array}{lr}
			\frac{\sqrt{2}}{2} & \frac{\sqrt{2}}{2} \\
			\frac{\sqrt{2}}{2} & -\frac{\sqrt{2}}{2}
		\end{array}
	\end{bmatrix} \ .
\end{align}

$F$ is of course one of the famous Pauli matrices, frequently denoted by $\sigma_x$ or $\sigma_1$. In this work we approach the dynamics of the $PQG$ and its extensions by examining the actions available to the players.

\begin{definition} [$PQG$ moves and their composition] \label{def:PQG Moves}
	Let $M_{P} = \{I, F\}$ and $M_{Q} = \{ H \}$ be the sets of permissible moves for Picard and Q, respectively, and let $M = M_{P} \cup M_{Q} = \{I, F, H\}$. The set of all \emph{finite compositions} of moves from $M$, denoted by $M^\star$, is called the \emph{operational space} of the $PQG$.
\end{definition}

Consider for instance the composition $F H$, that can arise in the $PQG$ when Picard replies with $F$ to Q's $H$. A simple matrix multiplication shows that

\begin{align} \label{eq:FH Composition I}
	F H =
	\begin{bmatrix}
		\begin{array}{lr}
			\frac{\sqrt{2}}{2} & - \frac{\sqrt{2}}{2} \\
			\frac{\sqrt{2}}{2} &   \frac{\sqrt{2}}{2}
		\end{array}
	\end{bmatrix} \ .
\end{align}

The operational space $M^\star$ contains not only the above operator~(\ref{eq:FH Composition I}), but also every operator that results from a finite composition of the moves in $M$. Most of them are not realized in the actual $PQG$ because its duration is just 3 rounds; however, this set will provide insight when we consider various extensions of the $PQG$.

Definition \ref{def:PQG Moves} can be generalized as follows: 

\begin{definition} \label{def:V Arbitrary Game Moves}
	Given any game $V$ (e.g., an extension of the original $PQG$ game), for which the set of moves is $M_V$, its operational space is $M_V^\star$.
\end{definition}

\subsection{Dihedral groups}


For the completeness of our presentation we shall recall a few definitions and concepts from group theory. The notation and definitions are based on standard textbooks such as \cite{Artin2011} and \cite{Dummit2004}.

%

\begin{definition}[Group] \label{def:Group Definition}
	A set $G$ equipped with a binary operation $\circ$ is a \emph{group} under $\circ$ if it satisfies the following properties.
	\begin{enumerate}
		\item	There exists an element $\mathds{1} \in G$, the \emph{identity} of $G$, such that $\mathds{1} \circ g = g \circ \mathds{1} = g$ for all $g \in G$.
		\item	For every $g \in G$ there exists an element in $G$, called the \emph{inverse} of $g$ and denoted by $g^{-1}$, such that $g \circ g^{-1} = g^{-1} \circ g = \mathds{1}$.
		\item	For all $f, g, h \in G:$ $(f \circ g) \circ h = f \circ (g \circ h)$, i.e., the \emph{associative} property holds.
	\end{enumerate}
\end{definition}

The number of elements of the group $G$ is called the \emph{order} of $G$ and is denoted by $|G|$. It is customary to employ the following notation regarding powers of an arbitrary element $g$ of a group $G$.

\begin{itemize}
	\item	$g^0 = \mathds{1}$,
	\item	$g^n = \underbrace{ g \circ g \circ \ldots \circ g }_{ \substack{ n \text{ factors } \\ \vspace{0.01 cm} } }$, when $n > 0$, and 
	\item	$g^n = (g^{-1})^{|n|}$, when $n < 0$. 
\end{itemize}

We shall omit the symbol $\circ$ of the binary operation, particularly in view of the fact that in many occasions the group elements will be represented by $2 \times 2$ matrices and the operation $\circ$ will be matrix multiplication. Hence, instead of writing $f \circ g$, we will simply use the juxtaposition of the two elements $f g$.

The groups that capture the symmetries of regular polygons are called dihedral groups. We clarify that by \textit{regular} polygon it is understood that all the sides of the polygon have the same length and all the interior angles are equal. Furthermore, we assume that the center of the regular polygon is located at the origin of the plane.

%

\begin{definition}[Dihedral groups]\label{def:Dihedral Groups Definition}
	The group of symmetries of the \emph{regular} $n$-gon, where $n \geq 3$, is called the dihedral group of order $2n$ and is denoted by $D_n$.\footnote{Many authors denote the dihedral group of order $2n$ by $D_{2n}$ to explicitly indicate its order. However, in this paper we use the notation $D_n$ to emphasize the geometric intuition.}
\end{definition}

Please note that from now on when we refer to an arbitrary dihedral group $D_n$ we shall assume that $n \geq 3$. The group operation is composition of symmetries, i.e., composition of rotations and reflections. The $2n$ symmetries of a regular $n$-gon, where $n \geq 3$, can be categorized as follows.
\begin{itemize}
	\item	There are $n$ rotational symmetries. These are the rotations about the center of the $n$-gon by $\frac{2 \pi k}{n}$, with $k$ taking the values $0, 1, \dots, n-1$. Figures \ref{fig:Rotation Symmetries Regular Heptagon} and \ref{fig:Rotation Symmetries Regular Octagon} show the $7$ and $8$ rotational symmetries of the regular heptagon and octagon, respectively.
	\item	There are also $n$ reflection symmetries.
			\begin{itemize}
				\item	If $n$ is odd these are the reflections in the lines defined by a vertex and the center of the regular $n$-gon. As an example, see Figure \ref{fig:Reflection Symmetries Regular Heptagon} depicting the reflection symmetries of the regular heptagon.
				\item	If $n$ is even, these are $\frac{n}{2}$ reflections in the lines through opposite vertices and $\frac{n}{2}$ reflections in the lines passing through midpoints of opposite faces. An example that will play an important role in the rest of our study is given in Figure \ref{fig:Reflection Symmetries Regular Octagon}, showing the reflection symmetries of the regular octagon.
			\end{itemize}
			By fixing a vertex of the regular $n$-gon to lie on the $x$-axis (such a vertex $1$ in Figures \ref{fig:Reflection Symmetries Regular Heptagon} and \ref{fig:Reflection Symmetries Regular Octagon}) and the center of the $n$-gon at the origin of the plane, we may surmise that the $n$ reflection symmetries correspond to lines through the origin making an angle $\frac{\pi k}{n}$ with the positive $x$-axis, with $k$ taking the values $0, 1, \dots, n-1$.
\end{itemize}

\begin{figure}[H]
	\begin{minipage}[b]{0.475\textwidth}
		\centering
		\begin{tikzpicture}[scale = 2.5]
			\def \angle {360/7}
			\draw [fill, thick, WordBlueDark]
			({cos(0 * \angle)}, {sin(0 * \angle)}) circle (0.75 pt) node [right]{1} --
			({cos(1 * \angle)}, {sin(1 * \angle)}) circle (0.75 pt) node [above right]{2} --
			({cos(2 * \angle)}, {sin(2 * \angle)}) circle (0.75 pt) node [above]{3} --
			({cos(3 * \angle)}, {sin(3 * \angle)}) circle (0.75 pt) node [above left]{4} --
			({cos(4 * \angle)}, {sin(4 * \angle)}) circle (0.75 pt) node [left = 0.1 cm]{5} --
			({cos(5 * \angle)}, {sin(5 * \angle)}) circle (0.75 pt) node [below left]{6} --
			({cos(6 * \angle)}, {sin(6 * \angle)}) circle (0.75 pt) node [below = 0.1 cm]{7} --
			({cos(0 * \angle)}, {sin(0 * \angle)});
			\draw[-, ultra thick, RedPurple] (0, 0) -- ({cos(0 * \angle)}, {sin(0 * \angle)});
			\draw[-, ultra thick, RedPurple] (0, 0) -- ({cos(1 * \angle)}, {sin(1 * \angle)});
			\draw[-, ultra thick, RedPurple] (0, 0) -- ({cos(2 * \angle)}, {sin(2 * \angle)});
			\draw[-, ultra thick, RedPurple] (0, 0) -- ({cos(3 * \angle)}, {sin(3 * \angle)});
			\draw[-, ultra thick, RedPurple] (0, 0) -- ({cos(4 * \angle)}, {sin(4 * \angle)});
			\draw[-, ultra thick, RedPurple] (0, 0) -- ({cos(5 * \angle)}, {sin(5 * \angle)});
			\draw[-, ultra thick, RedPurple] (0, 0) -- ({cos(6 * \angle)}, {sin(6 * \angle)});
			\scoped [on background layer]
			\filldraw [->, RedPurple!20, line width = 0.3 mm] (0, 0) -- (0.85,0) arc (0:\angle:0.85);
			\draw [->, RedPurple!70, line width = 0.3 mm] (0.85,0) arc (0:\angle:0.85);
			\draw [->, RedPurple!70, line width = 0.3 mm] (0.75,0) arc (0:2*\angle:0.75);
			\draw [->, RedPurple!70, line width = 0.3 mm] (0.65,0) arc (0:3*\angle:0.65);
			\draw [->, RedPurple!70, line width = 0.3 mm] (0.55,0) arc (0:4*\angle:0.55);
			\draw [->, RedPurple!70, line width = 0.3 mm] (0.45,0) arc (0:5*\angle:0.45);
			\draw [->, RedPurple!70, line width = 0.3 mm] (0.35,0) arc (0:6*\angle:0.35);
			\draw [->, RedPurple!70, line width = 0.3 mm] (0.25,0) arc (0:7*\angle:0.25);
			\draw [->, RedPurple!70, line width = 0.3 mm] ({1.2 * cos(0.5 * \angle)}, {1.2 * sin(0.5 * \angle)}) node [RedPurple, right] {\Large $\frac{2 \pi}{7}$} -- ({0.85 * cos(0.5 * \angle)}, {0.85 * sin(0.5 * \angle)});
		\end{tikzpicture}
		\caption{The rotational symmetries of the regular heptagon.} \label{fig:Rotation Symmetries Regular Heptagon}
	\end{minipage}
	\hfill
	\begin{minipage}[b]{0.475\textwidth}
		\centering
		\begin{tikzpicture}[scale = 2.5]
			\def \angle {360/7}
			\draw [fill, thick, WordBlueDark]
			({cos(0 * \angle)}, {sin(0 * \angle)}) circle (0.75 pt) node [above right]{1} --
			({cos(1 * \angle)}, {sin(1 * \angle)}) circle (0.75 pt) node [above = 0.1 cm]{2} --
			({cos(2 * \angle)}, {sin(2 * \angle)}) circle (0.75 pt) node [above left]{3} --
			({cos(3 * \angle)}, {sin(3 * \angle)}) circle (0.75 pt) node [below left]{4} --
			({cos(4 * \angle)}, {sin(4 * \angle)}) circle (0.75 pt) node [below = 0.1 cm]{5} --
			({cos(5 * \angle)}, {sin(5 * \angle)}) circle (0.75 pt) node [below right]{6} --
			({cos(6 * \angle)}, {sin(6 * \angle)}) circle (0.75 pt) node [right = 0.1 cm]{7} --
			({cos(0 * \angle)}, {sin(0 * \angle)});
			\draw [-, dashed, thick, GreenTeal] ({1.4 * cos(0 * \angle)}, {1.4 * sin(0 * \angle)}) -- ({1.4 * cos(3.5 * \angle)}, {1.4 * sin(3.5 * \angle)});
			\draw [<->, GreenTeal!70, line width = 0.4 mm] (1.4, 0.125) to [out = 150, in = 210, looseness = 3] (1.4, -0.125);
			\draw [-, dashed, thick, GreenTeal] ({1.4 * cos(0.5 * \angle)}, {1.4 * sin(0.5 * \angle)}) -- ({1.4 * cos(4 * \angle)}, {1.4 * sin(4 * \angle)});
			\draw [<->, GreenTeal!70, line width = 0.4 mm, shift = {(0.0 cm, 0.0 cm)}, rotate = \angle / 2] (1.4, 0.125) to [out = 150, in = 210, looseness = 3] (1.4, -0.125);
			\draw [-, dashed, thick, GreenTeal] ({1.4 * cos(1 * \angle)}, {1.4 * sin(1 * \angle)}) -- ({1.4 * cos(4.5 * \angle)}, {1.4 * sin(4.5 * \angle)});
			\draw [<->, GreenTeal!70, line width = 0.4 mm, shift = {(0.0 cm, 0.0 cm)}, rotate = \angle] (1.4, 0.125) to [out = 150, in = 210, looseness = 3] (1.4, -0.125);
			\draw [-, dashed, thick, GreenTeal] ({1.4 * cos(1.5 * \angle)}, {1.4 * sin(1.5 * \angle)}) -- ({1.4 * cos(5 * \angle)}, {1.4 * sin(5 * \angle)});
			\draw [<->, GreenTeal!70, line width = 0.4 mm, shift = {(0.0 cm, 0.0 cm)}, rotate = 3* (\angle / 2)] (1.4, 0.125) to [out = 150, in = 210, looseness = 3] (1.4, -0.125);
			\draw [-, dashed, thick, GreenTeal] ({1.4 * cos(2 * \angle)}, {1.4 * sin(2 * \angle)}) -- ({1.4 * cos(5.5 * \angle)}, {1.4 * sin(5.5 * \angle)});
			\draw [<->, GreenTeal!70, line width = 0.4 mm, shift = {(0.0 cm, 0.0 cm)}, rotate = 2 * \angle] (1.4, 0.125) to [out = 150, in = 210, looseness = 3] (1.4, -0.125);
			\draw [-, dashed, thick, GreenTeal] ({1.4 * cos(2.5 * \angle)}, {1.4 * sin(2.5 * \angle)}) -- ({1.4 * cos(6 * \angle)}, {1.4 * sin(6 * \angle)});
			\draw [<->, GreenTeal!70, line width = 0.4 mm, shift = {(0.0 cm, 0.0 cm)}, rotate = 5* (\angle / 2)] (1.4, 0.125) to [out = 150, in = 210, looseness = 3] (1.4, -0.125);
			\draw [-, dashed, thick, GreenTeal] ({1.4 * cos(3 * \angle)}, {1.4 * sin(3 * \angle)}) -- ({1.4 * cos(6.5 * \angle)}, {1.4 * sin(6.5 * \angle)});
			\draw [<->, GreenTeal!70, line width = 0.4 mm, shift = {(0.0 cm, 0.0 cm)}, rotate = 3* \angle] (1.4, 0.125) to [out = 150, in = 210, looseness = 3] (1.4, -0.125);
		\end{tikzpicture}
		\caption{The reflection symmetries of the regular heptagon.} \label{fig:Reflection Symmetries Regular Heptagon}
	\end{minipage}
\end{figure}

\begin{figure}[H]
	\begin{minipage}[b]{0.475\textwidth}
		\centering
		\begin{tikzpicture}[scale = 2.5]
			\def \angle {360/8}
			\draw [fill, thick, WordBlueDark]
			({cos(0 * \angle)}, {sin(0 * \angle)}) circle (0.75 pt) node [right]{1} --
			({cos(1 * \angle)}, {sin(1 * \angle)}) circle (0.75 pt) node [above right]{2} --
			({cos(2 * \angle)}, {sin(2 * \angle)}) circle (0.75 pt) node [above]{3} --
			({cos(3 * \angle)}, {sin(3 * \angle)}) circle (0.75 pt) node [above left]{4} --
			({cos(4 * \angle)}, {sin(4 * \angle)}) circle (0.75 pt) node [left]{5} --
			({cos(5 * \angle)}, {sin(5 * \angle)}) circle (0.75 pt) node [below left]{6} --
			({cos(6 * \angle)}, {sin(6 * \angle)}) circle (0.75 pt) node [below]{7} --
			({cos(7 * \angle)}, {sin(7 * \angle)}) circle (0.75 pt) node [below right]{8} --
			({cos(0 * \angle)}, {sin(0 * \angle)});
			\draw[-, ultra thick, RedPurple] (0, 0) -- ({cos(0 * \angle)}, {sin(0 * \angle)});
			\draw[-, ultra thick, RedPurple] (0, 0) -- ({cos(1 * \angle)}, {sin(1 * \angle)});
			\draw[-, ultra thick, RedPurple] (0, 0) -- ({cos(2 * \angle)}, {sin(2 * \angle)});
			\draw[-, ultra thick, RedPurple] (0, 0) -- ({cos(3 * \angle)}, {sin(3 * \angle)});
			\draw[-, ultra thick, RedPurple] (0, 0) -- ({cos(4 * \angle)}, {sin(4 * \angle)});
			\draw[-, ultra thick, RedPurple] (0, 0) -- ({cos(5 * \angle)}, {sin(5 * \angle)});
			\draw[-, ultra thick, RedPurple] (0, 0) -- ({cos(6 * \angle)}, {sin(6 * \angle)});
			\draw[-, ultra thick, RedPurple] (0, 0) -- ({cos(7 * \angle)}, {sin(7 * \angle)});
			\scoped [on background layer]
				\filldraw [->, RedPurple!20, line width = 0.3 mm] (0, 0) -- (0.85,0) arc (0:\angle:0.85);
			\draw [->, RedPurple!70, line width = 0.3 mm] (0.85,0) arc (0:\angle:0.85);
			\draw [->, RedPurple!70, line width = 0.3 mm] (0.75,0) arc (0:2*\angle:0.75);
			\draw [->, RedPurple!70, line width = 0.3 mm] (0.65,0) arc (0:3*\angle:0.65);
			\draw [->, RedPurple!70, line width = 0.3 mm] (0.55,0) arc (0:4*\angle:0.55);
			\draw [->, RedPurple!70, line width = 0.3 mm] (0.45,0) arc (0:5*\angle:0.45);
			\draw [->, RedPurple!70, line width = 0.3 mm] (0.35,0) arc (0:6*\angle:0.35);
			\draw [->, RedPurple!70, line width = 0.3 mm] (0.25,0) arc (0:7*\angle:0.25);
			\draw [->, RedPurple!70, line width = 0.3 mm] (0.15,0) arc (0:8*\angle:0.15);
			\draw [->, RedPurple!70, line width = 0.3 mm] ({1.2 * cos(0.5 * \angle)}, {1.2 * sin(0.5 * \angle)}) node [RedPurple, right] {\Large $\frac{2 \pi}{8}$} -- ({0.85 * cos(0.5 * \angle)}, {0.85 * sin(0.5 * \angle)});
		\end{tikzpicture}
		\caption{The rotational symmetries of the regular octagon.} \label{fig:Rotation Symmetries Regular Octagon}
	\end{minipage}
	\hfill
	\begin{minipage}[b]{0.475\textwidth}
		\centering
		\begin{tikzpicture}[scale = 2.5]
			\def \angle {360/8}
			\draw [fill, thick, WordBlueDark]
			({cos(0 * \angle)}, {sin(0 * \angle)}) circle (0.75 pt) node [xshift = 0.2 cm, yshift = 0.3 cm]{1} --
			({cos(1 * \angle)}, {sin(1 * \angle)}) circle (0.75 pt) node [above = 0.1 cm]{2} --
			({cos(2 * \angle)}, {sin(2 * \angle)}) circle (0.75 pt) node [above left]{3} --
			({cos(3 * \angle)}, {sin(3 * \angle)}) circle (0.75 pt) node [left = 0.1 cm]{4} --
			({cos(4 * \angle)}, {sin(4 * \angle)}) circle (0.75 pt) node [above left]{5} --
			({cos(5 * \angle)}, {sin(5 * \angle)}) circle (0.75 pt) node [above = 0.1 cm,  left = 0.1 cm]{6} --
			({cos(6 * \angle)}, {sin(6 * \angle)}) circle (0.75 pt) node [xshift = 0.2 cm, yshift = -0.3 cm]{7} --
			({cos(7 * \angle)}, {sin(7 * \angle)}) circle (0.75 pt) node [right = 0.1 cm]{8} --
			({cos(0 * \angle)}, {sin(0 * \angle)});
			\draw [fill, thick, WordBlueDark] ({0.92 * cos(0.5 * \angle)}, {0.92 * sin(0.5 * \angle)}) circle (0.75 pt) node [WordBlueDark, xshift = 0.1 cm, yshift = 0.3 cm] {$A$};
			\draw [fill, thick, WordBlueDark] ({0.92 * cos(4.5 * \angle)}, {0.92 * sin(4.5 * \angle)}) circle (0.75 pt) node [WordBlueDark, xshift = -0.35 cm, yshift = 0.1 cm] {$A'$};
			\draw [-, dashed, thick, GreenTeal] ({1.4 * cos(0 * \angle)}, {1.4 * sin(0 * \angle)}) -- ({1.4 * cos(4 * \angle)}, {1.4 * sin(4 * \angle)});
			\draw [<->, GreenTeal!70, line width = 0.4 mm] (1.4, 0.125) to [out = 150, in = 210, looseness = 3] (1.4, -0.125);
			\draw [-, dashed, thick, GreenTeal] ({1.4 * cos(0.5 * \angle)}, {1.4 * sin(0.5 * \angle)}) -- ({1.4 * cos(4.5 * \angle)}, {1.4 * sin(4.5 * \angle)});
			\draw [<->, GreenTeal!70, line width = 0.4 mm, shift = {(0.0 cm, 0.0 cm)}, rotate = 22.5] (1.4, 0.125) to [out = 150, in = 210, looseness = 3] (1.4, -0.125);
			\draw [-, dashed, thick, GreenTeal] ({1.4 * cos(1 * \angle)}, {1.4 * sin(1 * \angle)}) -- ({1.4 * cos(5 * \angle)}, {1.4 * sin(5 * \angle)});
			\draw [<->, GreenTeal!70, line width = 0.4 mm, shift = {(0.0 cm, 0.0 cm)}, rotate = 45] (1.4, 0.125) to [out = 150, in = 210, looseness = 3] (1.4, -0.125);
			\draw [-, dashed, thick, GreenTeal] ({1.4 * cos(1.5 * \angle)}, {1.4 * sin(1.5 * \angle)}) -- ({1.4 * cos(5.5 * \angle)}, {1.4 * sin(5.5 * \angle)});
			\draw [<->, GreenTeal!70, line width = 0.4 mm, shift = {(0.0 cm, 0.0 cm)}, rotate = 67.5] (1.4, 0.125) to [out = 150, in = 210, looseness = 3] (1.4, -0.125);
			\draw [-, dashed, thick, GreenTeal] ({1.4 * cos(2 * \angle)}, {1.4 * sin(2 * \angle)}) -- ({1.4 * cos(6 * \angle)}, {1.4 * sin(6 * \angle)});
			\draw [<->, GreenTeal!70, line width = 0.4 mm, shift = {(0.0 cm, 0.0 cm)}, rotate = 90] (1.4, 0.125) to [out = 150, in = 210, looseness = 3] (1.4, -0.125);
			\draw [-, dashed, thick, GreenTeal] ({1.4 * cos(2.5 * \angle)}, {1.4 * sin(2.5 * \angle)}) -- ({1.4 * cos(6.5 * \angle)}, {1.4 * sin(6.5 * \angle)});
			\draw [<->, GreenTeal!70, line width = 0.4 mm, shift = {(0.0 cm, 0.0 cm)}, rotate = 112.5] (1.4, 0.125) to [out = 150, in = 210, looseness = 3] (1.4, -0.125);
			\draw [-, dashed, thick, GreenTeal] ({1.4 * cos(3 * \angle)}, {1.4 * sin(3 * \angle)}) -- ({1.4 * cos(7 * \angle)}, {1.4 * sin(7 * \angle)});
			\draw [<->, GreenTeal!70, line width = 0.4 mm, shift = {(0.0 cm, 0.0 cm)}, rotate = 135] (1.4, 0.125) to [out = 150, in = 210, looseness = 3] (1.4, -0.125);
			\draw [-, dashed, thick, GreenTeal] ({1.4 * cos(3.5 * \angle)}, {1.4 * sin(3.5 * \angle)}) -- ({1.4 * cos(7.5 * \angle)}, {1.4 * sin(7.5 * \angle)});
			\draw [<->, GreenTeal!70, line width = 0.4 mm, shift = {(0.0 cm, 0.0 cm)}, rotate = 157.5] (1.4, 0.125) to [out = 150, in = 210, looseness = 3] (1.4, -0.125);
		\end{tikzpicture}
		\caption{The reflection symmetries of the regular octagon.} \label{fig:Reflection Symmetries Regular Octagon}
	\end{minipage}
\end{figure}

The general dihedral group contains the following $2n$ elements (for details the interested reader may consult \cite{Artin2011}, \cite{Dummit2004} or \cite{Lovett2015})

\begin{align} \label{def:General Dihedral Group Elements}
	D_n = \{ \mathds{1}, r, r^2, \dots, r^{n - 1}, s, r s, r^2 s, \dots, r^{n - 1} s \} \ ,
\end{align}

where $r$ is the rotation by $\frac{2 \pi}{n}$ and $s$ is \emph{any} reflection. It is evident that each element of $D_n$ can be \emph{uniquely} written as $r^{k} s^{l}$ for some $k, 0 \leq k \leq n - 1,$ and $l$, where $l = 0$ or $1$. Elements $\mathds{1}, r, r^2, \dots, r^{n - 1}$ are rotations, i.e., $r^k$ is the rotation by $\frac{2 \pi k}{n}$, and elements $s, r s, r^2 s, \dots, r^{n - 1} s$ are reflections.


In particular, the dihedral group $D_8$ contains the $16$ elements

\begin{align} \label{def:D8 Group Elements}
	D_8 = \{ \mathds{1}, r, r^2, \dots, r^{7}, s, r s, r^2 s, \dots, r^{7} s \} \ ,
\end{align}

where $r$ is the rotation by $\frac{2 \pi}{8}$ and $s$ is \emph{any} reflection. We remark that, referring to Figure \ref{fig:Reflection Symmetries Regular Octagon}, $s$ can be taken to be the reflection in the line passing through the vertices $1$ and $5$, or the reflection in the line passing through the midpoints $A$ and $A'$, or the reflection in the line passing through the vertices $2$ and $6$, or any of the remaining reflections.

\begin{definition}[Generators] \label{def:Group Generators}
	Given a subset $X$ of a group $G$, the \emph{smallest subgroup} of $G$ that contains $X$ is denoted by $\expval{X}$. The elements of $X$ are called \emph{generators} for $\expval{X}$. 
\end{definition}

When $X$ is finite, i.e., $X = \{ x_1, \dots, x_n \}$, as will be the case in this work, it is customary to simply write $\expval{ x_1, \dots, x_n }$.

A typical way to specify a group is by giving a \emph{presentation} for the group. This amounts to using \emph{generators} and \emph{relations}, with the understanding that all group elements can be constructed as products of powers of the generators, and that the relations are equations involving the generators and the group identity. The following presentation of $D_n$ is especially convenient for our analysis:

\begin{align} \label{eq:Dihedral Group Presentation 1}
	D_n = \langle s, t \ | \ s^2 = t^2 = (s t)^{n} = \mathds{1} \rangle \ . \tag{$P_{1}$}
\end{align}

This presentation demonstrates that $D_n$ can be generated by two reflections $s$ and $t$. It appears in \cite{Artin2011} and \cite{Meier2011}, among others, where it is clarified that $D_n$ can be generated by two reflections $s, t$ in adjacent axes of symmetry passing though the origin and intersecting in an angle $\frac{\pi}{n}$. In this case, the product $s t$ is a rotation through an angle of $\pm \frac{2 \pi}{n}$. We note though that presentations are not unique. For instance, one other widely used presentation for the dihedral groups is $D_n = \langle r, s \ | \ r^n = s^2 = \mathds{1}, r s = s r^{-1} \rangle$.

\section{The connection between $PQG$ and $D_8$} \label{sec:Connecting PQG & $D_8$}

\subsection{Matrix representations of rotations and reflections}

A useful and quite common way to represent rotations and reflections in the plane is to use $2 \times 2$ matrices. Such matrices, which are often called \emph{rotators} and \emph{reflectors}, can be conveniently written in a form that is easy to recognize and manipulate (see \cite{Lovett2015}, \cite{Meyer2000} and \cite{Anton2013} for more details). A rotator representing a counterclockwise rotation through an angle $\varphi$ about the origin is denoted by $R_{\varphi}$ and, similarly, a reflector about a line through the origin that makes an angle $\varphi$ with the positive $x$-axis is denoted by $S_{\varphi}$. $R_{\varphi}$ and $S_{\varphi}$ are given by the formulas shown below. Please note that we use capital $R$ and capital $S$ to designate these $2 \times 2$ matrices, in order to avoid any confusion with the elements of the dihedral group that are denoted by small $r$ and $s$.

\begin{multicols}{2}
	\noindent
	\begin{align} \label{eq:2D Rotator}
		R_{\varphi} =
		\begin{bmatrix}
			\begin{array}{lr}
				\cos \varphi & -\sin \varphi \\
				\sin \varphi & \cos \varphi
			\end{array}
		\end{bmatrix}
	\end{align}
	\begin{align} \label{eq:2D Reflector}
		S_{\varphi} =
		\begin{bmatrix}
			\begin{array}{lr}
				\cos 2\varphi & \sin 2\varphi \\
				\sin 2\varphi & -\cos 2\varphi
			\end{array}
		\end{bmatrix}
	\end{align}
\end{multicols}

It is now quite straightforward to see that the $F$ and $H$ operators can be written as follows.

\begin{multicols}{2}
	\noindent
	\begin{align} \label{eq:The F Reflector}
		F = S_{\frac{2 \pi}{8}} =
		\begin{bmatrix}
			\begin{array}{lr}
				\cos 2 \frac{2 \pi}{8} & \sin 2 \frac{2 \pi}{8} \\
				\sin 2 \frac{2 \pi}{8} & -\cos 2 \frac{2 \pi}{8}
			\end{array}
		\end{bmatrix}
	\end{align}
	\begin{align} \label{eq:The H Reflector}
		H = S_{\frac{\pi}{8}} =
		\begin{bmatrix}
			\begin{array}{lr}
				\cos 2 \frac{\pi}{8} & \sin 2 \frac{\pi}{8} \\
				\sin 2 \frac{\pi}{8} & -\cos 2 \frac{\pi}{8}
			\end{array}
		\end{bmatrix}
	\end{align}
\end{multicols}

This form reveals that both are \emph{reflectors}: $F$ reflects about a line that makes an angle $\frac{\pi}{4}$ with the positive $x$-axis. To be exact this is the line passing through the vertices $2$ and $6$ in Figure \ref{fig:Reflection Symmetries Regular Octagon}. Likewise, $H$ reflects about a line that makes an angle $\frac{\pi}{8}$ with the positive $x$-axis, which is the line passing through the midpoints $A$ and $A'$ in Figure \ref{fig:Reflection Symmetries Regular Octagon}. Hence, their axes of symmetry intersect in an angle $\frac{\pi}{8}$, as shown in Figure \ref{fig:Reflection Symmetries Regular Octagon}. Moreover, their product $F H$, which is given in (\ref{eq:FH Composition I}), is just the rotator $R_{\frac{2 \pi}{8}}$, as can be verified by employing formula (\ref{eq:2D Rotator}). Therefore, by invoking the presentation (\ref{eq:Dihedral Group Presentation 1}), associating $s$ to $F$, and $t$ to $H$, or vice versa, it becomes evident that $F$ and $H$ generate the dihedral group $D_8$. This conclusion is stated as Theorem \ref{thr:PQG Ambient Group}.

Please note that in an effort to enhance the readability of this paper, without worrying about the technical details, we have relocated all the proofs in the Appendix.

\begin{definition}[The ambient group] \label{def:Ambient Group}
	Let $V$ be a game with operational space $M_V^\star$. If $M_V^\star$ is \emph{isomorphic} to the group $G$, then $G$ is called the \emph{ambient group} of the game $V$.
\end{definition}

\begin{theorem}[The ambient group of the $PQG$] \label{thr:PQG Ambient Group}
	The ambient group of the $PQG$ is $D_8$.
\end{theorem}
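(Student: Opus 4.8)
The plan is to verify, directly from Definitions \ref{def:PQG Moves} and \ref{def:Ambient Group}, that the operational space $M^\star$ of the $PQG$, viewed as a set of $2\times 2$ matrices under multiplication, is isomorphic to $D_8$. I would split this into two halves: first, identify $M^\star$ as the \emph{group} generated by $F$ and $H$; second, show that this group is dihedral of order $16$, which is the content of the statement.

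First I would observe that $M^\star$ is actually a group, not merely a monoid. It contains $I$ (since $I\in M$) and is closed under composition by construction, and since $F$ and $H$ are reflectors, formula (\ref{eq:2D Reflector}) (or a direct computation from (\ref{eq:The F Reflector}) and (\ref{eq:The H Reflector})) gives $F^2=H^2=I$; hence $F$ and $H$ are their own inverses, every finite composition of elements of $M=\{I,F,H\}$ is invertible inside $M^\star$, and $I$ is redundant as a generator. Therefore $M^\star=\langle F,H\rangle$, the subgroup of $GL_2(\mathbb{R})$ generated by $F$ and $H$. I would then record the two relevant facts already established above: $F=S_{\pi/4}$ and $H=S_{\pi/8}$ are reflectors, and their product $FH$ is the rotator $R_{\pi/4}$ of (\ref{eq:FH Composition I}); using (\ref{eq:2D Rotator}) this yields $(FH)^k=R_{k\pi/4}$ for all $k$, so $(FH)^8=R_{2\pi}=I$ while $(FH)^k\neq I$ for $1\le k\le 7$, i.e.\ $FH$ has order exactly $8$.

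Thus $F$ and $H$ satisfy the defining relations $s^2=t^2=(st)^8=\mathds{1}$ of the presentation (\ref{eq:Dihedral Group Presentation 1}) of $D_8$. By the universal property of group presentations (von Dyck's theorem), there is a group homomorphism $\phi:D_8\to M^\star$ with $\phi(s)=F$ and $\phi(t)=H$, and it is surjective because $F,H$ generate $M^\star$; hence $|M^\star|\le|D_8|=16$. For the reverse inequality I would exhibit sixteen pairwise distinct elements of $M^\star$: the eight rotators $(FH)^k=R_{k\pi/4}$, $k=0,\dots,7$, and the eight reflectors $(FH)^kF=R_{k\pi/4}S_{\pi/4}=S_{\pi/4+k\pi/8}$, $k=0,\dots,7$ (the latter being reflectors because a rotator times a reflector is a reflector, as one sees from (\ref{eq:2D Rotator}) and (\ref{eq:2D Reflector})). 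The rotators are distinct since $k\pi/4$, $k=0,\dots,7$, are distinct modulo $2\pi$; the reflectors are distinct since $\pi/4+k\pi/8$, $k=0,\dots,7$, are distinct modulo $\pi$; and no rotator equals a reflector because $\det R_\varphi=1\neq-1=\det S_\varphi$. Hence $|M^\star|\ge 16$, so $|M^\star|=16$ and the surjection $\phi$ is a bijection, giving $D_8\cong M^\star$ and matching the element description in (\ref{def:D8 Group Elements}) with $r=FH$, $s=F$.

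The main obstacle is the lower bound $|M^\star|\ge 16$: the presentation argument only bounds $M^\star$ from above, so one must independently rule out that the relations among $F$ and $H$ collapse $\langle F,H\rangle$ onto a proper quotient of $D_8$; the explicit list of sixteen matrices, separated into rotators and reflectors by their determinants, settles this cleanly. An alternative route that bypasses presentations is to invoke the classical fact that the subgroup of $O(2)$ generated by two reflections whose axes meet at an angle $\pi/n$ is precisely the dihedral group of order $2n$; here the axes of $F$ and $H$ meet at angle $\pi/8$, so $\langle F,H\rangle=D_8$ directly, and the homomorphism $\phi$ then just makes the isomorphism explicit.
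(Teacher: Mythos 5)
Your proposal follows essentially the same route as the paper's proof: both identify $F$ and $H$ as reflectors, compute $F^2=H^2=I$ and $(FH)^k=R_{2\pi k/8}$ so that $(FH)^8=I$, and conclude via the presentation $(P_1)$ with $s\mapsto F$, $t\mapsto H$ that $\langle F,H\rangle = D_8$. The only difference is that you explicitly supply the lower bound $|M^\star|\ge 16$ (listing the sixteen matrices and separating rotators from reflectors by determinant) to rule out a proper quotient of $D_8$, a step the paper leaves implicit after verifying the relations; this is a welcome tightening rather than a different argument.
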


The above result tells us that Picard and Q's moves generate the group $D_8$. This has important ramifications. As long as the two players are allowed to use only the aforementioned actions, no matter what specific game they play, the game will take place in the $D_8$ group. Every conceivable composition of moves by the players is just an element of $D_8$. Therefore, although the rules of the game can change dramatically, e.g., the players' turn, the number of rounds, etc., the available moves will always be elements of $D_8$.

Actually, it is a well-known fact that every element of the dihedral group $D_n$ can be represented by $2 \times 2$ matrices of the form shown in (\ref{eq:2D Rotator}) and (\ref{eq:2D Reflector}). One such representation is given below. In the literature it is usually referred to as the \emph{standard} representation of $D_n$. We remark that, in more technical terms, this is a faithful irreducible representation of dimension 2. To clear any potential misunderstanding, let us emphasize that in the standard representation $s$ corresponds to the reflection in the line passing through the vertices $1$ and $5$, i.e., the $x$-axis of Figure \ref{fig:Reflection Symmetries Regular Octagon}.

%

\begin{multicols}{2}
	\noindent
	\begin{align} \label{eq:Standard Representation r in D_n}
		r \mapsto R_{\frac{2 \pi}{n}} =
		\begin{bmatrix}
			\begin{array}{lr}
				\cos \frac{2 \pi}{n} & -\sin \frac{2 \pi}{n} \\
				\sin \frac{2 \pi}{n} & \cos \frac{2 \pi}{n}
			\end{array}
		\end{bmatrix} 
	\end{align}
	\begin{align} \label{eq:Standard Representation s in D_n}
		s \mapsto S_0 =
		\begin{bmatrix}
			\begin{array}{lr}
				1 & 0 \\
				0 & -1
			\end{array}
		\end{bmatrix}
	\end{align}
\end{multicols}

The above mapping of $r$ and $s$ uniquely determines the standard representation of the remaining reflections and rotations of $D_n$.

\begin{multicols}{2}
	\noindent
	\begin{align} \label{eq:Standard Representation r^k in D_n}
		r^{k} \mapsto R_{\frac{2 \pi k}{n}} =
		\begin{bmatrix}
			\begin{array}{lr}
				\cos \frac{2 \pi k}{n} & -\sin \frac{2 \pi k}{n} \\
				\sin \frac{2 \pi k}{n} & \cos \frac{2 \pi k}{n}
			\end{array}
		\end{bmatrix}
	\end{align}
	\begin{align} \label{eq:Standard Representation r^k s in D_n}
		r^{k} s \mapsto S_{\frac{\pi k}{n}} =
		\begin{bmatrix}
			\begin{array}{lr}
				\cos \frac{2 \pi k}{n} & \sin \frac{2 \pi k}{n} \\
				\sin \frac{2 \pi k}{n} & -\cos \frac{2 \pi k}{n}
			\end{array}
		\end{bmatrix} \ ,
	\end{align}
\end{multicols}

where $0 \leq k \leq n - 1$.

%

\subsection{Orbits and stabilizers}

\begin{definition}[Group action] \label{def:Group Action}
	Let $G$ be a group and let $X$ be a nonempty set. A \emph{group action} $\star$ of $G$ on $X$ is a function $\star : G \times X \rightarrow X$ that satisfies the following properties.
	\begin{enumerate}
		\setlength{\itemindent}{0.5 em}
		\item[\emph{$(A_1)$}] $\mathds{1} \star x = x$ for every $x \in X$.
		\item[\emph{$(A_2)$}] $g_1 \star \left( g_2 \star x \right) = \left( g_1 g_2 \right) \star x$, for all $g_1, g_2 \in G$ and all $x \in X$.
	\end{enumerate}
\end{definition}

Under the standard representation of $D_n$, its action on a state of the quantum coin is computed by simply multiplying every matrix corresponding to an element of $D_n$ with the ket describing the state of the coin. In what follows, in addition to speaking about an action, we shall occasionally say that $G$ \emph{acts} on $X$. Moreover, we shall just write $g x$ instead of $g \star x$, since the action we study in this paper is that of operators on kets, or, if you prefer, of matrix-vector multiplication.

\begin{definition}[Orbits and stabilizers] \label{def:Orbits & Stabilizers}
	Suppose that a group $G$ of linear operators, or their corresponding matrix representations, acts on a nonempty set of kets $X$. We make the next definitions, always taking into account that all kets of the form $e^{i \theta} \ket{ \psi }$, with $\theta \in \mathbb{R}$, represent ket $\ket{ \psi }$.
	\begin{enumerate}
		\item	Given $x \in X$, the $G$-\emph{orbit} of $x$, denoted by $G \star x$, is the set $\{ g \star x \in X : g \in G\}$.
		\item	Given $S \subset X$, the \emph{$G$-orbit} of $S$, denoted by $G \star S$, is the union of the orbits $G \star x$, for each $x \in S$
		\item	Given $x \in X$, the \emph{stabilizer} of $x$, denoted by $G ( x )$, is the set $\{ g \in G : g \star x = x \}$.
		\item	Given $g \in G$, the \emph{fixed} set of $g$, denoted by $Fix ( g )$, is the set $\{ x \in X : g \star x = x \}$.
		\item	Given $X \subset G$, the \emph{fixed} set of $X$, denoted by $Fix ( X )$, is the intersection of the fixed sets $Fix ( g )$, for each $g \in X$.
	\end{enumerate}
\end{definition}

In the next section we shall employ these tools in the analysis of Q's strategy to gain insight from a group theoretic perspective.

\section{Analyzing Q's strategy in terms of groups} \label{sec:Group Theoretic Analysis of PQG}

We proceed now to interpret the $PQG$ using the aforementioned groups concepts. It will be helpful to utilize the following abbreviations, which are very common in the literature.

\begin{align} \label{eq:State +}
	\ket{+} = \frac{1}{\sqrt{2}} \left( \ket{0} + \ket{1} \right)
\end{align}

\begin{align} \label{eq:State -}
	\ket{-} = \frac{1}{\sqrt{2}} \left( \ket{0} - \ket{1} \right)
\end{align}

Let us first see what is the effect of the action of $D_{8}$ on the computational basis $B$. One easy way to do this is geometrically by consulting Figures \ref{fig:Rotation Symmetries Regular Octagon} and \ref{fig:Reflection Symmetries Regular Octagon} to see where vertices $1$ and $3$ are sent when being acted upon by the elements of $D_{8}$. Alternatively, one can arrive at the same result algebraically simply by multiplying the matrix representation of every member of $D_{8}$ with $\ket{0}$ and $\ket{1}$. The representations of the elements of $D_8$ can be readily found by setting $n = 8$ in the more general formulas (\ref{eq:Standard Representation r^k in D_n}) and (\ref{eq:Standard Representation r^k s in D_n}). In any event, for future reference we summarize the action of $D_{8}$ on the computational basis $B$ in Proposition \ref{prp:The Action of D_8 on B} (recall that $e^{i \theta} \ket{ \psi }$, with $\theta \in \mathbb{R}$, and $\ket{ \psi }$ represent the same state).

\begin{proposition}[The action of $D_{8}$ on $B$] \label{prp:The Action of D_8 on B} \
	\begin{enumerate}
		\item	$\ket{0}$ and $\ket{1}$ have the same orbit:
		\begin{align} \label{eq:Orbits of Ket 0 and Ket 1 in D_8}
			D_{8} \star \ket{0} = D_{8} \star \ket{1} = \{ \ket{0}, \ket{+}, \ket{1}, \ket{-} \} \ .
		\end{align}
		\item	The orbit of $B$ is:
		\begin{align} \label{eq:Orbit of B in D_8}
			D_{8} \star B = \{ \ket{0}, \ket{+}, \ket{1}, \ket{-} \} \ .
		\end{align}
	\end{enumerate}
\end{proposition}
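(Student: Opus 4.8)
The plan is to read off the orbit of $\ket{0}$ by a single direct matrix computation using the standard representation of $D_{8}$, and then to obtain everything else from the elementary fact that the orbits of a group action partition the set on which it acts. So, first I would specialise formulas (\ref{eq:Standard Representation r^k in D_n}) and (\ref{eq:Standard Representation r^k s in D_n}) to $n = 8$. Acting on $\ket{0} = \begin{bmatrix} 1 \\ 0 \end{bmatrix}$, the rotation $r^{k}$ produces $R_{\frac{2\pi k}{8}}\ket{0} = \begin{bmatrix} \cos\frac{\pi k}{4} \\ \sin\frac{\pi k}{4} \end{bmatrix}$, while the reflection $r^{k}s$ produces $S_{\frac{\pi k}{8}}\ket{0} = \begin{bmatrix} \cos\frac{\pi k}{4} \\ \sin\frac{\pi k}{4} \end{bmatrix}$ — literally the same vector. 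Hence it suffices to let $k$ run through $0,1,\dots,7$ once.

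Evaluating the cosines and sines at these multiples of $\frac{\pi}{4}$ and comparing with (\ref{eq:State +}) and (\ref{eq:State -}), one finds the resulting vector is $\pm\ket{0}$ for $k = 0,4$, it is $\pm\ket{+}$ for $k = 1,5$, it is $\pm\ket{1}$ for $k = 2,6$, and it is $\pm\ket{-}$ for $k = 3,7$, where the sign is negative exactly when $k \in \{3,4,5,6\}$. Invoking the stated convention that $e^{i\theta}\ket{\psi}$ and $\ket{\psi}$ represent the same state, the overall sign is irrelevant, so $D_{8}\star\ket{0} = \{\ket{0},\ket{+},\ket{1},\ket{-}\}$, which is the first equality in (\ref{eq:Orbits of Ket 0 and Ket 1 in D_8}). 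For the second equality I would avoid redoing the computation: since $\ket{1} = r^{2}\star\ket{0}$ lies in $D_{8}\star\ket{0}$, and ``lying in the same orbit'' is an equivalence relation (immediate from $(A_1)$ and $(A_2)$ of Definition \ref{def:Group Action}), the orbits of $\ket{0}$ and $\ket{1}$ coincide. Part 2 of the statement is then immediate from the definition of the orbit of a set: $D_{8}\star B = D_{8}\star\ket{0} \cup D_{8}\star\ket{1} = \{\ket{0},\ket{+},\ket{1},\ket{-}\}$.

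The computation is routine, so the only point that really needs care — and the one ``obstacle'' worth flagging — is the bookkeeping of global phases: several of the sixteen products land on $-\ket{0}$, $-\ket{+}$, $-\ket{1}$, or $-\ket{-}$ rather than on the phase-free representatives, and one must consistently apply the convention $e^{i\theta}\ket{\psi}\sim\ket{\psi}$ from Definition \ref{def:Orbits & Stabilizers}; without it the orbit would spuriously appear to contain eight vectors instead of four. A purely geometric alternative — tracking where vertices $1$ and $3$ of the octagon in Figures \ref{fig:Rotation Symmetries Regular Octagon} and \ref{fig:Reflection Symmetries Regular Octagon} are sent by the elements of $D_{8}$ — leads to the same conclusion and automatically sidesteps the phase issue, at the cost of identifying antipodal vertices as the same state.
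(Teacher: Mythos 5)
Your proof is correct and follows essentially the same route as the paper's: a direct computation of the sixteen matrix--vector products in the standard representation, followed by identification of global phases to collapse the eight resulting vectors to the four states $\ket{0}, \ket{+}, \ket{1}, \ket{-}$. The only (harmless) difference is that the paper recomputes $D_{8}\star\ket{1}$ ``in a symmetrical fashion,'' whereas you deduce it from $\ket{1}=r^{2}\star\ket{0}$ and the fact that orbits partition the state set --- precisely the argument the paper itself isolates as Lemma \ref{lem:Criterion for Orbit Coincidence} for the general $D_{n}$ case, so your shortcut is fully consistent with the paper's toolbox.
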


Q's first move aims to drive the coin into the state

\begin{align} \label{eq:Q's First Move}
	H \ket{0} = \ket{+} \ .
\end{align}

Definition \ref{def:Orbits & Stabilizers} is helpful in understanding the advantage of Q's move in terms of group notions. In particular, there are certain elements of $D_8$ whose action on $\ket{+}$ has no effect whatsoever and which constitute the stabilizer of $\ket{+}$. These can be easily found either geometrically or algebraically, and are listed in Proposition \ref{prp:The Stabilizers of 0, +, 1, - in D_8}.

\begin{proposition}[The stabilizers of $\ket{0}, \ket{+}, \ket{1}$ and $\ket{-}$ in $D_8$] \label{prp:The Stabilizers of 0, +, 1, - in D_8} \
	\begin{itemize}
		\item	The stabilizers of $\ket{0}$ and $\ket{1}$ in $D_8$ are
		\begin{align} \label{eq:The Stabilizers of Ket 0 & Ket 1 in D_8}
			D_{8} ( \ket{0} ) = \{ I, R_{\pi}, S_{0}, S_{\frac{4 \pi}{8}} \}
			\quad \text{and} \quad
			D_{8} ( \ket{1} ) = \{ I, R_{\pi}, S_{0}, S_{\frac{4 \pi}{8}} \} \ .
		\end{align}
		\item	The stabilizers of $\ket{+}$ and $\ket{-}$ are
		\begin{align} \label{eq:The Stabilizers of + & - in D_8}
			D_{8} ( \ket{+} ) = \{ I, R_{\pi}, F, S_{\frac{6 \pi}{8}} \}
			\quad \text{and} \quad
			D_{8} ( \ket{-} ) = \{ I, R_{\pi}, F, S_{\frac{6 \pi}{8}} \} \ .
		\end{align}
	\end{itemize}
\end{proposition}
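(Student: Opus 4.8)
The plan is to compute each stabilizer directly from the standard matrix representation of $D_8$, exploiting the explicit formulas already derived in the excerpt. Recall that under the standard representation the rotations are $R_{\frac{2 \pi k}{8}}$ and the reflections are $S_{\frac{\pi k}{8}}$ for $0 \leq k \leq 7$, so there are only $16$ matrices to test. Since a ket $\ket{\psi}$ and $e^{i\theta}\ket{\psi}$ denote the same state, for a $2 \times 2$ \emph{real orthogonal} matrix $g$ acting on a real unit vector we have $g \star \ket{\psi} = \ket{\psi}$ precisely when $g\ket{\psi} = \pm \ket{\psi}$ (the only real scalars of modulus $1$). This observation is the workhorse: it means an element stabilizes $\ket{\psi}$ iff $\ket{\psi}$ is an eigenvector of $g$ with eigenvalue $\pm 1$.

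First I would handle $\ket{0} = \begin{bmatrix} 1 \\ 0 \end{bmatrix}$. A rotation $R_\varphi$ fixes (up to sign) the vector $\begin{bmatrix} 1 \\ 0 \end{bmatrix}$ only when $\sin\varphi = 0$, i.e. $\varphi \in \{0, \pi\}$, giving $I$ and $R_\pi$; note $R_\pi \ket{0} = -\ket{0}$, which represents the same state. A reflector $S_\varphi$ sends $\begin{bmatrix} 1 \\ 0 \end{bmatrix}$ to $\begin{bmatrix} \cos 2\varphi \\ \sin 2\varphi \end{bmatrix}$, which is $\pm\ket{0}$ exactly when $\sin 2\varphi = 0$, i.e. $2\varphi \in \{0, \pi\}$; among the available axes $\varphi = \frac{\pi k}{8}$ this gives $\varphi = 0$ (the matrix $S_0$) and $\varphi = \frac{\pi}{2} = \frac{4\pi}{8}$ (the matrix $S_{\frac{4\pi}{8}}$). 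Hence $D_8(\ket{0}) = \{I, R_\pi, S_0, S_{\frac{4\pi}{8}}\}$. The computation for $\ket{1} = \begin{bmatrix} 0 \\ 1 \end{bmatrix}$ is identical: the same rotations work, and $S_\varphi \begin{bmatrix} 0 \\ 1 \end{bmatrix} = \begin{bmatrix} \sin 2\varphi \\ -\cos 2\varphi \end{bmatrix} = \pm\ket{1}$ again forces $\sin 2\varphi = 0$. This yields the first displayed equation.

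Next I would treat $\ket{+} = \frac{1}{\sqrt 2}\begin{bmatrix} 1 \\ 1 \end{bmatrix}$ and $\ket{-} = \frac{1}{\sqrt 2}\begin{bmatrix} 1 \\ -1 \end{bmatrix}$ the same way. For rotations, $R_\varphi$ fixes any nonzero real vector up to sign only for $\varphi \in \{0,\pi\}$, contributing $I$ and $R_\pi$ in both cases. For reflectors, $S_\varphi$ fixes the line at angle $\varphi$ and negates the perpendicular line; thus $S_\varphi$ stabilizes $\ket{+}$ (whose direction is at angle $\frac{\pi}{4}$) iff $\varphi \equiv \frac{\pi}{4} \pmod{\frac{\pi}{2}}$, i.e. $\varphi \in \{\frac{\pi}{4}, \frac{3\pi}{4}\} = \{\frac{2\pi}{8}, \frac{6\pi}{8}\}$, which are exactly the matrices $F = S_{\frac{2\pi}{8}}$ (from~(\ref{eq:The F Reflector})) and $S_{\frac{6\pi}{8}}$. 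The same two reflectors stabilize $\ket{-}$, since its direction at angle $-\frac{\pi}{4}$ differs from $\frac{\pi}{4}$ by $\frac{\pi}{2}$, so it lies on the fixed line of $S_{\frac{6\pi}{8}}$ and on the negated line of $F$ (negation being harmless up to phase). This gives $D_8(\ket{+}) = D_8(\ket{-}) = \{I, R_\pi, F, S_{\frac{6\pi}{8}}\}$, completing the proof.

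I do not anticipate a genuine obstacle here; the only point requiring a little care is the consistent use of the phase-equivalence $e^{i\theta}\ket{\psi} \equiv \ket{\psi}$ (concretely, that $R_\pi = -I$ acts trivially on every state and that the $-1$ eigenvalue of a reflector on the perpendicular direction still counts as stabilizing). One should also double-check that no element outside the stabilizer sneaks in — but since the four listed elements are already shown to stabilize, and each stabilizer must be a subgroup, and $\{I, R_\pi, S_0, S_{\frac{4\pi}{8}}\}$ is already a subgroup of order $4$ with quotient of order $4$ in $D_8$, the orbit-stabilizer theorem together with $|D_8 \star \ket{0}| = 4$ from Proposition~\ref{prp:The Action of D_8 on B} confirms these are the \emph{entire} stabilizers. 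The analogous count closes the $\ket{+}, \ket{-}$ cases.
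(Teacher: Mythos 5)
Your proof is correct and follows essentially the same route as the paper, which simply checks all $16$ matrices of the standard representation against each state and keeps those that reproduce the state up to a sign. Your refinements — reducing the check to the trigonometric conditions $\sin\varphi=0$, $\sin 2\varphi=0$, $\varphi\equiv\tfrac{\pi}{4}\pmod{\tfrac{\pi}{2}}$, and confirming completeness via orbit–stabilizer against Proposition~\ref{prp:The Action of D_8 on B} — are sound but do not change the underlying argument.
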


In a complementary manner, we may surmise that Picard's set of moves fixes specific states in $\mathcal{H}_2$, as demonstrated in Proposition \ref{prp:The Fixed Set of M_P in D_8}.

\begin{proposition}[The fixed set of $\{ I, F \}$ in $D_8$] \label{prp:The Fixed Set of M_P in D_8} \
	\begin{enumerate}
	\item	The fixed set of $F$ in $D_8$ is the set
			\begin{align} \label{eq:The Fixed Set of F in D_8}
			Fix ( F ) = \{ \ket{+}, \ket{-} \} \ .
			\end{align}
	\item	The fixed set of $M_P = \{ I, F \}$ in $D_8$ is the set
			\begin{align} \label{eq:The Fixed Set of M_P in D_8}
				Fix ( \{ I, F \} ) = \{ \ket{+}, \ket{-} \} \ .
			\end{align}
	\end{enumerate}
\end{proposition}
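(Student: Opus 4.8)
The plan is to reduce both statements to a short finite verification. By Proposition \ref{prp:The Action of D_8 on B} the natural set of kets on which $D_8$ acts here is the four-element orbit $X = D_8 \star B = \{\ket{0}, \ket{+}, \ket{1}, \ket{-}\}$, and both $I = \mathds{1}$ and $F$ lie in $D_8$ by the discussion preceding Theorem \ref{thr:PQG Ambient Group}. So for part (1) it suffices to evaluate $F\ket{x}$ for each of the four $\ket{x} \in X$, and for part (2) to combine part (1) with the definition of the fixed set of a subset.

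For part (1), I would compute directly using $F = \begin{bmatrix} 0 & 1 \\ 1 & 0 \end{bmatrix}$, which interchanges the two coordinates of a column vector. This yields $F\ket{0} = \ket{1}$, $F\ket{1} = \ket{0}$, $F\ket{+} = \ket{+}$ since $\ket{+} = \frac{1}{\sqrt{2}}(\ket{0} + \ket{1})$ is symmetric in its coordinates, and $F\ket{-} = \frac{1}{\sqrt{2}}(\ket{1} - \ket{0}) = -\ket{-}$. The one point that uses the convention of Definition \ref{def:Orbits & Stabilizers} is this last line: $-\ket{-} = e^{i\pi}\ket{-}$ represents the same state as $\ket{-}$, so $\ket{-} \in Fix(F)$. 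Since $\ket{0}$ and $\ket{1}$ are genuinely interchanged and are not scalar multiples of one another, neither belongs to $Fix(F)$, giving $Fix(F) = \{\ket{+}, \ket{-}\}$. The same conclusion is visible geometrically from $F = S_{\frac{2\pi}{8}}$, see (\ref{eq:The F Reflector}): $F$ is the reflection about the line at angle $\frac{\pi}{4}$ to the positive $x$-axis, so it fixes every vector along that line (in particular $\ket{+}$) and negates every vector perpendicular to it (in particular $\ket{-}$), while moving the remaining unit vectors; among the members of $X$ only $\ket{+}$ and $\ket{-}$ survive up to phase.

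For part (2), I would invoke the last clause of Definition \ref{def:Orbits & Stabilizers}, $Fix(\{I, F\}) = Fix(I) \cap Fix(F)$. Because $I = \mathds{1}$ fixes every ket, $Fix(I) = X$, hence the intersection collapses to $Fix(F)$, which by part (1) equals $\{\ket{+}, \ket{-}\}$.

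I do not anticipate any real obstacle here: the argument is a direct check on four kets together with one set intersection. The only thing to treat with care — and the feature that makes the statement more than a triviality — is the global-phase identification, which is precisely why $\ket{-}$, and not merely $\ket{+}$, is fixed by the reflection $F$. This is the group-theoretic fact underlying the intuition that Picard, whose moves $\{I, F\}$ act only within this fixed set, cannot dislodge the coin from $\{\ket{+}, \ket{-}\}$ once Q has steered it there.
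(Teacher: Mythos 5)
Your proof is correct and follows essentially the same route as the paper's: restrict attention to the orbit $D_{8} \star B = \{ \ket{0}, \ket{+}, \ket{1}, \ket{-} \}$, check the action of $F$ on each of the four kets, and observe that $Fix ( I )$ is everything so the intersection collapses to $Fix ( F )$. If anything you are slightly more careful than the paper, which writes $F \ket{-} = \ket{-}$ without comment, whereas you make explicit that $F \ket{-} = - \ket{-}$ and that the global-phase identification of Definition \ref{def:Orbits & Stabilizers} is what places $\ket{-}$ in the fixed set.
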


Proposition \ref{prp:The Stabilizers of 0, +, 1, - in D_8} tells us that Picard's set of moves is a subset of $D_{8} ( \ket{+} )$ and Proposition \ref{prp:The Fixed Set of M_P in D_8} completes the picture by revealing that ket $\ket{+}$ is among those that are fixed by Picard's moves. Thus, he is completely powerless to change the state $\ket{+}$ of the coin. Under this perspective the progression of the $PQG$ can be abstractly described as by the following ``algorithm.''

\begin{figure}[H]
	\centering
	\begin{algorithm}[H] 
		\caption{Q's Winning strategy in the original $PQG$}
		\label{alg:Q's PQG Winning Strategy}
		Q's first move sends the coin to an intermediate target state (in the actual game it happens to be $\ket{+}$) that satisfies the following property: \emph{this state is fixed by Picard's moves or, equivalently, all of Picard's moves belong to the stabilizer of this state} (in the actual game $I, F \in D_{8} ( \ket{+} )$). \\
		Picard acts on the coin, but no matter which move he makes, the quantum coin remains in the \emph{same} state. \\
		Q's final move sends the coin to the desired state. \\
	\end{algorithm}
	\caption{This simple algorithm captures the essence of Q's strategy in the $PQG$.} 
\end{figure}

Picard symbolizes the classical player and as such it is quite appropriate to assume that his repertoire is the set $M_P = \{ I, F \}$. This set is also a group, in particular the $\mathbb{Z}_2$ group of two elements\footnote{In the literature $\mathbb{Z}_2$ is more often denoted as $\{0, 1\}$ under addition modulo $2$.}. In the rest of this paper we shall always assume that the classical player can only make use of these two actions. In the coming sections we shall employ Algorithm \ref{alg:Q's PQG Winning Strategy} to discover winning strategies for Q in more general situations.

\section{Enlarging the operational space of the game} \label{sec:Enlarging the Operational Space of the PQG}

As we begin this section let us recall that the operational space of the original $PQG$ is indeed a group, and, in particular, the dihedral group $D_8$, as established by Theorem \ref{thr:PQG Ambient Group}. In this section we shall progressively enlarge the ambient group of the $PQG$ and analyze Q's winning strategies. Our analysis is guided by the belief that the essence of the original $PQG$ is the sharp distinction between the classical and the quantum player. From this perspective, our subsequent investigation relies on the following two assumptions.

\begin{enumerate}
	\item	Picard, who embodies the classical player, can flip the coin. If he is deprived of this ability, then the resulting game becomes trivial and meaningless. He should not be able to do more than that, as this would endow him with quantum capabilities. Formally, we express this by specifying:
			\begin{align} \label{eq:The Moves of Picard}
				M_P = \{ I, F \} \ . \tag{$A_{1}$}
			\end{align}
	\item	Q, who stands for the quantum player, must exhibit quantumness. Thus, at least one of his actions must lie outside the classical realm. In more technical terms, his repertoire $M_Q$ must contain at least one operator from $U(2)$ other than $I$ and $F$.
\end{enumerate}

Under the above assumptions, we may state the following properties that are quite general, as they are satisfied by every winning strategy of Q, no matter what the ambient group is. Therefore, we shall invoke these properties when we are examining much larger dihedral groups and the unitary group $U(2)$.


\begin{theorem}[Characteristic properties of winning strategies] \label{thr:Characteristic Properties of Winning Strategies}
	If $(A_1, A_{2})$ is a winning strategy for Q, then:
	\begin{align}
		A_2 I A_1 \ket{0} &= A_2 F A_1 \ket{0} = \ket{0} \ , \label{eq:Characteristic Property I of Winning Strategies}
		\qquad \text{and}
		\\
		A_1 \ket{0} &\in Fix ( \{ F \} ) \ . \label{eq:Characteristic Property II of Winning Strategies}
	\end{align}
\end{theorem}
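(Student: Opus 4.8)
The plan is to unpack the definition of a winning strategy for Q and translate it into the two stated conditions. Recall that the $PQG$ begins with the coin in state $\ket{0}$, then Q applies $A_1$, then Picard applies one of his moves from $M_P = \{I, F\}$, then Q applies $A_2$; Q wins exactly when the resulting state is $\ket{0}$ (the heads-up state). So by definition, $(A_1, A_2)$ being a winning strategy means that for \emph{every} choice of Picard's move $P \in \{I, F\}$, the final state $A_2 P A_1 \ket{0}$ equals $\ket{0}$ (up to an irrelevant global phase, which is harmless since $e^{i\theta}\ket{\psi}$ represents $\ket{\psi}$). Taking $P = I$ and $P = F$ separately yields $A_2 I A_1 \ket{0} = \ket{0}$ and $A_2 F A_1 \ket{0} = \ket{0}$, which is precisely equation~(\ref{eq:Characteristic Property I of Winning Strategies}).

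For equation~(\ref{eq:Characteristic Property II of Winning Strategies}), I would argue as follows. From~(\ref{eq:Characteristic Property I of Winning Strategies}) we have $A_2 I A_1 \ket{0} = A_2 F A_1 \ket{0}$, i.e. $A_2 A_1 \ket{0} = A_2 F A_1 \ket{0}$. Since $A_2$ is unitary (Q's moves come from $U(2)$ by assumption $(A_2)$), it is invertible; left-multiplying by $A_2^{-1}$ gives $A_1 \ket{0} = F A_1 \ket{0}$. By Definition~\ref{def:Orbits & Stabilizers}, a ket $x$ lies in $Fix(\{F\})$ exactly when $F x = x$, so this says $A_1 \ket{0} \in Fix(\{F\})$, which is~(\ref{eq:Characteristic Property II of Winning Strategies}). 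One small point to address carefully: the equalities above are really equalities of represented states, so $A_1\ket{0} = F A_1 \ket{0}$ should be read modulo global phase; but $F$ has real entries and $F A_1 \ket{0}$ differs from $A_1\ket{0}$ by at most a phase, and since $Fix$ is defined on represented states this causes no difficulty — indeed this is exactly the convention flagged in Definition~\ref{def:Orbits & Stabilizers}.

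The argument is essentially immediate once the definitions are laid out, so there is no real obstacle; the only thing requiring a little care is the bookkeeping of global phases, namely making sure that ``Q wins with probability $1.0$'' is correctly identified with ``the final state equals $\ket{0}$ as a represented state'' and that the cancellation of $A_2$ and the reading of the stabilizer condition both respect that convention. I would state this phase remark once at the outset and then proceed with the two short deductions above.
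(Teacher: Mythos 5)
Your proposal is correct and follows essentially the same route as the paper: part one is obtained by instantiating the definition of a winning strategy at Picard's two moves $I$ and $F$, and part two by cancelling the invertible $A_2$ from $A_2 A_1\ket{0} = A_2 F A_1\ket{0}$ (the paper phrases this cancellation as a proof by contradiction, but the underlying step is identical). Your explicit remark about equalities holding modulo global phase is a sensible addition that the paper handles only implicitly via its stated convention.
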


We introduce the notion of \emph{equivalent strategies} in order to simplify the classification of winning strategies. We consider two strategies to be equivalent if, when acting on the same initial state of the coin, they produce the same sequence of states. In view of the extension of the original game that will be undertaken in Section, the next definition is general enough to deal with strategies for games with more than three number of rounds.

\begin{definition}[Equivalent strategies] \label{def:Equivalent Strategies}
	Let $\sigma = (A_1, \dots, A_{r})$ and $\sigma' = (A'_1, \dots, A'_{r})$ be two strategies of the same player, and let $\ket{q_0}$ be the initial state of the coin. We say that $\sigma$ and $\sigma'$ are \emph{equivalent} with respect to $\ket{q_0}$, denoted by $\sigma \sim \sigma'$, if
		\begin{align} \label{eq:Equivalent Strategies}
			A_{j} \dots A_{1} \ket{q_0} = A'_{j} \dots A'_{1} \ket{q_0}, \text{ for every } j, \ 1 \leq j \leq r \ .
		\end{align}
\end{definition}

For example, Q's strategies $(H, H)$ and $(R_{\frac{2 \pi}{8}}, R_{\frac{14 \pi}{8}})$ are equivalent because they send the coin from state $\ket{0}$ first to $\ket{+}$ and then back to $\ket{0}$. It is obvious that $\sim$ is an equivalence relation that partitions the set of strategies into equivalence classes of strategies.

\begin{definition}[Strategy classes] \label{def:Strategy Classes} \
	\begin{enumerate}
		\item	Given a strategy $\sigma$, we designate by $[ \sigma ]$ the equivalence class that contains $\sigma$. Any member of $[ \sigma ]$ is a \emph{representative} of $[ \sigma ]$.
		\item	To every class $[ \sigma ]$ we associate the \emph{state path} $\tau_{ [ \sigma ] }$ as follows: if $(A_1, \dots, A_{r})$ is any representative of $[ \sigma ]$, we define 
		$\tau_{ [ \sigma ] }$ to be $( \ket{q_{0}}, \ket{q_{1}}, \dots, \ket{q_{r}})$, where
				\begin{align} \label{eq:State Paths}
					\ket{q_{j}} = A_{j} \dots A_{1} \ket{q_0}, \text{ for every } j, 1 \leq j \leq r \ .
				\end{align}
	\end{enumerate}
\end{definition}

Clearly, the state path $\tau_{ [ \sigma ] }$ is well-defined and unique for each class $[ \sigma ]$. The equivalence class $[ (H, H) ]$ contains $16$ strategies, as will be explained in Example \ref{xmp:Q's PQG Winning Strategies in D_8}, and the corresponding state path is $(\ket{0}, \ket{+}, \ket{0})$.

\subsection{Inside $D_8$}

Before delving into other groups, we examine the case where Q can chose his moves from the entire $D_8$ group, i.e.,

\begin{align} \label{eq:Q's Repertoire is D_8}
	M_Q = D_8 \ . \tag{$A_{2}$}
\end{align}

The following Example \ref{xmp:Q's PQG Winning Strategies in D_8} will be instructive.

\begin{example} \label{xmp:Q's PQG Winning Strategies in D_8}
	In this example, we shall apply Algorithm \ref{alg:Q's PQG Winning Strategy} to study \emph{all} winning strategies of Q in the original $PQG$. Let $(A_{1}, A_{2})$ be Q's first and second move in a winning strategy. After Q's first move the quantum coin will in one of the states in the orbit $D_{8} \star B$, where $B$ is the computational basis. From (\ref{eq:Orbit of B in D_8}) we know that $D_{8} \star B = \{ \ket{0}, \ket{+}, \ket{1}, \ket{-} \}$.
	\begin{itemize}
		\item	Let us first establish that if Q leaves the coin at state $\ket{0}$, or sends it to state $\ket{1}$, then he will not be able to win with probability $1.0$. To see this more clearly, let us recall that, by Theorem \ref{thr:Characteristic Properties of Winning Strategies}, $A_2 I A_1 \ket{0} = A_2 F A_1 \ket{0} = \ket{0}$. If $(A_{1}, A_{2})$ leaves the coin at state $\ket{0}$, i.e., $A_1 \ket{0} = \ket{0}$, then $A_2 I \ket{0} = A_2 F \ket{0} = \ket{0} \Rightarrow A_2 \ket{0} = A_2 \ket{1} = \ket{0}$, which is impossible because $A_2$ represents an element of $D_8$. The same reasoning shows that if Q's first move sends the coin to state $\ket{1}$, then he will not be able to win with probability $1.0$.
		\item	In the original $PQG$, Q won by sending the coin to state $\ket{+}$. In $D_{8}$, this can be achieved with $4$ different ways: $H, R_{\frac{2 \pi}{8}}, S_{\frac{5 \pi}{8}}$ and $R_{\frac{10 \pi}{8}}$. State $\ket{+}$ is fixed by $I$ and $F$ according to (\ref{eq:The Fixed Set of M_P in D_8}), which means that no matter what Picard plays, the coin will remain in this state. Finally, Q can send the coin back to the $\ket{0}$ state with $4$ different ways: $H, R_{\frac{14 \pi}{8}}, S_{\frac{5 \pi}{8}}$ and $R_{\frac{6 \pi}{8}}$. This means that Q has $16$ different winning strategies, which, in view of Definition \ref{def:Equivalent Strategies}, are equivalent. Thus, they constitute one equivalence class of winning strategies. Strategy $(H, H)$ is a representative of this class, but any other strategy would also do. For this class the corresponding state path is $(\ket{0}, \ket{+}, \ket{0})$.
		\item	Algorithm \ref{alg:Q's PQG Winning Strategy} enables us to discover one more winning strategy for Q. Q has another option, which is to drive the coin to state $\ket{-}$. This can also be achieved with $4$ different ways: $S_{\frac{7 \pi}{8}}, R_{\frac{14 \pi}{8}}, S_{\frac{3 \pi}{8}}$ or $R_{\frac{6 \pi}{8}}$. Picard cannot change this state either  because $\ket{-}$ is fixed by $I$ and $F$, according to (\ref{eq:The Fixed Set of M_P in D_8}). During the final round Q has the opportunity to send the coin back to the $\ket{0}$ state with $4$ different ways: $S_{\frac{7 \pi}{8}}, R_{\frac{2 \pi}{8}}, S_{\frac{3 \pi}{8}}$ or $R_{\frac{10 \pi}{8}}$. Hence, Q has $16$ more winning strategies, which are equivalent. They make the second equivalence class of winning strategies and any one of them, e.g., $(S_{\frac{7 \pi}{8}}, S_{\frac{7 \pi}{8}})$ can be its representative. For this class, the corresponding state path is $(\ket{0}, \ket{-}, \ket{0})$.
		\item	Picard, unfortunately for him, has no winning strategy.
	\end{itemize}
	According to Definition \ref{def:Winning and Dominant Strategies}, for Q a winning strategy is also a dominant strategy. Hence, Q has precisely two classes of winning and dominant strategies, each containing $16$ individual strategies. These two classes correspond to exactly the $2$ path states $(\ket{0}, \ket{+}, \ket{0})$ and $(\ket{0}, \ket{-}, \ket{0})$.
	\hfill $\triangleleft$
\end{example}

Table \ref{tbl:Q's Equivalence Classes of Winning Strtegies for the PQG in D_8} and Figure \ref{fig:Winning Strategies in the original $PQG$} summarize these results.

\begin{table}[H]
	\centering
	\caption{The two classes of winning and dominant strategies for Q in the original $PQG$.}
	\label{tbl:Q's Equivalence Classes of Winning Strtegies for the PQG in D_8}
	\renewcommand{\arraystretch}{2.0}
	\begin{tabular}{ l !{\vrule width 1.25 pt} c | c | c | c }
		\Xhline{4\arrayrulewidth}
		\multicolumn{5}{c}{The evolution of the $PQG$}
		\\
		\Xhline{\arrayrulewidth}
		&
		Initial state
		&
		Round $1$
		&
		Round $2$
		&
		Round $3$
		\\
		\Xhline{3\arrayrulewidth}
		$(H, H), (R_{\frac{2 \pi}{8}}, R_{\frac{14 \pi}{8}}), (S_{\frac{5 \pi}{8}}, S_{\frac{5 \pi}{8}}), \dots$
		&
		$\ket{0}$
		&
		\cellcolor{GreenLighter2!20} $\ket{+}$
		&
		\cellcolor{GreenLighter2!20} $\ket{+}$
		&
		$\ket{0}$
		\\
		\Xhline{\arrayrulewidth}
		$(S_{\frac{7 \pi}{8}}, S_{\frac{7 \pi}{8}}), (R_{\frac{14 \pi}{8}}, R_{\frac{2 \pi}{8}}), (S_{\frac{3 \pi}{8}}, S_{\frac{3 \pi}{8}}), \dots$
		&
		$\ket{0}$
		&
		\cellcolor{RedPurple!15} $\ket{-}$
		&
		\cellcolor{RedPurple!15} $\ket{-}$
		&
		$\ket{0}$
		\\
		\Xhline{4\arrayrulewidth}
	\end{tabular}
	\renewcommand{\arraystretch}{1.0}
\end{table}

\vspace{0.5 cm}

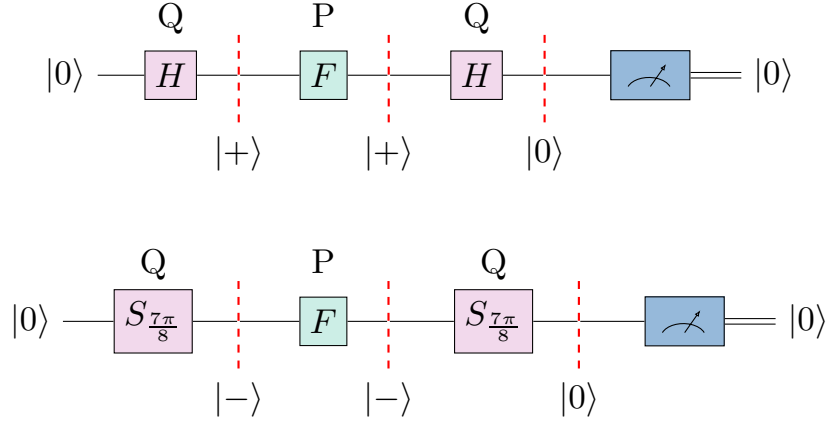
\begin{figure}[H]
	\centering
	\begin{tikzpicture} [scale = 1.3]
		\begin{yquant}[operators/every barrier/.append style={red, thick, shorten <= -3mm, shorten >= -3mm}]
			qubit {$\ket{0}$} GAME;
			hspace {0.3 cm} GAME;
			[ name = Q1 ]
			[ fill = RedPurple!15 ]
			box {$H$} GAME;
			[ name = S1 ]
			barrier GAME;
			hspace {0.2 cm} GAME;
			[ name = P1 ]
			[ fill = GreenLighter2!20 ]
			box {$F$} GAME;
			[ name = S2 ]
			barrier GAME;
			hspace {0.2 cm} GAME;
			[ name = Q2 ]
			[ fill = RedPurple!15 ]
			box {$H$} GAME;
			[ name = S3 ]
			barrier GAME;
			hspace {0.2 cm} GAME;
			[ fill = WordBlueDark!50 ]
			measure GAME;
			hspace {0.3 cm} GAME;
			output {$\ket{0}$} GAME;
			\node [ above = 0.3 cm ] at (Q1) {Q};
			\node [ below = 0.5 cm ] at (S1) {$\ket{+}$};
			\node [ above = 0.37 cm ] at (P1) {P};
			\node [ below = 0.5 cm ] at (S2) {$\ket{+}$};
			\node [ above = 0.3 cm ] at (Q2) {Q};
			\node [ below = 0.5 cm ] at (S3) {$\ket{0}$};
		\end{yquant}
	\end{tikzpicture}
	\\
	\vspace{0.7 cm}
	\begin{tikzpicture} [scale = 1.3]
		\begin{yquant}[operators/every barrier/.append style={red, thick, shorten <= -3mm, shorten >= -3mm}]
			qubit {$\ket{0}$} GAME;
			hspace {0.3 cm} GAME;
			[ name = Q1 ]
			[ fill = RedPurple!15 ]
			box {$S_{\frac{7 \pi}{8}}$} GAME;
			[ name = S1 ]
			barrier GAME;
			hspace {0.2 cm} GAME;
			[ name = P1 ]
			[ fill = GreenLighter2!20 ]
			box {$F$} GAME;
			[ name = S2 ]
			barrier GAME;
			hspace {0.2 cm} GAME;
			[ name = Q2 ]
			[ fill = RedPurple!15 ]
			box {$S_{\frac{7 \pi}{8}}$} GAME;
			[ name = S3 ]
			barrier GAME;
			hspace {0.2 cm} GAME;
			[ fill = WordBlueDark!50 ]
			measure GAME;
			hspace {0.3 cm} GAME;
			output {$\ket{0}$} GAME;
			\node [ above = 0.3 cm ] at (Q1) {Q};
			\node [ below = 0.5 cm ] at (S1) {$\ket{-}$};
			\node [ above = 0.37 cm ] at (P1) {P};
			\node [ below = 0.5 cm ] at (S2) {$\ket{-}$};
			\node [ above = 0.3 cm ] at (Q2) {Q};
			\node [ below = 0.5 cm ] at (S3) {$\ket{0}$};
		\end{yquant}
	\end{tikzpicture}

	\caption{This figure depicts two different winning strategies for Q that represent the two winning strategy classes, as well as the corresponding path states.}
	\label{fig:Winning Strategies in the original $PQG$}
\end{figure}

\begin{theorem}[The ambient group of the $PQG$ is $D_8$] \label{thr:Q's PQG Winning Strategies in D_8}
	If we assume that $M_P = \{ I, F \}$ and $M_Q = D_8$, i.e., the ambient group of the $PQG$ is $D_8$, then the following hold.
	\begin{enumerate}
		\item	Q has exactly two classes of winning and dominant strategies, each containing $16$ equivalent strategies:
				\begin{align} \label{eq:D_8 Winning Strategy Classes}
					\mathcal{C}_{+} = [ ( A_{1}, A_{2} ) ] \quad \text{and} \quad \mathcal{C}_{-} = [ ( B_{1}, B_{2} ) ] \ ,
				\end{align}
				where
				\begin{itemize}
					\item	$A_{1}$ is one of $H, R_{\frac{2 \pi}{8}}, S_{\frac{5 \pi}{8}}$ or $R_{\frac{10 \pi}{8}}$,
					\item	$A_{2}$ is one of $H, R_{\frac{14 \pi}{8}}, S_{\frac{5 \pi}{8}}$ or $R_{\frac{6 \pi}{8}}$,
					\item	$B_{1}$ is one of $S_{\frac{7 \pi}{8}}, R_{\frac{14 \pi}{8}}, S_{\frac{3 \pi}{8}}$ or $R_{\frac{6 \pi}{8}}$, and
					\item	$B_{2}$ is one of $S_{\frac{7 \pi}{8}}, R_{\frac{2 \pi}{8}}, S_{\frac{3 \pi}{8}}$ or $R_{\frac{10 \pi}{8}}$.
				\end{itemize}
		\item	The winning state paths corresponding to $\mathcal{C}_{+}$ and $\mathcal{C}_{-}$ are
				\begin{align} \label{eq:D_8 Winning State Paths}
					\tau_{\mathcal{C}_{+}} = (\ket{0}, \ket{+}, \ket{0}) \quad \text{and} \quad \tau_{\mathcal{C}_{-}} = (\ket{0}, \ket{-}, \ket{0}) \ .
				\end{align}
		\item	Picard has no winning strategy.
	\end{enumerate}
\end{theorem}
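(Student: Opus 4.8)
The plan is to read off everything from the characteristic properties of winning strategies together with the orbit and stabilizer data already computed for $D_8$. Since our hypotheses $M_P = \{I, F\}$ and $M_Q = D_8$ meet the conditions under which Theorem~\ref{thr:Characteristic Properties of Winning Strategies} holds, every winning strategy $(A_1, A_2)$ for Q satisfies $A_1 \ket{0} \in Fix(\{F\})$ and $A_2 I A_1 \ket{0} = A_2 F A_1 \ket{0} = \ket{0}$. By Proposition~\ref{prp:The Fixed Set of M_P in D_8}, $Fix(F) = \{\ket{+}, \ket{-}\}$, so Q's opening move must drive the coin to $\ket{+}$ or to $\ket{-}$; in particular it can neither leave it at $\ket{0}$ nor move it to $\ket{1}$, since neither of those states is fixed by $F$ (they lie outside $Fix(F)$, even though they do belong to the orbit $D_8 \star B$ of Proposition~\ref{prp:The Action of D_8 on B}). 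This splits the analysis into two symmetric cases.

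\emph{Case $A_1 \ket{0} = \ket{+}$.} Because $\ket{+}$ is fixed by both $I$ and $F$ (Proposition~\ref{prp:The Fixed Set of M_P in D_8}), Picard's reply is impotent, so the characteristic condition collapses to the single equation $A_2 \ket{+} = \ket{0}$; conversely, any pair $(A_1, A_2) \in D_8 \times D_8$ with $A_1 \ket{0} = \ket{+}$ and $A_2 \ket{+} = \ket{0}$ sends the coin through $\ket{0} \mapsto \ket{+} \mapsto \ket{+} \mapsto \ket{0}$ whatever Picard does, hence is a genuine winning strategy. To count such pairs I would invoke the orbit--stabilizer principle: the set $\{g \in D_8 : g \ket{0} = \ket{+}\}$ is a left coset of the stabilizer $D_8(\ket{0})$, so it has exactly $|D_8(\ket{0})| = 4$ elements by Proposition~\ref{prp:The Stabilizers of 0, +, 1, - in D_8}, and likewise $\{g \in D_8 : g \ket{+} = \ket{0}\}$ is a coset of $D_8(\ket{+})$, again of size $4$. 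Multiplying out these cosets (starting, say, from $H = S_{\frac{\pi}{8}}$, which satisfies both $H \ket{0} = \ket{+}$ and $H \ket{+} = \ket{0}$) produces precisely the four options for $A_1$ and the four options for $A_2$ listed in the statement, hence $16$ winning strategies. By Definition~\ref{def:Equivalent Strategies} they are pairwise equivalent, since every one of them has the same partial products $A_1 \ket{0} = \ket{+}$ and $A_2 A_1 \ket{0} = \ket{0}$; moreover any strategy equivalent to $(H,H)$ must have exactly these partial products and is therefore one of the $16$, so they form a single class $\mathcal{C}_{+}$ with state path $(\ket{0}, \ket{+}, \ket{0})$.

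The case $A_1 \ket{0} = \ket{-}$ is identical word for word, with $D_8(\ket{-})$ in place of $D_8(\ket{+})$ for the second coset and with a base element such as $S_{\frac{7 \pi}{8}}$; it yields the class $\mathcal{C}_{-}$ with state path $(\ket{0}, \ket{-}, \ket{0})$. The four first moves listed for $\mathcal{C}_{+}$ and the four listed for $\mathcal{C}_{-}$ are disjoint (a single move $A_1$ cannot send $\ket{0}$ to both $\ket{+}$ and $\ket{-}$), so the two classes are distinct and together comprise $32$ different strategies; since a winning strategy is dominant by Definition~\ref{def:Winning and Dominant Strategies}, this proves parts~1 and~2. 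For part~3, it suffices to note that $(H, H)$ is a winning strategy for Q, so Q wins with probability $1.0$ against \emph{every} strategy of Picard; consequently no strategy of Picard can win with probability $1.0$ against every strategy of Q, i.e., Picard has no winning strategy.

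The only genuinely computational step is expanding the two pairs of cosets to recover the explicit move lists, and the sole point needing care is the logical bookkeeping — pairing the \emph{necessity} supplied by Theorem~\ref{thr:Characteristic Properties of Winning Strategies} with the \emph{sufficiency} check that each admissible pair really wins, so that the count of exactly $16$ per class is tight, and confirming that the two first-move lists do not overlap. Given Propositions~\ref{prp:The Action of D_8 on B} through~\ref{prp:The Fixed Set of M_P in D_8}, no real obstacle remains; this is essentially the content of Example~\ref{xmp:Q's PQG Winning Strategies in D_8} reorganised as a clean orbit--stabilizer argument.
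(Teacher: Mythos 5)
Your proof is correct and follows essentially the same route as the paper's: both reduce to the two cases $A_1 \ket{0} = \ket{+}$ and $A_1 \ket{0} = \ket{-}$, count $4 \times 4 = 16$ strategies per class using the fact that $\ket{+}$ and $\ket{-}$ are fixed by $\{I, F\}$, and settle part 3 by observing that Q's winning strategy forces Picard's winning probability to $0$. The only (welcome) refinements are that you exclude $\ket{0}$ and $\ket{1}$ in one stroke via $A_1 \ket{0} \in Fix(F)$ from Theorem \ref{thr:Characteristic Properties of Winning Strategies}, where the paper's proof instead rules them out by two separate contradiction arguments based on the injectivity of $A_2$, and that you justify the count of exactly four moves per transition by an orbit--stabilizer coset argument rather than the paper's explicit enumeration from the representation tables.
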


\subsection{The smaller dihedral groups $D_3, D_4, D_5, D_6$ and $D_7$}

We may ask whether any of the smaller dihedral groups $D_3, D_4, D_5, D_6$ and $D_7$ can be an appropriate operational space for the $PQG$. The answer is no for the reasons outlined below.

\begin{itemize}
	\item	$D_3, D_5, D_6$ and $D_7$ do not contain the reflection $F$. This can be verified by comparing formula (\ref{eq:The F Reflector}) with formula (\ref{eq:Standard Representation r^k s in D_n}) for $n = 3, 5, 6$ and $7$ and $k = 1, \dots, n - 1$. We have assumed that Picard, the classical player, must be able to flip the coin, as emphasized in (\ref{eq:The Moves of Picard}).
	\item	$D_4$ does contain the reflection $F$. However, the orbit $D_{4} \star B$ is $\{ \ket{0}, \ket{1} \}$. This means that Q can only flip the coin from heads to tails or vice versa. If $M_Q = D_4$, then the $PQG$ degenerates to the classical coin tossing game. Q is no longer a quantum entity and, as explained in Example \ref{xmp:Q's PQG Winning Strategies in D_8}, no longer possesses a winning strategy. From this perspective, it becomes meaningless to play the $PQG$ in $D_4$.
\end{itemize}


These conclusions are contained in Table \ref{tbl:D_8 is the smallest Ambient Group in which Q has a winning strategy} for easy reference.

\begin{table}[H]
	\centering
	\caption{In smaller dihedral groups, it is either impossible to play the $PQG$, or, in the event that it is possible (such as in $D_4$), Q lacks a winning strategy.}
	\label{tbl:D_8 is the smallest Ambient Group in which Q has a winning strategy}
	\renewcommand{\arraystretch}{1.5}
	\begin{tabular}{c !{\vrule width 1.25 pt} c|c|c}
		\Xhline{4 \arrayrulewidth}
		Ambient group & Is $PQG$ playable & Winning strategy for Picard & Winning strategy for Q
		\\
		\Xhline{3 \arrayrulewidth}
		$D_3$ & No ($F \not \in M_P$) & --- & ---
		\\
		\hline
		$D_4$ & Yes (classical coin tossing) & No & No
		\\
		\hline
		$D_5$ & No ($F \not \in M_P$) & --- & ---
		\\
		\hline
		$D_6$ & No ($F \not \in M_P$) & --- & ---
		\\
		\hline
		$D_7$ & No ($F \not \in M_P$) & --- & ---
		\\
		\Xhline{4 \arrayrulewidth}
	\end{tabular}
	\renewcommand{\arraystretch}{1.0}
\end{table}

Therefore, if we accept that the classical player should, at the very least, be able to flip the coin in order to have a nontrivial game, and that the quantum player must exhibit quantumness, then the smallest dihedral group for the $PQG$ is $D_8$. This fact is stated as Theorem \ref{thr:The smallest dihedral group of the PQG is D_8}.

\begin{theorem}[The smallest dihedral group for the $PQG$ is $D_8$] \label{thr:The smallest dihedral group of the PQG is D_8}
	 $D_8$ is the smallest of the dihedral groups such that $PQG$ can be meaningful played and in which Q has a quantum winning strategy.
\end{theorem}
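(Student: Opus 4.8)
The plan is to split the statement into its two clauses — that the $PQG$ can be meaningfully played (which by assumption~(\ref{eq:The Moves of Picard}) forces the reflection $F$ to lie in the ambient group) and that Q has a quantum winning strategy — and to check them against each dihedral group $D_n$ with $3 \le n \le 7$, these being exactly the dihedral groups smaller than $D_8$. The positive half, that $D_8$ itself meets both requirements, is already available: $F = S_{\frac{2\pi}{8}} \in D_8$ by~(\ref{eq:The F Reflector}), and Theorem~\ref{thr:Q's PQG Winning Strategies in D_8} exhibits Q's two winning strategy classes. So the substance is to exclude $D_3, D_4, D_5, D_6, D_7$.

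First I would determine for which $n$ the matrix $F$ belongs to $D_n$. In the standard representation the reflections of $D_n$ are precisely the $S_{\frac{\pi k}{n}}$ with $0 \le k \le n-1$, by~(\ref{eq:Standard Representation r^k s in D_n}), while $F = S_{\frac{\pi}{4}}$ by~(\ref{eq:The F Reflector}). Since $S_\varphi$ is $\pi$-periodic in $\varphi$, the condition $F \in D_n$ is equivalent to $\frac{\pi}{4} \equiv \frac{\pi k}{n} \pmod{\pi}$ for some integer $k$, i.e.\ to $4 \mid n$. Hence for $n = 3, 5, 6, 7$ the group $D_n$ does not contain $F$, Picard is deprived of the flip move, assumption~(\ref{eq:The Moves of Picard}) fails, and the $PQG$ cannot be meaningfully played in $D_n$.

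Next I would treat the borderline case $n = 4$, where $4 \mid 4$, so $F \in D_4$ and the game is playable; the point here is that Q has no quantum winning strategy. The clean argument: every element of $D_4$ in the standard representation is a matrix with entries in $\{0, \pm 1\}$ (rotation angles are multiples of $\frac{\pi}{2}$ and reflection angles multiples of $\frac{\pi}{4}$, so all the cosines and sines of the doubled angles lie in $\{0,\pm 1\}$), and applying such a matrix to $\ket{0}$ yields, up to a sign, $\ket{0}$ or $\ket{1}$; thus $D_4 \star B = \{\ket{0}, \ket{1}\}$. Consequently any candidate Q-strategy $(A_1, A_2)$ with $A_1, A_2 \in D_4$ has $A_1\ket{0} \in \{\ket{0}, \ket{1}\}$, whereas Theorem~\ref{thr:Characteristic Properties of Winning Strategies} together with Proposition~\ref{prp:The Fixed Set of M_P in D_8} forces $A_1\ket{0} \in \mathrm{Fix}(\{F\}) = \{\ket{+}, \ket{-}\}$; since $\{\ket{0}, \ket{1}\} \cap \{\ket{+}, \ket{-}\} = \emptyset$, no winning strategy for Q exists — the $PQG$ in $D_4$ simply degenerates to classical coin tossing, exactly as observed in Example~\ref{xmp:Q's PQG Winning Strategies in D_8}.

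Putting the cases together, none of $D_3, \dots, D_7$ is simultaneously a space in which the $PQG$ is meaningfully playable and in which Q has a quantum winning strategy, while $D_8$ is both; this yields the theorem. The only genuinely non-routine step is the $D_4$ case, and even there the obstacle is mild — one needs the orbit computation $D_4 \star B = \{\ket{0}, \ket{1}\}$, after which the fixed-set obstruction from Theorem~\ref{thr:Characteristic Properties of Winning Strategies} finishes the job; the other exclusions reduce to the elementary arithmetic fact that $4 \nmid n$ for $n \in \{3,5,6,7\}$.
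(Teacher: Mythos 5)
Your proposal is correct and follows essentially the same route as the paper's proof: exclude $D_3,D_5,D_6,D_7$ because $F$ is not among their reflections, and exclude $D_4$ because $D_4\star B=\{\ket{0},\ket{1}\}$ so the game degenerates to classical coin tossing. Your two refinements — packaging the first step as the divisibility criterion $F\in D_n\iff 4\mid n$, and deriving the $D_4$ exclusion explicitly from Theorem \ref{thr:Characteristic Properties of Winning Strategies} together with $\mathrm{Fix}(F)=\{\ket{+},\ket{-}\}$ rather than merely asserting it — are sound and, if anything, tighten the paper's argument.
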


\subsection{The dihedral groups $D_{8 n}$, $n \geq 1$}

The previous subsection demonstrated that the smallest meaningful group for the $PQG$ is $D_{8}$. This subsection examines what happens if we allow Q to choose from a larger repertoire, and, more specifically, if we assume that

\begin{align} \label{eq:The Moves of Q in D_n}
	M_Q = D_n \ , \ n \geq 8 \ . \tag{$A_{3}$}
\end{align}

Let us as first make the helpful observation that when $n$ is odd, then $D_{n}$ does not contain $F$, which is stated as Proposition \ref{prp:Absence of F in D_n for n Odd}.

\begin{proposition}[$D_{n}$ does not contain $F$ when $n$ odd] \label{prp:Absence of F in D_n for n Odd}
	If $n$ is odd, then the dihedral group $D_{n}$ does not contain $F$.
\end{proposition}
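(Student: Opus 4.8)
The plan is to rely entirely on the standard matrix representation of $D_n$ recorded in (\ref{eq:Standard Representation r^k in D_n}) and (\ref{eq:Standard Representation r^k s in D_n}), together with the explicit description $F = S_{\frac{2\pi}{8}} = S_{\frac{\pi}{4}}$ from (\ref{eq:The F Reflector}), and to reduce the membership question $F \in D_n$ to a simple divisibility condition on $n$.

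First I would note that $F$ cannot coincide with any of the rotation elements of $D_n$: in the standard representation these are the matrices $R_{\frac{2\pi k}{n}}$, all of determinant $+1$, whereas $\det F = -1$. Hence, if $F$ belonged to $D_n$ it would have to equal one of the $n$ reflection matrices $S_{\frac{\pi k}{n}}$, $0 \le k \le n-1$, listed in (\ref{eq:Standard Representation r^k s in D_n}). Geometrically these are precisely the reflections in the $n$ lines through the origin at angles $\frac{\pi k}{n}$, $k = 0, \dots, n-1$, all lying in $[0,\pi)$, while $F$ is the reflection in the line at angle $\frac{\pi}{4}$.

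The second step is to compare the two matrix forms entry by entry, which (equivalently) amounts to using that a reflector $S_\varphi$ is determined by $\varphi$ modulo $\pi$. Equating $S_{\frac{\pi}{4}}$ with $S_{\frac{\pi k}{n}}$ forces $\cos\frac{2\pi k}{n} = 0$ and $\sin\frac{2\pi k}{n} = 1$, i.e. $\frac{\pi k}{n} \equiv \frac{\pi}{4} \pmod{\pi}$; since $0 \le k \le n-1$, a short estimate shows this is possible only when $n = 4k$, that is, only when $4 \mid n$. Therefore $F \in D_n$ precisely when $4 \mid n$, and in particular, when $n$ is odd we have $4 \nmid n$, so $F \notin D_n$, which is exactly the assertion.

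I do not anticipate a genuine obstacle here; the only point requiring a little care is the bookkeeping in the last step — making sure the determinant remark really does dispose of \emph{all} the rotation elements, and that reflectors are identified modulo $\pi$ rather than $2\pi$, so that the condition comes out as $4 \mid n$ and not a spurious $8 \mid n$. Once that is pinned down, the conclusion for odd $n$ is immediate.
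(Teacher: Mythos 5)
Your argument is correct and follows essentially the same route as the paper's proof: both reduce $F \in D_n$ to matching $F$ against the standard reflection matrices $S_{\frac{\pi k}{n}}$ and derive the impossible divisibility condition $k = \frac{n}{4}$ for odd $n$. The only cosmetic differences are that you make the determinant argument against the rotations explicit (the paper leaves it implicit), while the paper additionally allows the equality up to a sign $\pm$, which produces the second excluded case $k = \frac{3n}{4}$ — harmless either way, since neither is an integer when $n$ is odd.
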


This result enables us to exclude these groups from now on when considering larger groups where the $PQG$ can be successfully played.

Another useful result about the orbits of $B$ in general dihedral groups is contained in Theorem \ref{thr:The Action of D_n on B}.

\begin{theorem}[The action of $D_{n}$ on $B$] \label{thr:The Action of D_n on B}
	The action of the general dihedral group $D_{n}, n \geq 3,$ on the computational basis $B$ depends on whether $n$ is a multiple of $4$ or $n$ is even but not a multiple of $4$. Specifically,
	\begin{enumerate}
		\item	if $n$ is a multiple of $4$, then the action of the dihedral group $D_{n}$ on the computational basis $B$ is
				\begin{align} \label{eq:Orbit of B in D_n for n 4-Multiple}
					D_{n} \star \ket{0} = D_{n} \star \ket{1} = D_{n} \star B =
					\{ \cos \frac{2 \pi k}{n} \ket{0} + \sin \frac{2 \pi k}{n} \ket{1} : 0 \leq k < \frac{n}{2} \} \ ,
				\end{align}
		\item	if $n$ is even but not a multiple of $4$, then the action of the dihedral group $D_{n}$ on the computational basis $B$ is
				\begin{align}
					D_{n} \star \ket{0}
					&=
					\{ \cos \frac{2 \pi k}{n} \ket{0} + \sin \frac{2 \pi k}{n} \ket{1} : 0 \leq k < \frac{n}{2} \} \ , \label{eq:Orbit of Ket 0 in D_n for n Even but Not 4-Multiple}
					\\
					D_{n} \star \ket{1}
					&=
					\{ -\sin \frac{2 \pi k}{n} \ket{0} + \cos \frac{2 \pi k}{n} \ket{1} : 0 \leq k < \frac{n}{2} \} \ , \label{eq:Orbit of Ket 1 in D_n for n Even but Not 4-Multiple}
					\\
					D_{n} \star B
					&= \{ \cos \frac{2 \pi k}{n} \ket{0} + \sin \frac{2 \pi k}{n} \ket{1} : 0 \leq k < \frac{n}{2} \}
					\cup \{ -\sin \frac{2 \pi k}{n} \ket{0} + \cos \frac{2 \pi k}{n} \ket{1} : 0 \leq k < \frac{n}{2} \} \ . \label{eq:Orbit of B in D_n for n Even but Not 4-Multiple}
				\end{align}
	\end{enumerate}
\end{theorem}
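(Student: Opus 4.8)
The plan is to compute directly with the standard matrix representation of $D_n$ from (\ref{eq:Standard Representation r^k in D_n}) and (\ref{eq:Standard Representation r^k s in D_n}), tracking where $\ket{0}$ and $\ket{1}$ go under each of the $2n$ group elements, always modulo the phase identification of Definition \ref{def:Orbits & Stabilizers}. First I would record the four elementary products
\begin{align*}
	R_{\frac{2 \pi k}{n}} \ket{0} &= \cos \frac{2 \pi k}{n} \ket{0} + \sin \frac{2 \pi k}{n} \ket{1}, & R_{\frac{2 \pi k}{n}} \ket{1} &= -\sin \frac{2 \pi k}{n} \ket{0} + \cos \frac{2 \pi k}{n} \ket{1}, \\
	S_{\frac{\pi k}{n}} \ket{0} &= \cos \frac{2 \pi k}{n} \ket{0} + \sin \frac{2 \pi k}{n} \ket{1}, & S_{\frac{\pi k}{n}} \ket{1} &= \sin \frac{2 \pi k}{n} \ket{0} - \cos \frac{2 \pi k}{n} \ket{1}.
\end{align*}
The decisive observation is that a reflection sends $\ket{0}$ to exactly the vector produced by the rotation with the same index, and sends $\ket{1}$ to the negative of what that rotation does; since $-\ket{\psi}$ and $\ket{\psi}$ denote the same state, the $D_n$-orbit of $\ket{0}$ (respectively of $\ket{1}$) coincides with its orbit under the rotation subgroup $\langle r \rangle$ alone.

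Next I would cut down the parameter range. Since $n$ is even in both cases of the statement, replacing $k$ by $k + \frac{n}{2}$ adds $\pi$ to the angle $\frac{2 \pi k}{n}$ and hence multiplies each of the vectors above by $-1$, giving the same state; so the orbit is already exhausted by the indices $0 \leq k < \frac{n}{2}$. Moreover these $\frac{n}{2}$ states are pairwise distinct, because two unit vectors $(\cos \alpha, \sin \alpha)$ and $(\cos \beta, \sin \beta)$ represent the same state only if $\alpha \equiv \beta \pmod{\pi}$, which for $\alpha, \beta \in [0, \pi)$ forces $\alpha = \beta$. Combined with the previous paragraph, this already yields the stated descriptions of $D_n \star \ket{0}$ and of $D_n \star \ket{1}$ in both cases.

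It then remains to decide when $D_n \star \ket{0}$ and $D_n \star \ket{1}$ coincide, and this is exactly where the split according to $n \bmod 4$ enters. Since $\ket{1} = R_{\frac{\pi}{2}} \ket{0}$, the identity $\ket{1} = r^{n/4} \ket{0}$ holds precisely when $\frac{n}{4}$ is an integer, i.e. when $4 \mid n$; in that case $\ket{1} \in D_n \star \ket{0}$, so the two orbits are equal and $D_n \star B = D_n \star \ket{0} \cup D_n \star \ket{1} = D_n \star \ket{0}$, which is part (1). If instead $n$ is even but $4 \nmid n$, I would check that $\ket{1} \notin D_n \star \ket{0}$: a vector $\cos \frac{2 \pi k}{n} \ket{0} + \sin \frac{2 \pi k}{n} \ket{1}$ can equal $\ket{1}$ up to phase only if $\cos \frac{2 \pi k}{n} = 0$, i.e. $k \in \{ \frac{n}{4}, \frac{3n}{4} \}$, which has no integer solution when $4 \nmid n$. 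Hence the two orbits are disjoint and $D_n \star B$ is their union, which is part (2).

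I do not anticipate a genuine obstacle here; the argument is bookkeeping built on formulas (\ref{eq:Standard Representation r^k in D_n}) and (\ref{eq:Standard Representation r^k s in D_n}). The only point that demands care is applying the phase identification consistently --- in particular, not inflating the orbit by forgetting that $k$ and $k + \frac{n}{2}$ give the same state, and scrupulously distinguishing ``equal as vectors'' from ``equal as states'' when testing whether $\ket{1}$ lies in the orbit of $\ket{0}$.
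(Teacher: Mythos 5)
Your argument is correct, and it reaches the disjointness claim in part (2) by a genuinely shorter route than the paper. The opening computations coincide with the paper's Lemma \ref{lem:The Action of D_n on B} (reflections reproduce, up to a global sign, what the rotations do to $\ket{0}$ and $\ket{1}$, so each orbit is already the rotation-subgroup orbit), and your reduction to $0 \leq k < \frac{n}{2}$ via the antipodal identification $k \mapsto k + \frac{n}{2}$ matches the paper's treatment in Lemmata \ref{lem:The Action of D_n on B for n 4-Multiple Appendix} and \ref{lem:The Action of D_n on B for n Even Appendix}. The divergence is in how the two orbits are compared. The paper, for $n$ even but not a multiple of $4$, proves $\ket{\varphi_{k_1}} \neq \pm \ket{\chi_{k_2}}$ directly, by an eight-case analysis of trigonometric systems built from the sum-to-product identities; you instead observe that the projective action of $D_n$ on states is a bona fide group action, so its orbits are either equal or disjoint, and therefore it suffices to test whether $\ket{1} \in D_n \star \ket{0}$, which reduces to whether $\cos \frac{2 \pi k}{n} = 0$ has an integer solution, i.e.\ whether $4 \mid n$. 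This single membership test settles both cases at once and replaces the bulk of the paper's Lemma \ref{lem:The Action of D_n on B for n Even Appendix}; the paper itself half-acknowledges this shortcut for the $4 \mid n$ direction (via its Lemma \ref{lem:Criterion for Orbit Coincidence}) but does not exploit the contrapositive for disjointness. The only point worth making explicit if you write this up is the orbit dichotomy itself: the paper's Lemma \ref{lem:Criterion for Orbit Coincidence} states only the ``equal'' direction, so you should note that a common element $g \ket{0} = h \ket{1}$ forces $\ket{1} = h^{-1} g \ket{0} \in D_n \star \ket{0}$, whence non-membership implies disjointness.
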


Figures \ref{fig:Orbit of B in D_n for n 4-Multiple}, \ref{fig:Orbit of B in D_n for n Even}, and \ref{fig:Orbit of B in D_n for n Odd} provide intuitive visualizations of Theorem \ref{thr:The Action of D_n on B} and Proposition \ref{prp:Absence of F in D_n for n Odd}.

\begin{figure}[H]
	\begin{minipage}[t]{0.32\textwidth}
		\centering
		\begin{tikzpicture}[scale = 1.7]
			\def \angle {360/8}
			\draw (-1.5, 0) -- (1.5, 0);
			\draw (0, -1.5) -- (0, 1.5);
			\draw [fill, thick, WordBlueDark]
			({cos(0 * \angle)}, {sin(0 * \angle)}) circle (0.75 pt) node [below right] {1};
			\draw [fill, thick, WordBlueDark]
			({cos(1 * \angle)}, {sin(1 * \angle)}) circle (0.75 pt) node [above right] {A};
			\draw [fill, thick, WordBlueDark]
			({cos(2 * \angle)}, {sin(2 * \angle)}) circle (0.75 pt) node [above left] {1};
			\draw [fill, thick, WordBlueDark]
			({cos(3 * \angle)}, {sin(3 * \angle)}) circle (0.75 pt) node [above left] {B};
			\draw [thick, WordBlueDark]
			({cos(4 * \angle)}, {sin(4 * \angle)}) circle (0.75 pt) node [below left] {-1};
			\draw [thick, WordBlueDark]
			({cos(5 * \angle)}, {sin(5 * \angle)}) circle (0.75 pt) node [below left] {$A^\prime$};
			\draw [thick, WordBlueDark]
			({cos(6 * \angle)}, {sin(6 * \angle)}) circle (0.75 pt) node [below left] {-1};
			\draw [thick, WordBlueDark]
			({cos(7 * \angle)}, {sin(7 * \angle)}) circle (0.75 pt) node [below right] {$B^\prime$};
			\draw [thin, dashed, WordBlueDark] ({cos(1 * \angle)}, {sin(1 * \angle)}) -- ({cos(5 * \angle)}, {sin(5 * \angle)});
			\draw [thin, dashed, WordBlueDark] ({cos(3 * \angle)}, {sin(3 * \angle)}) -- ({cos(7 * \angle)}, {sin(7 * \angle)});
			\draw [thick, WordBlueDark] (1cm, 0cm) arc [start angle = 0, end angle = 180, radius = 1cm];
			\draw [thin, dashed, WordBlueDark] (1cm, 0cm) arc [start angle = 0, end angle = -180, radius = 1cm];
			\scoped [on background layer]
			\filldraw [->, MyVeryLightBlue, line width = 0.3 mm] (0, 0) -- (0.5,0) arc (0:\angle:0.5);
			\draw [->, WordBlueDark, line width = 0.3 mm] (0.5,0) arc (0:\angle:0.5);
			\draw [->, WordBlueDark, line width = 0.3 mm] ({1.2 * cos(0.5 * \angle)}, {1.2 * sin(0.5 * \angle)}) node [WordBlueDark, right] {\Large $\frac{2 \pi}{n}$} -- ({0.3 * cos(0.5 * \angle)}, {0.3 * sin(0.5 * \angle)});
		\end{tikzpicture}
		\caption{Kets $\ket{0}$ and $\ket{1}$ have the same orbit in case $n$ is a 4-multiple. The antipodal points that arise represent the same state.}
		\label{fig:Orbit of B in D_n for n 4-Multiple}
	\end{minipage}
	\hfill
	\begin{minipage}[t]{0.32\textwidth}
		\centering
		\begin{tikzpicture}[scale = 1.7]
			\def \angle {360/6}
			\draw (-1.5, 0) -- (1.5, 0);
			\draw (0, -1.5) -- (0, 1.5);
			\draw [thin, WordBlueDark] (0, 0) circle [ radius = 1 cm ];
			\draw [fill, thick, RedPurple]
			({cos(0 * \angle)}, {sin(0 * \angle)}) circle (0.75 pt) node [below right] {1};
			\draw [fill, thick, RedPurple]
			({cos(1 * \angle)}, {sin(1 * \angle)}) circle (0.75 pt) node [above right] {A};
			\draw [fill, thick, RedPurple]
			({cos(2 * \angle)}, {sin(2 * \angle)}) circle (0.75 pt) node [above left] {B};
			\draw [thick, RedPurple]
			({cos(3 * \angle)}, {sin(3 * \angle)}) circle (0.75 pt) node [below left] {-1};
			\draw [thick, RedPurple]
			({cos(4 * \angle)}, {sin(4 * \angle)}) circle (0.75 pt) node [below left] {$A^\prime$};
			\draw [thick, RedPurple]
			({cos(5 * \angle)}, {sin(5 * \angle)}) circle (0.75 pt) node [below right] {$B^\prime$};
			\draw [thin, dashed, RedPurple] ({cos(1 * \angle)}, {sin(1 * \angle)}) -- ({cos(4 * \angle)}, {sin(4 * \angle)});
			\draw [thin, dashed, RedPurple] ({cos(2 * \angle)}, {sin(2 * \angle)}) -- ({cos(5 * \angle)}, {sin(5 * \angle)});
			\draw [fill, thick, GreenLighter2]
			({cos(90 + 0 * \angle)}, {sin(90 + 0 * \angle)}) circle (0.75 pt) node [above left] {1};
			\draw [fill, thick, GreenLighter2]
			({cos(90 - 1 * \angle)}, {sin(90 - 1 * \angle)}) circle (0.75 pt) node [above right] {C};
			\draw [thick, GreenLighter2]
			({cos(90 + 2 * \angle)}, {sin(90 + 2 * \angle)}) circle (0.75 pt) node [below left] {$C^\prime$};
			\draw [thick, GreenLighter2]
			({cos(90 + 3 * \angle)}, {sin(90 + 3 * \angle)}) circle (0.75 pt) node [below left] {-1};
			\draw [fill, thick, GreenLighter2]
			({cos(90 + 1 * \angle)}, {sin(90 + 1 * \angle)}) circle (0.75 pt) node [above left] {D};
			\draw [thick, GreenLighter2]
			({cos(90 - 2 * \angle)}, {sin(90 - 2 * \angle)}) circle (0.75 pt) node [below right] {$D^\prime$};
			\draw [thin, dashed, GreenLighter2] ({cos(90 - 1 * \angle)}, {sin(90 - 1 * \angle)}) -- ({cos(90 + 2 * \angle)}, {sin(90 + 2 * \angle)});
			\draw [thin, dashed, GreenLighter2] ({cos(90 + 1 * \angle)}, {sin(90 + 1 * \angle)}) -- ({cos(90 - 2 * \angle)}, {sin(90 - 2 * \angle)});
			\scoped [on background layer]
			\filldraw [->, MyLightRed, line width = 0.3 mm] (0, 0) -- (0.5, 0) arc (0:\angle:0.5);
			\draw [->, RedPurple, line width = 0.3 mm] (0.5, 0) arc (0:\angle:0.5);
			\draw [->, RedPurple, line width = 0.3 mm] ({1.2 * cos(0.25 * \angle)}, {1.2 * sin(0.25 * \angle)}) node [RedPurple, right] {\Large $\frac{2 \pi}{n}$} -- ({0.3 * cos(0.5 * \angle)}, {0.3 * sin(0.5 * \angle)});
			\scoped [on background layer]
			\filldraw [->, WordLightGreen, line width = 0.3 mm] (0, 0) -- (0, 0.5) arc (90:90+\angle:0.5);
			\draw [->, GreenLighter2, line width = 0.3 mm] (0, 0.5) arc (90:90+\angle:0.5);
			\draw [->, GreenLighter2, line width = 0.3 mm] ({1.2 * cos(1.75 * \angle)}, {1.2 * sin(1.75 * \angle)}) node [GreenLighter2, above] {\Large $\frac{2 \pi}{n}$} -- ({0.3 * cos(2.0 * \angle)}, {0.3 * sin(2.0 * \angle)});
		\end{tikzpicture}
		\caption{Kets $\ket{0}$ and $\ket{1}$ have different orbits in case in case $n$ is even, but not a 4-multiple.}
		\label{fig:Orbit of B in D_n for n Even}
	\end{minipage}
	\hfill
	\begin{minipage}[t]{0.32\textwidth}
		\centering
		\begin{tikzpicture}[scale = 1.7]
			\def \angle {360/5}
			\draw (-1.5, 0) -- (1.5, 0);
			\draw (0, -1.5) -- (0, 1.5);
			\draw [thin, WordBlueDark] (0, 0) circle [ radius = 1 cm ];
			\draw [fill, thick, RedPurple]
			({cos(0 * \angle)}, {sin(0 * \angle)}) circle (0.75 pt) node [below right] {1};
			\draw [fill, thick, RedPurple]
			({cos(1 * \angle)}, {sin(1 * \angle)}) circle (0.75 pt) node [above right] {A};
			\draw [fill, thick, RedPurple]
			({cos(2 * \angle)}, {sin(2 * \angle)}) circle (0.75 pt) node [above left] {B};
			\draw [fill, thick, RedPurple]
			({cos(3 * \angle)}, {sin(3 * \angle)}) circle (0.75 pt) node [below left] {C};
			\draw [fill, thick, RedPurple]
			({cos(4 * \angle)}, {sin(4 * \angle)}) circle (0.75 pt) node [below right] {D};
			\draw [fill, thick, GreenLighter2]
			({cos(90 + 0 * \angle)}, {sin(90 + 0 * \angle)}) circle (0.75 pt) node [above left] {1};
			\draw [fill, thick, GreenLighter2]
			({cos(90 + 1 * \angle)}, {sin(90 + 1 * \angle)}) circle (0.75 pt) node [above left] {E};
			\draw [fill, thick, GreenLighter2]
			({cos(90 + 2 * \angle)}, {sin(90 + 2 * \angle)}) circle (0.75 pt) node [below left] {F};
			\draw [fill, thick, GreenLighter2]
			({cos(90 + 3 * \angle)}, {sin(90 + 3 * \angle)}) circle (0.75 pt) node [below right] {G};
			\draw [fill, thick, GreenLighter2]
			({cos(90 + 4 * \angle)}, {sin(90 + 4 * \angle)}) circle (0.75 pt) node [above right] {H};
			\draw [thin, dashed, RedPurple] (0, 0) -- ({cos(1 * \angle)}, {sin(1 * \angle)});
			\scoped [on background layer]
			\filldraw [->, MyLightRed, line width = 0.3 mm] (0, 0) -- (0.5, 0) arc (0:\angle:0.5);
			\draw [->, RedPurple, line width = 0.3 mm] (0.5, 0) arc (0:\angle:0.5);
			\draw [->, RedPurple, line width = 0.3 mm] ({1.2 * cos(0.5 * \angle)}, {1.2 * sin(0.5 * \angle)}) node [RedPurple, right] {\Large $\frac{2 \pi}{n}$} -- ({0.3 * cos(0.5 * \angle)}, {0.3 * sin(0.5 * \angle)});
			\draw [thin, dashed, GreenLighter2] (0, 0) -- ({cos(90 + 1 * \angle)}, {sin(90 + 1 * \angle)});
			\scoped [on background layer]
			\filldraw [->, WordLightGreen, line width = 0.3 mm] (0, 0) -- (0, 0.5) arc (90:90+\angle:0.5);
			\draw [->, GreenLighter2, line width = 0.3 mm] (0, 0.5) arc (90:90+\angle:0.5);
			\draw [->, GreenLighter2, line width = 0.3 mm] ({1.2 * cos(1.75 * \angle)}, {1.2 * sin(1.75 * \angle)}) node [GreenLighter2, above] {\Large $\frac{2 \pi}{n}$} -- ({0.3 * cos(1.75 * \angle)}, {0.3 * sin(1.75 * \angle)});
		\end{tikzpicture}
		\caption{Kets $\ket{0}$ and $\ket{1}$ have different orbits in case in case $n$ is odd. No antipodal points arise in this case.}
		\label{fig:Orbit of B in D_n for n Odd}
	\end{minipage}
\end{figure}

In this more complex setting we may resort to Algorithm \ref{alg:Q's PQG Winning Strategy} to establish under what conditions Q still possesses winning strategies and, if so, which are these. This is facilitated by the next Theorem \ref{thr:The Fixed Set of M_P in D_n}, which explains what happens to the fixed set of $\{ I, F \}$ in $D_n$.

\begin{theorem}[The fixed set of $\{ I, F \}$ in $D_n$] \label{thr:The Fixed Set of M_P in D_n}
	When the general dihedral group $D_{n}, n \geq 3,$ acts on the computational basis $B$, the fixed set of $M_P = \{ I, F \}$ depends on whether $n$ is a multiple of $8$ or not.
	\begin{enumerate}
		\item	If $n$ is a multiple of $8$, then:
				\begin{align} \label{eq:The Fixed Set of F in D_n for n 8-Multiple}
					Fix ( \{ I, F \} ) = Fix ( F ) = \{ \ket{+}, \ket{-} \} \ .
				\end{align}
		\item	In every other case:
				\begin{align} \label{eq:The Fixed Set of F in D_n for n Not 8-Multiple}
					Fix ( \{ I, F \} ) = Fix ( F ) = \emptyset \ .
				\end{align}
	\end{enumerate}
\end{theorem}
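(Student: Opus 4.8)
The plan is to reduce the whole statement to a single membership question: when do the states $\ket{+}$ and $\ket{-}$ belong to the set on which $D_n$ acts, namely the orbit $D_n \star B$? Everything else is bookkeeping around the definition of the fixed set (Definition \ref{def:Orbits & Stabilizers}).

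First I would dispose of the equality $Fix(\{I,F\}) = Fix(F)$, which holds in both cases and independently of $n$. By definition $Fix(\{I,F\}) = Fix(I) \cap Fix(F)$, and since the identity fixes every state we have $Fix(I) = D_n \star B$, whence $Fix(\{I,F\}) = Fix(F)$. So it only remains to compute $Fix(F)$ as a subset of $D_n \star B$.

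Next I would locate $Fix(F)$ inside the full (projective) state space. The matrix $F = \sigma_x$ is Hermitian and unitary with spectrum $\{+1,-1\}$, so $F\ket{\psi} = e^{i\theta}\ket{\psi}$ for some $\theta \in \mathbb{R}$ exactly when $\ket{\psi}$ is one of its two eigenstates $\ket{+}$ (eigenvalue $+1$) and $\ket{-}$ (eigenvalue $-1$); both are therefore fixed \emph{as states}, and no other state is. Consequently $Fix(F) = \{\ket{+},\ket{-}\} \cap (D_n \star B)$, and the theorem reduces to deciding whether $\ket{\pm} \in D_n \star B$.

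Then I would use Theorem \ref{thr:The Action of D_n on B} to describe $D_n \star B$ by angles. Every element of the orbit is a real state $\cos\varphi\,\ket{0} + \sin\varphi\,\ket{1}$, and two such kets represent the same state precisely when their angles differ by an integer multiple of $\pi$ (this is the point where the projective identification enters, and it must be tracked carefully, working modulo $\pi$ rather than $2\pi$). By that theorem, for $4 \mid n$ the orbit is the set of states with $\varphi \equiv \tfrac{2\pi k}{n} \pmod{\pi}$, $k\in\mathbb{Z}$; while for $n$ even but not a multiple of $4$ it is the union of those with $\varphi \equiv \tfrac{2\pi k}{n}\pmod{\pi}$ and those with $\varphi \equiv \tfrac{\pi}{2} + \tfrac{2\pi k}{n}\pmod{\pi}$. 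Writing $\ket{+}$ as angle $\tfrac{\pi}{4}$ and $\ket{-}$ as angle $\tfrac{3\pi}{4} \equiv -\tfrac{\pi}{4}$, each of the resulting congruences $\pm\tfrac{\pi}{4} \equiv \tfrac{2\pi k}{n}\pmod{\pi}$ and $\pm\tfrac{\pi}{4} \equiv \tfrac{\pi}{2}+\tfrac{2\pi k}{n}\pmod{\pi}$ rewrites as $8k = n\ell$ for an odd integer $\ell$ (the free parameter), which admits an integer solution $k$ for a suitable $\ell$ if and only if $8 \mid n$. Hence $\ket{+},\ket{-}\in D_n\star B$ exactly when $8\mid n$, giving $Fix(F) = \{\ket{+},\ket{-}\}$ in case (1) and $Fix(F) = \emptyset$ in case (2); the $n = 8$ instance recovers Proposition \ref{prp:The Fixed Set of M_P in D_8} as a sanity check.

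The main obstacle is precisely the modular bookkeeping: one has to be disciplined about identifying $-\ket{\psi}$ with $\ket{\psi}$, since it is the reduction modulo $\pi$ (not $2\pi$) that makes the congruences solvable iff $8\mid n$. A minor loose end is the odd-$n$ sub-case, which is not spelled out in the displayed formulas of Theorem \ref{thr:The Action of D_n on B}; there one can either repeat the one-line orbit computation or simply invoke Proposition \ref{prp:Absence of F in D_n for n Odd} together with the trivial remark that an odd $n$ is never a multiple of $8$, so again $\ket{\pm}\notin D_n\star B$ and $Fix(F)=\emptyset$.
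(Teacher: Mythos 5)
Your proof is correct, but it takes a genuinely different route from the paper's. The paper argues by a case split ($8\mid n$; $4\mid n$ but $8\nmid n$; $n$ even but $4\nmid n$) and, within each case, determines $Fix(F)$ by solving $F\ket{\varphi_k}=\pm\ket{\varphi_k}$ directly on the orbit, which each time reduces to $\tan\bigl(\tfrac{2\pi k}{n}\bigr)=\pm 1$ plus an integrality check on $k$. You instead front-load the spectral observation that, projectively, $F$ fixes \emph{only} its eigenstates $\ket{+}$ and $\ket{-}$ anywhere in $\mathcal{H}_2$ (the same eigenvector computation the paper defers to Theorem \ref{thr:The Fixed Set of M_P in U(2)}), so that $Fix(F)=\{\ket{+},\ket{-}\}\cap(D_n\star B)$ and the whole theorem collapses to one congruence, $\pm\tfrac{\pi}{4}\equiv\tfrac{2\pi k}{n}\pmod{\pi}$, i.e.\ $8k=n\ell$ with $\ell$ odd, which is solvable iff $8\mid n$. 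The computational core (angles $\pm\pi/4$ modulo $\pi$, divisibility by $8$) is identical, but your factorization eliminates the duplicated ``$F$ fixes nothing else in the orbit'' verification from each case and treats every residue of $n$ uniformly, including odd $n$, which the paper's appendix proof does not spell out. Two small caveats: the step that two real unit kets $\cos\varphi\,\ket{0}+\sin\varphi\,\ket{1}$ coincide projectively iff their angles agree modulo $\pi$ is doing real work and deserves a one-line justification (for real vectors the only admissible phases are $\pm1$); and your suggested shortcut for odd $n$ via Proposition \ref{prp:Absence of F in D_n for n Odd} is not valid on its own, since $F\notin D_n$ says nothing about whether $\ket{+},\ket{-}\in D_n\star B$ --- there you must fall back on your first alternative, the direct congruence, which does go through.
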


The significance of Theorem \ref{thr:The Fixed Set of M_P in D_n} is twofold. First, from a somewhat negative perspective, disqualifies most of the dihedral groups as potential ambient groups for the $PQG$. Simultaneously, in a positive note, ascertains that the $PQG$ can be meaningfully played in every dihedral group $D_n$ such that $n$ is a multiple of 8. The next Theorem \ref{thr:Q's PQG Winning Strategies in D_{8 n}} explains what exactly happens in terms of winning strategies when the $PQG$ is played in the aforementioned groups.

\begin{theorem}[The ambient group of the $PQG$ is $D_{8 n}$] \label{thr:Q's PQG Winning Strategies in D_{8 n}}
	If $M_P = \{ I, F \}$ and $M_Q = D_{8 n}$, i.e., the ambient group of the $PQG$ is $D_{8 n}$, where $n \geq 1$, then the following hold.
	\begin{enumerate}
		\item	Q has exactly two classes of winning and dominant strategies, each containing $16$ equivalent strategies:
				\begin{align} \label{eq:D_{8 n} Winning Strategy Classes}
					\mathcal{C}_{+} = [ ( A_{1}, A_{2} ) ] \quad \text{and} \quad \mathcal{C}_{-} = [ ( B_{1}, B_{2} ) ] \ ,
				\end{align}
				where
				\begin{itemize}
					\item	$A_{1}$ is one of $H, R_{\frac{2 \pi}{8}}, S_{\frac{5 \pi}{8}}$ or $R_{\frac{10 \pi}{8}}$,
					\item	$A_{2}$ is one of $H, R_{\frac{14 \pi}{8}}, S_{\frac{5 \pi}{8}}$ or $R_{\frac{6 \pi}{8}}$,
					\item	$B_{1}$ is one of $S_{\frac{7 \pi}{8}}, R_{\frac{14 \pi}{8}}, S_{\frac{3 \pi}{8}}$ or $R_{\frac{6 \pi}{8}}$, and
					\item	$B_{2}$ is one of $S_{\frac{7 \pi}{8}}, R_{\frac{2 \pi}{8}}, S_{\frac{3 \pi}{8}}$ or $R_{\frac{10 \pi}{8}}$.
				\end{itemize}
		\item	The winning state paths corresponding to $\mathcal{C}_{+}$ and $\mathcal{C}_{-}$ are
				\begin{align} \label{eq:D_{8 n} Winning State Paths}
					\tau_{\mathcal{C}_{+}} = (\ket{0}, \ket{+}, \ket{0}) \quad \text{and} \quad \tau_{\mathcal{C}_{-}} = (\ket{0}, \ket{-}, \ket{0}) \ .
				\end{align}
		\item	Picard has no winning strategy.
	\end{enumerate}
\end{theorem}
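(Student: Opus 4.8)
The plan is to run Algorithm~\ref{alg:Q's PQG Winning Strategy} exactly as in Example~\ref{xmp:Q's PQG Winning Strategies in D_8}, feeding it the three general facts already available: the characteristic properties of winning strategies (Theorem~\ref{thr:Characteristic Properties of Winning Strategies}), the description of $Fix(\{I,F\})$ in $D_n$ (Theorem~\ref{thr:The Fixed Set of M_P in D_n}), and the action of $D_n$ on $B$ (Theorem~\ref{thr:The Action of D_n on B}). Throughout, write $N = 8n$, so that $N$ is a multiple of $8$, and recall that under the standard representation the rotations of $D_N$ are the matrices $R_{\frac{2\pi k}{N}}$ and the reflections are the matrices $S_{\frac{\pi k}{N}}$, $0 \le k \le N-1$, as given by~(\ref{eq:Standard Representation r^k in D_n}) and~(\ref{eq:Standard Representation r^k s in D_n}).

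First I would pin down Q's opening move. Let $(A_1, A_2)$ be any winning strategy for Q. Theorem~\ref{thr:Characteristic Properties of Winning Strategies} gives $A_1\ket{0} \in Fix(\{F\})$ together with $A_2 I A_1\ket{0} = A_2 F A_1\ket{0} = \ket{0}$. Since $N$ is a multiple of $8$, part~1 of Theorem~\ref{thr:The Fixed Set of M_P in D_n} yields $Fix(\{I,F\}) = Fix(F) = \{\ket{+},\ket{-}\}$, so $A_1\ket{0}$ must be $\ket{+}$ or $\ket{-}$; moreover the case $4 \mid N$ of Theorem~\ref{thr:The Action of D_n on B} confirms that both states actually occur in the orbit $D_N\star\ket{0}$ (taking $k = n$ and $k = 3n$), so neither possibility is vacuous. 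Because each of $\ket{+}$ and $\ket{-}$ is fixed by both of Picard's moves, the coin is in the same state after round~$2$ as after round~$1$, and Q wins if and only if $A_2$ sends that state back to $\ket{0}$.

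Next I would enumerate the admissible moves. Using $R_\varphi\ket{0} = \cos\varphi\,\ket{0} + \sin\varphi\,\ket{1}$ and $S_\varphi\ket{0} = \cos 2\varphi\,\ket{0} + \sin 2\varphi\,\ket{1}$, and noting that every matrix in the standard representation is real --- so two such vectors represent the same state exactly when they are equal or opposite --- the requirements $R_\varphi\ket{0} = \pm\ket{+}$ and $S_\varphi\ket{0} = \pm\ket{+}$ become elementary equations in $\varphi$ whose only solutions among the admissible angles $\frac{2\pi k}{N}$, $\frac{\pi k}{N}$ are $k = n$ and $k = 5n$; the four operators that result are $H = S_{\frac{\pi}{8}}$, $R_{\frac{2\pi}{8}}$, $S_{\frac{5\pi}{8}}$, $R_{\frac{10\pi}{8}}$, independently of $n$. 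The same computation with $\ket{+}$ replaced by $\ket{-}$ gives the four operators $S_{\frac{7\pi}{8}}$, $R_{\frac{14\pi}{8}}$, $S_{\frac{3\pi}{8}}$, $R_{\frac{6\pi}{8}}$. For the last move, since reflections are involutions and $R_\varphi^{-1} = R_{-\varphi}$, the operators realising $\ket{+}\mapsto\ket{0}$ (resp.\ $\ket{-}\mapsto\ket{0}$) are read off from these lists, yielding exactly the stated ranges for $A_2$ and $B_2$. Combining the $4 \times 4$ choices in each case, Q has exactly two classes of winning strategies, $\mathcal{C}_+ = [(A_1,A_2)]$ and $\mathcal{C}_- = [(B_1,B_2)]$, each of size $16$; all members of a class are equivalent in the sense of Definition~\ref{def:Equivalent Strategies} because each drives the coin along the same path, so the state paths are $\tau_{\mathcal{C}_+} = (\ket{0},\ket{+},\ket{0})$ and $\tau_{\mathcal{C}_-} = (\ket{0},\ket{-},\ket{0})$, and since a winning strategy is dominant (Definition~\ref{def:Winning and Dominant Strategies}), these are dominant as well. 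Finally, Picard has no winning strategy, because the strategies just constructed win for Q against every move of Picard, so no single move of Picard can win against all of Q's strategies.

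The only genuinely substantive point --- the one place where one must do more than transcribe the $D_8$ argument of Example~\ref{xmp:Q's PQG Winning Strategies in D_8} --- is checking that enlarging the group from $D_8$ to $D_{8n}$ creates no new options: one has to verify that among the many additional rotations and reflections of $D_{8n}$ the only ones solving the relevant trigonometric equations are those collapsing back to the same four matrices, and, via Theorems~\ref{thr:The Action of D_n on B} and~\ref{thr:The Fixed Set of M_P in D_n}, that $\ket{+}$ and $\ket{-}$ remain simultaneously reachable by Q and untouchable by Picard. Granting this, the argument is word-for-word the one for Theorem~\ref{thr:Q's PQG Winning Strategies in D_8}, of which the present theorem is the $n = 1$ instance.
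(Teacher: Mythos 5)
Your proposal is correct and follows essentially the same route as the paper's own proof: both reduce the problem via Theorem \ref{thr:Characteristic Properties of Winning Strategies} and the fixed-set result for $D_{8n}$ to the constraint $A_1\ket{0}\in\{\ket{+},\ket{-}\}$, and then solve the same elementary trigonometric equations over the standard representation to show that the only elements of $D_{8n}$ realising the required transitions are the four rotations/reflections already present in $D_8$ (the paper phrases this as a proof by contradiction against a ``new'' winning strategy, but the computation is identical). Your explicit treatment of the second move via $R_\varphi^{-1}=R_{-\varphi}$ and the involutivity of reflections is a small tidy addition that the paper leaves implicit.
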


We are led to a very important conclusion: nothing substantial will change if the game takes place in much larger groups than $D_{8}$; the winning strategies remain precisely the same. This realization of course begs the question whether things we will turn to be different when Q has at his disposal the largest group possible, $U(2)$, which is examined in the next subsection.

\subsection{The entire $U(2)$}

In this section we shall examine the situation when Q is free to choose from all of $U(2)$, which is the largest possible group that Q can draw his moves from. Therefore, without further ado we state our final assumption regarding Q's set of actions.

\begin{align} \label{eq:The Moves of Q in U(2)}
	M_Q = U(2) \ . \tag{$A_{4}$}
\end{align}

The major difference now compared to the previous cases is that $U(2)$ contains infinitely many elements, whereas the previous groups were finite. Although, superficially, this might be expected to drastically enhance Q's capabilities, in turns out that in a certain sense everything remains the same. This can be attributed to the following very simple fact. Kets $\ket{\psi}$ and $e^{i \theta} \ket{\psi}$, where $\theta \in \mathbb{R}$, physically represent the same state. In turn, this implies that the action of the operator $A \in U(2)$ on a ket $\ket{\psi}$ is the same as the action of $e^{i \theta} A \in U(2)$ on $\ket{\psi}$ (see  \cite{Hall2013} for details). If we view $e^{i \theta} A$ as denoting a parametric family of operators, it is clear that all these operators can be considered equivalent and any of them, e.g., $A$, can be taken as the representative of the corresponding equivalence class. In order to simplify the notation, we make the following Definition \ref{def:One-Parameter Operator Families}.

\begin{definition} [Families of unitary operators] \label{def:One-Parameter Operator Families} \

	If $A \in U(2)$, then we define the one-parameter family of unitary operators
	\begin{align} \label{eq:One-Parameter Unitary Operator Family}
		A (\theta) = e^{i \theta} A \ , \ \theta \in \mathbb{R} \ .
	\end{align}
	Similarly, if $R_{\varphi}$ and $S_{\varphi}$ are rotators and reflectors, as given by (\ref{eq:2D Rotator}) and (\ref{eq:2D Reflector}), respectively, we define the one-parameter families of operators:
	\begin{multicols}{2}
		\noindent
		\begin{align} \label{eq:One-Parameter Operator Family of 2D Rotator}
			R_{\varphi} (\theta) = e^{i \theta} R_{\varphi}
		\end{align}
		\begin{align} \label{eq:eq:One-Parameter Operator Family of 2D Reflector}
			S_{\varphi} (\theta) = e^{i \theta} S_{\varphi} \ ,
		\end{align}
	\end{multicols}
	where $\theta \in \mathbb{R}$. Analogously, if $H$ is the Hadamard transform and $R_{\frac{2 \pi k}{n}}$ and $S_{\frac{\pi k}{n}}$ are the matrix representations given by (\ref{eq:Standard Representation r^k in D_n}) and (\ref{eq:Standard Representation r^k s in D_n}), we define the following collections of operators:
	\begin{multicols}{3}
		\noindent
		\begin{align} \label{eq:One-Parameter Operator Family of the Hadamard Transform}
			H (\theta) = e^{i \theta} H
		\end{align}
		\begin{align} \label{eq:One-Parameter Operator Family of the Standard Representation r^k in D_n}
			R_{\frac{2 \pi k}{n}} (\theta) = e^{i \theta} R_{\frac{2 \pi k}{n}}
		\end{align}
		\begin{align} \label{eq:One-Parameter Operator Family of the Standard Representation r^k s in D_n}
			S_{\frac{\pi k}{n}} (\theta) = e^{i \theta} S_{\frac{\pi k}{n}} \ .
		\end{align}
	\end{multicols}
\end{definition}

Again it is fruitful to turn to Algorithm \ref{alg:Q's PQG Winning Strategy} to establish under what conditions Q possesses winning strategies and which are these. This approach is guided by the next Theorem \ref{thr:The Fixed Set of M_P in U(2)}, which establishes the fixed set of $\{ I, F \}$ in $U(2)$.

\begin{theorem}[The fixed set of $\{ I, F \}$ in $U(2)$] \label{thr:The Fixed Set of M_P in U(2)}
	Under the action of $U(2)$ on the computational basis $B$, the fixed set of $M_P = \{ I, F \}$ is
	\begin{align} \label{eq:The Fixed Set of F in U(2)}
		Fix ( \{ I, F \} ) = Fix ( F ) = \{ \ket{+}, \ket{-} \} \ .
	\end{align}
\end{theorem}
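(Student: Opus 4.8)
The plan is to reduce the two-part statement to a single $2\times 2$ eigenvalue computation, while being careful about the convention that kets differing by a global phase represent the same state. Throughout, the set on which $U(2)$ acts is $X = U(2)\star B$; since $U(2)\star\ket{0}$ already exhausts all unit kets up to phase, $X$ is simply the set of all states of the coin, so membership in $X$ is never an obstruction.

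First I would dispose of the equality $Fix(\{I,F\}) = Fix(F)$. By Definition \ref{def:Orbits & Stabilizers}, $Fix(\{I,F\}) = Fix(I)\cap Fix(F)$, and since $I\star x = x$ for every $x\in X$ we have $Fix(I) = X$, whence $Fix(\{I,F\}) = Fix(F)$. Then I would characterize $Fix(F)$ directly: a state $\ket{\psi}$ lies in $Fix(F)$ exactly when $F\ket{\psi}$ and $\ket{\psi}$ represent the same state, i.e. $F\ket{\psi} = e^{i\theta}\ket{\psi}$ for some $\theta\in\mathbb{R}$, which says precisely that $\ket{\psi}$ is an eigenvector of $F$. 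Writing $\ket{\psi} = a\ket{0} + b\ket{1}$ and using the matrix of $F$ from (\ref{eq:PQ Matrices}), the equation becomes $b = e^{i\theta}a$ together with $a = e^{i\theta}b$; substituting gives $a = e^{2i\theta}a$. Since $\ket{\psi}\neq 0$, at least one of $a,b$ is nonzero and then (by the two relations) both are, so $e^{2i\theta} = 1$, i.e. $e^{i\theta} = \pm 1$. The value $e^{i\theta} = 1$ forces $a = b$, giving the state $\ket{+}$; the value $e^{i\theta} = -1$ forces $a = -b$, giving the state $\ket{-}$. Hence $Fix(F) = \{\ket{+},\ket{-}\}$, and since $\ket{+} = H\ket{0}$ and $\ket{-} = HF\ket{0}$ these are genuinely states in $X$.

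There is essentially no hard step here; the one point that must be handled with care — and the place a sloppy argument would go wrong — is the projective identification. It is exactly this identification that makes $\ket{-}$, an eigenvector of $F$ with eigenvalue $-1$, count as a \emph{fixed} state, and conversely it is the fact that $F$ is both unitary and Hermitian (so its spectrum is contained in $\{+1,-1\}$) that prevents any other eigenvalue from contributing. An equivalent, perhaps cleaner, way to organize the same computation is spectral: $F$ is a Hermitian involution, so $\mathcal{H}_2$ decomposes as the orthogonal direct sum of its $(+1)$- and $(-1)$-eigenlines, spanned by $\ket{+}$ and $\ket{-}$ respectively, and these two lines are precisely the $F$-invariant states.
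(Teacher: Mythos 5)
Your proof is correct and follows essentially the same route as the paper's: both reduce the claim to the eigenvalue problem for $F$, find the spectrum $\{+1,-1\}$ with eigenkets $\ket{+}$ and $\ket{-}$, and use the global-phase identification to count the $(-1)$-eigenket as fixed. Your version is slightly more explicit that ``fixed up to phase'' is \emph{equivalent} to being an eigenvector (so nothing else can be fixed), a point the paper leaves implicit, but this is a presentational rather than a substantive difference.
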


This result is crucial in discovering and enumerating the winning strategies of Q in $U(2)$. Although, we might have hoped for more variety in discovering winning strategies, the result is not unexpected because the flip operator $F$ cannot fix more that two states. The next Theorem \ref{thr:Q's PQG Winning Strategies in U(2)} explains what exactly happens in terms of winning strategies when the $PQG$ takes place in $U(2)$.

\begin{theorem}[The ambient group of the $PQG$ is $U(2)$] \label{thr:Q's PQG Winning Strategies in U(2)}
	If $M_P = \{ I, F \}$ and $M_Q = U(2)$, i.e., the ambient group of the $PQG$ is $U(2)$, then the following hold.
	\begin{enumerate}
		\item	Q has exactly two classes of winning and dominant strategies, each containing infinite equivalent strategies:
				\begin{align} \label{eq:U(2) Winning Strategy Classes}
					\mathcal{C}_{+} = [ ( A_{1} (\theta_{1}), A_{2} (\theta_{2}) ) ] \quad \text{and} \quad \mathcal{C}_{-} = [ ( B_{1} (\theta_{3}), B_{2} (\theta_{4}) ) ] \ ,
				\end{align}
				where
				\begin{itemize}
					\item	$A_{1} (\theta_{1})$ is one of $H (\theta_{1}), R_{\frac{2 \pi}{8}} (\theta_{1}), S_{\frac{5 \pi}{8}} (\theta_{1})$ or $R_{\frac{10 \pi}{8}} (\theta_{1})$,
					\item	$A_{2} (\theta_{2})$ is one of $H (\theta_{2}), R_{\frac{14 \pi}{8}} (\theta_{2}), S_{\frac{5 \pi}{8}} (\theta_{2})$ or $R_{\frac{6 \pi}{8}} (\theta_{2})$,
					\item	$B_{1} (\theta_{3})$ is one of $S_{\frac{7 \pi}{8}} (\theta_{3}), R_{\frac{14 \pi}{8}} (\theta_{3}), S_{\frac{3 \pi}{8}} (\theta_{3})$ or $R_{\frac{6 \pi}{8}} (\theta_{3})$,
					\item	$B_{2} (\theta_{4})$ is one of $S_{\frac{7 \pi}{8}} (\theta_{4}), R_{\frac{2 \pi}{8}} (\theta_{4}), S_{\frac{3 \pi}{8}} (\theta_{4})$ or $R_{\frac{10 \pi}{8}} (\theta_{4})$, and
					\item	$\theta_{1}, \theta_{2}, \theta_{3}, \theta_{4}$ are possibly different real parameters.
				\end{itemize}
		\item	The winning state paths corresponding to $\mathcal{C}_{+}$ and $\mathcal{C}_{-}$ are
				\begin{align} \label{eq:U(2) Winning State Paths}
					\tau_{\mathcal{C}_{+}} = (\ket{0}, \ket{+}, \ket{0}) \quad \text{and} \quad \tau_{\mathcal{C}_{-}} = (\ket{0}, \ket{-}, \ket{0}) \ .
				\end{align}
		\item	Picard has no winning strategy.
	\end{enumerate}
\end{theorem}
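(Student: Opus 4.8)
The plan is to reduce everything to the two structural results already in hand, namely Theorem~\ref{thr:Characteristic Properties of Winning Strategies} and Theorem~\ref{thr:The Fixed Set of M_P in U(2)}, and then to run a one-line case analysis on the state of the coin after Q's first move. Let $(A_1, A_2)$, with $A_1, A_2 \in U(2)$, be an arbitrary winning strategy for Q. By~(\ref{eq:Characteristic Property II of Winning Strategies}) we have $A_1 \ket{0} \in Fix ( \{ F \} )$, and by Theorem~\ref{thr:The Fixed Set of M_P in U(2)} we know $Fix ( F ) = \{ \ket{+}, \ket{-} \}$; so, a priori $A_1\ket{0}$ could be any unit vector, but the winning hypothesis forces the coin after round $1$ to be in state $\ket{+}$ or state $\ket{-}$ (understood up to a global phase, since $e^{i\theta}\ket{\psi}$ and $\ket{\psi}$ represent the same state). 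Since $F$ fixes both $\ket{+}$ and $\ket{-}$, Picard's move — be it $I$ or $F$ — leaves this intermediate state untouched, exactly the scenario of Algorithm~\ref{alg:Q's PQG Winning Strategy}; and to win, Q's second move $A_2$ must return the coin to $\ket{0}$. Hence every winning strategy realizes one of the two state paths $(\ket{0}, \ket{+}, \ket{0})$ or $(\ket{0}, \ket{-}, \ket{0})$, which yields item~2 and shows there are \emph{at most} two equivalence classes of winning strategies.

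For item~1 I would establish the converse: any pair $(A_1, A_2)$ for which $A_1\ket{0}$ represents $\ket{+}$ (resp.\ $\ket{-}$) and $A_2$ maps that state to $\ket{0}$ is winning, because $F$ fixes $\ket{+}$ and $\ket{-}$ by Theorem~\ref{thr:The Fixed Set of M_P in U(2)}, so the coin follows $(\ket{0}, \ket{+}, \ket{0})$ (resp.\ $(\ket{0}, \ket{-}, \ket{0})$) whatever Picard does. Both classes are therefore non-empty, witnessed by $(H, H)$ and by $(S_{\frac{7 \pi}{8}}, S_{\frac{7 \pi}{8}})$ already used in the $D_8$ analysis, so there are \emph{exactly} two classes $\mathcal{C}_{+}$ and $\mathcal{C}_{-}$, with the state paths of item~2. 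To identify the operators listed in the statement as legitimate representatives it suffices, by a direct matrix computation using~(\ref{eq:2D Rotator}) and~(\ref{eq:2D Reflector}), to check that each of $H, R_{\frac{2\pi}{8}}, S_{\frac{5\pi}{8}}, R_{\frac{10\pi}{8}}$ sends $\ket{0}$ to a vector representing $\ket{+}$ and each of $H, R_{\frac{14\pi}{8}}, S_{\frac{5\pi}{8}}, R_{\frac{6\pi}{8}}$ sends $\ket{+}$ to a vector representing $\ket{0}$, and symmetrically for the $\ket{-}$ versions; prefixing any of these by a scalar $e^{i\theta}$ changes neither conclusion, and it is precisely this global-phase freedom $A \mapsto e^{i\theta}A$ that makes each class now \emph{infinite} rather than of size $16$.

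Item~3 is then immediate: item~1 exhibits an explicit winning strategy for Q, and by Definition~\ref{def:Winning and Dominant Strategies} a winning strategy for Picard would force Picard to win with probability $1.0$ against \emph{every} strategy of Q, contradicting the existence of a Q-strategy that itself wins with probability $1.0$; equivalently, one checks directly that against either of Picard's only two possible strategies $(I)$ and $(F)$ the Q-strategy $(H,H)$ drives the coin along $(\ket{0}, \ket{+}, \ket{0})$, so Q wins.

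The conceptual load here is small, since Theorems~\ref{thr:Characteristic Properties of Winning Strategies} and~\ref{thr:The Fixed Set of M_P in U(2)} already pin down the intermediate state; the step that needs genuine care is the classification/counting claim, where one must argue that replacing the finite group $D_{8}$ by $U(2)$ enlarges each equivalence class to an infinite set \emph{only} because of the phase freedom $A \mapsto e^{i\theta}A$ (note $A_1(\theta_1)\ket{0} = e^{i\theta_1}A_1\ket{0}$ and $A_2(\theta_2)$ then returns the coin to $\ket{0}$ up to phase, for arbitrary independent $\theta_1,\theta_2$), while creating \emph{no} new state paths, so that the partition into $\mathcal{C}_{+}$ and $\mathcal{C}_{-}$ is exactly the one from the $D_{8}$ and $D_{8n}$ cases. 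This is the place where the phase-invariance convention must be invoked consistently.
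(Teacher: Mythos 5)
Your argument is correct and follows the same skeleton as the paper's proof: both rest on Theorem~\ref{thr:Characteristic Properties of Winning Strategies} to force $A_1\ket{0}\in Fix(\{F\})$ and on Theorem~\ref{thr:The Fixed Set of M_P in U(2)} to conclude that the intermediate state is $\ket{+}$ or $\ket{-}$, which caps the number of classes at two; non-emptiness, the state paths, and Picard's lack of a winning strategy are handled the same way. The one real divergence is in how item~1 is justified. The paper goes further than you do: it writes a general $A_1\in U(2)$ as a $2\times 2$ matrix with orthonormal columns and uses the orthonormality relations to argue that any unitary sending $\ket{0}$ to a phase multiple of $\ket{+}$ is \emph{literally} $e^{i\theta_1}$ times one of the four listed matrices, i.e., it attempts to characterize the individual operators rather than just the state paths. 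You instead classify purely at the level of state paths and equivalence classes (Definitions~\ref{def:Equivalent Strategies} and~\ref{def:Strategy Classes}) and only verify that the listed operators are valid representatives. Since $\mathcal{C}_{+}$ and $\mathcal{C}_{-}$ are by definition equivalence classes of strategies inducing a common state path, your version fully establishes the statement as written, and it quietly sidesteps a weak spot in the paper's computation: the orthonormality relations determine the second column of $A_1$ only up to a phase $e^{i\varphi}$ that is a priori independent of $\theta_1$, and the paper resolves this by simply ``choosing'' $\varphi=\theta_1$. There are winning first moves in $U(2)$ (e.g., one with first column $\frac{1}{\sqrt{2}}(1,1)^{T}$ and second column $\frac{1}{\sqrt{2}}(i,-i)^{T}$) that are not global-phase multiples of any of the four listed matrices yet are equivalent, in the sense of Definition~\ref{def:Equivalent Strategies}, to $(H,H)$; your state-path formulation absorbs these correctly, whereas the paper's operator-level enumeration does not.
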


This final conclusion is illuminating. Although there are infinite winning strategies, they are equivalent to the strategies we determined when we investigated what happens in $D_{8}$. In this perspective nothing is really gained by enabling Q to pick moves from $U(2)$. In a certain sense the spirit of the game is completely captured when it is realized in $D_{8}$. The next Table \ref{tbl:Q's Equivalence Classes of Winning Strtegies for the PQG in U_2} contains the complete results about Q's classes of winning strategies whether the ambient group belongs to the family of dihedral groups $D_{8 n}$ or is the entire $U(2)$.

\begin{table}[H]
	\centering
	\caption{The two classes of winning strategies for Q in the $PQG$ when th game is played in a dihedral group $D_{8 n}, n \geq 1$ and when the game is played in the largest possible group $U(2)$.}
	\label{tbl:Q's Equivalence Classes of Winning Strtegies for the PQG in U_2}
	\renewcommand{\arraystretch}{2.0}
	\begin{tabular}{ l !{\vrule width 1.25 pt} c | c | c | c }
		\Xhline{4\arrayrulewidth}
		\multicolumn{5}{c}{The ambient group is $D_{8 n}, n \geq 1$}
		\\
		\Xhline{\arrayrulewidth}
		&
		Initial state
		&
		Round $1$
		&
		Round $2$
		&
		Round $3$
		\\
		\Xhline{3\arrayrulewidth}
		$(H, H), (R_{\frac{2 \pi}{8}}, R_{\frac{14 \pi}{8}}), (S_{\frac{5 \pi}{8}}, S_{\frac{5 \pi}{8}}), \dots$
		&
		$\ket{0}$
		&
		\cellcolor{GreenLighter2!20} $\ket{+}$
		&
		\cellcolor{GreenLighter2!20} $\ket{+}$
		&
		$\ket{0}$
		\\
		\Xhline{\arrayrulewidth}
		$(S_{\frac{7 \pi}{8}}, S_{\frac{7 \pi}{8}}), (R_{\frac{14 \pi}{8}}, R_{\frac{2 \pi}{8}}), (S_{\frac{3 \pi}{8}}, S_{\frac{3 \pi}{8}}), \dots$
		&
		$\ket{0}$
		&
		\cellcolor{RedPurple!15} $\ket{-}$
		&
		\cellcolor{RedPurple!15} $\ket{-}$
		&
		$\ket{0}$
		\\
		\Xhline{4\arrayrulewidth}
		\multicolumn{5}{c}{The ambient group is $U(2)$ ($\theta \in \mathbb{R}$)}
		
		\\
		\Xhline{\arrayrulewidth}
		&
		Initial state
		&
		Round $1$
		&
		Round $2$
		&
		Round $3$
		\\
		\Xhline{3\arrayrulewidth}
		$(H (\theta_{1}), H (\theta_{2})), (R_{\frac{2 \pi}{8}} (\theta_{1}), R_{\frac{14 \pi}{8}} (\theta_{2})), \dots$
		&
		$\ket{0}$
		&
		\cellcolor{GreenLighter2!20} $\ket{+}$
		&
		\cellcolor{GreenLighter2!20} $\ket{+}$
		&
		$\ket{0}$
		\\
		\Xhline{\arrayrulewidth}
		$(S_{\frac{7 \pi}{8}} (\theta_{3}), S_{\frac{7 \pi}{8}} (\theta_{4})), (R_{\frac{14 \pi}{8}} (\theta_{3}), R_{\frac{2 \pi}{8}} (\theta_{4})), \dots$
		&
		$\ket{0}$
		&
		\cellcolor{RedPurple!15} $\ket{-}$
		&
		\cellcolor{RedPurple!15} $\ket{-}$
		&
		$\ket{0}$
		\\
		\Xhline{4\arrayrulewidth}
	\end{tabular}
	\renewcommand{\arraystretch}{1.0}
\end{table}

In $U(2)$ the strategy classes contain infinite many strategies, but all these strategies are equivalent to the strategies encountered before.

\section{Extending the game} \label{sec:Extending the PQG}

The original $PQG$ can be extended in numerous ways. In each conceivable extension, the precise formulation of the rules of the game is of paramount importance. By drastically changing the rules it is even possible for Picard to win the game. This was accomplished in \cite{Anand2015} where the authors exploited entanglement in a clever way, so that whether the system ends up in a maximally entangled or separable state determines the outcome. In \cite{Andronikos2018}, it was shown that all possible finite extensions of the $PQG$ can be expressed in terms of simple finite automata, provided that the allowable moves of Picard are either $I$ or $F$ and Q always uses the Hadamard transform $H$. In this paper, our investigation focused on the enlargement of the operational space of the game. Therefore, as we consider possible extensions of the original $PQG$, we adhere to the assumptions (\ref{eq:The Moves of Picard}) and (\ref{eq:The Moves of Q in U(2)}), i.e., $M_P = \{ I, F \}$ and $M_Q = U(2)$. Additionally, we suppose that:

\begin{itemize}
	\item	at the start of the game the coin is in a predefined basis state, which we call the \emph{initial} state and designate by $\ket{q_{0}}$,
	\item	Picard and Q alternate turns acting on the coin following a specified order, and
	\item	when the game ends, the coin is measured in the computational basis; if it is found in state $\ket{q_{P}}$ Picard wins, whereas if it is found in $\ket{q_{Q}}$ then Q wins.
\end{itemize}

It is convenient to refer to $\ket{q_{P}}$ as Picard's \emph{target} state and to $\ket{q_{Q}}$ as Q's \emph{target} state. Obviously, in a zero-sum game the target states $\ket{q_{P}}$ and $\ket{q_{Q}}$ are different. Furthermore, since both Picard and Q draw their moves from groups, nothing is lost in terms of generality if we agree that neither of them is allowed to make consecutive moves. Two or more successive moves by Q can be composed to give just one equivalent move, and the same holds for Picard.

\begin{definition} [Extended games between Picard \& Q] \label{def:Extended PQGs}
	An $n$-round game, $n \geq 2$, is a function that associates one of the players, i.e., Picard or Q, to every round of the game. An $n$-round game is conveniently represented as a sequence of length $n$ from the alphabet $\{ P, Q \}$, where the letters P and Q stand for Picard and Q, respectively. According to our previous remark, if $( K_{1}, K_{2}, \dots, K_{n} )$ is an $n$-round game, then $K_{i} \neq K_{i + 1}$, $1 \leq i < n$.
\end{definition}

Definition \ref{def:Winning and Dominant Strategies} is general enough to also hold for extended games.

We state the next important Theorem \ref{thr:Picard Lacks a Winning Strategy}, which confirms our suspicion that Picard cannot win any such game with probability $1.0$. This negative result implies that for the type of extended games we consider, Picard is at a permanent disadvantage.

\begin{theorem}[Picard lacks a winning strategy] \label{thr:Picard Lacks a Winning Strategy}
	Picard does not have a winning strategy in any $n$-round game, $n \geq 2$, as long as Q makes at least one move.
\end{theorem}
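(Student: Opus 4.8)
The plan is to argue by contradiction, exploiting the single simple fact that one move drawn from $U(2)$ suffices to steer the coin from its current position to \emph{any} desired state, irrespective of everything that happens before and after that move. Suppose, towards a contradiction, that Picard has a winning strategy $\sigma_P$ in some $n$-round game $(K_{1}, \dots, K_{n})$ with $n \geq 2$. By Definition \ref{def:Extended PQGs} no two consecutive rounds are assigned to the same player, so for $n \geq 2$ there is at least one round $j$ with $K_{j} = Q$; fix such a $j$. I now construct a specific strategy $\sigma_Q$ for Q: on every $Q$-round other than $j$, let Q play the identity $I$ (any fixed choice would do), and leave the move in round $j$ to be chosen later. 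Together with $\sigma_P$, this already determines two fixed unitary operators, namely $U$, the composition of all operators applied during rounds $1, \dots, j-1$, and $V$, the composition of all operators applied during rounds $j+1, \dots, n$. Writing $\ket{\psi} = U \ket{q_0}$ for the state of the coin just before round $j$, the pre-measurement state of the coin, when Q plays the operator $A \in U(2)$ in round $j$, equals $V A \ket{\psi}$.

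Next I would use the elementary fact that $U(2)$ acts transitively on unit vectors of $\mathcal{H}_2$ (up to a global phase, which is physically irrelevant, as recalled before Definition \ref{def:One-Parameter Operator Families}): since $\ket{\psi}$ and $V^{-1} \ket{q_Q}$ are both unit vectors, there is an operator $A \in U(2)$ with $A \ket{\psi} = V^{-1} \ket{q_Q}$, hence $V A \ket{\psi} = \ket{q_Q}$, Q's target state. Let $\sigma_Q$ use precisely this $A$ in round $j$. Then against $\sigma_Q$ the coin ends in the state $\ket{q_Q}$, which is the computational basis state different from Picard's target $\ket{q_P}$; measuring in the computational basis therefore returns $\ket{q_Q}$ with probability $1.0$, so Picard wins with probability $0$. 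This contradicts the assumption that $\sigma_P$ is a winning strategy for Picard, which by Definition \ref{def:Winning and Dominant Strategies} would require Picard to win with probability $1.0$ against \emph{every} strategy of Q. Hence no winning strategy for Picard exists in any such game, which is the claim.

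I expect no genuine obstacle here; the work is essentially bookkeeping. One must compose the operators of rounds $1, \dots, j-1$ and $j+1, \dots, n$ in the correct order, and note that assigning $I$ to Q's remaining rounds is legitimate because the turn order $(K_{1}, \dots, K_{n})$ is fixed in advance, so choosing $I$ there never merges two of Q's rounds. The single mathematical ingredient is that the unitary group acts transitively on the rays of $\mathcal{H}_2$, combined with the observation that the two possible measurement outcomes are exactly $\ket{q_P}$ and $\ket{q_Q}$, so forcing the coin onto $\ket{q_Q}$ leaves Picard with zero probability of winning. (This proof does not show that Q \emph{always} wins — that is false in general, and is exactly why the later results require Q to make both the first and the last move — but it does show Picard can never guarantee a win once Q moves at all.)
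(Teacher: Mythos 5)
Your proof is correct, but it takes a genuinely different route from the paper's. The paper proves the theorem by an explicit case analysis over the four possible turn orders $(Q,P,\dots,Q,P)$, $(P,Q,\dots,P,Q)$, $(Q,P,\dots,Q)$ and $(P,\dots,P)$, and in each case exhibits the \emph{single, uniform} strategy $(H, I, \dots, I)$ (or $(H,I,\dots,I,H)$ when Q also plays last): Q's one Hadamard parks the coin on $\ket{+}$ or $\ket{-}$, which lies in $Fix(\{I,F\})$, so no subsequent move of Picard can dislodge it and Picard's winning probability is pinned at $0.5$ (or $0.0$). You instead argue by contradiction against an arbitrary fixed $\sigma_P$, freeze Q's other moves to $I$, and use transitivity of $U(2)$ on rays to choose the one remaining move $A$ so that $VA\ket{\psi} = \ket{q_Q}$; this is logically sound, since refuting a winning strategy for Picard only requires, for each $\sigma_P$, \emph{some} counter $\sigma_Q$ (the quantifiers are $\forall \sigma_P\, \exists \sigma_Q$), and strategies in this paper are non-adaptive sequences, so $U$ and $V$ really are determined once $\sigma_P$ is fixed. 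The trade-off: your argument is shorter, avoids the case split, and drives Picard's winning probability all the way to $0$ rather than $0.5$; but your counter-strategy must be tailored to $\sigma_P$, whereas the paper's $(H,I,\dots,I)$ is a single strategy that simultaneously denies certainty to \emph{every} $\sigma_P$ — a slightly stronger game-theoretic statement that also reuses the fixed-set machinery central to the rest of the paper.
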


The next couple of theorems explain in which games Q is unable to formulate a winning strategy. First, rather predictably, if Picard is given the opportunity to act last on the coin, then Q cannot surely win.

\begin{theorem}[Q lacks a winning strategy when Picard plays last] \label{thr:Q Lacks a Winning Strategy when Picard Plays Last}
	Q does not have a winning strategy in any $n$-round game, $n \geq 2$, in which Picard makes the last move.
\end{theorem}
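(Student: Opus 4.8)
The plan is to argue by contradiction, isolating the single state of the coin that sits immediately before Picard's terminal move and observing that Picard's two available moves, $I$ and $F$, cannot both leave that state looking like Q's target.

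First I would suppose that Q has a winning strategy $\sigma_{Q}$ in some $n$-round game $(K_{1}, \dots, K_{n})$, $n \geq 2$, with $K_{n} = P$. Since $K_{i} \neq K_{i+1}$, the penultimate round belongs to Q, so $\sigma_{Q}$ does prescribe moves; I would then freeze one particular line of play for Picard, namely the one in which he applies $I$ in every round before the last. Together with $\sigma_{Q}$ this determines a definite state $\ket{\psi}$ of the coin just before Picard's final move. This remains well defined even if strategies are allowed to be adaptive, because Picard's last move comes after every move of Q, so nothing Q does can depend on it.

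Next I would invoke the winning hypothesis against this fixed line of play: whichever of $I$ or $F$ Picard picks in round $n$, the coin must end in Q's target state $\ket{q_{Q}}$. Picard's choice $I$ forces $\ket{\psi}$ itself to represent $\ket{q_{Q}}$, while Picard's choice $F$ forces $F \ket{\psi}$ to represent $\ket{q_{Q}}$, hence to represent $\ket{\psi}$; therefore $\ket{\psi} \in Fix(F)$. By Theorem \ref{thr:The Fixed Set of M_P in U(2)} this puts $\ket{\psi}$ among $\{ \ket{+}, \ket{-} \}$. On the other hand the game finishes with a measurement in the computational basis $B$ and Q wins only on one of its outcomes, so $\ket{q_{Q}} \in \{ \ket{0}, \ket{1} \}$. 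Since $\ket{\psi}$ represents $\ket{q_{Q}}$, the sets $\{ \ket{0}, \ket{1} \}$ and $\{ \ket{+}, \ket{-} \}$ would share a common element --- impossible, as these kets are not equal even up to a global phase. That contradiction finishes the argument. One can also bypass Theorem \ref{thr:The Fixed Set of M_P in U(2)} entirely: $F$ swaps the two basis kets, so if $\ket{\psi}$ represents the basis state $\ket{q_{Q}}$ then $F\ket{\psi}$ represents the \emph{other} basis state $\ket{q_{P}} \neq \ket{q_{Q}}$, directly contradicting the requirement that $F\ket{\psi}$ represent $\ket{q_{Q}}$.

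The one place to be careful is the treatment of global phases: ``Q wins with probability $1.0$'' must be read as ``the final ket agrees with $\ket{q_{Q}}$ up to a factor $e^{i\theta}$'', and the disjointness of $\{ \ket{0}, \ket{1} \}$ from $\{ \ket{+}, \ket{-} \}$ must be read in that same projective sense. Beyond this there is no real obstacle, since the whole argument localizes at the single state $\ket{\psi}$: the length of the game, the ordering of the earlier rounds, and the internal workings of $\sigma_{Q}$ play no role except to guarantee that $\ket{\psi}$ is well defined.
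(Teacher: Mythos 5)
Your proposal is correct and follows essentially the same route as the paper: fix Picard's earlier moves to $I$, look at the state just before his final move, and observe that the winning hypothesis against both $(I,\dots,I,I)$ and $(I,\dots,I,F)$ forces that state to be simultaneously $\ket{q_{Q}}$ and the opposite basis state (equivalently, a basis state lying in $Fix(F)=\{\ket{+},\ket{-}\}$), which is impossible. Your unified treatment merely collapses the paper's two cases (game starting with Q versus with P) into one, and your remarks on adaptivity and global phase are harmless refinements of the same argument.
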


Symmetrically, it is also true that Q is unable to devise a winning strategy when Picard playes first, as the next Theorem \ref{thr:Q Lacks a Winning Strategy when Picard Plays First} asserts.

\begin{theorem}[Q lacks a winning strategy when Picard plays first] \label{thr:Q Lacks a Winning Strategy when Picard Plays First}
	Q does not have a winning strategy in any $n$-round game, $n \geq 2$, in which Picard makes the first move.
\end{theorem}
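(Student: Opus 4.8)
The plan is to leverage the non-adaptive nature of strategies: by Definition~\ref{def:Winning and Dominant Strategies} a strategy is a fixed sequence of moves, so a winning strategy for Q would have to succeed against \emph{every} strategy of Picard simultaneously. Hence it suffices, given an arbitrary strategy of Q, to produce two concrete strategies of Picard against which Q cannot both win. So let the game be $(K_1, \dots, K_n)$ with $K_1 = P$, let $A_1, \dots, A_q$ be an arbitrary strategy of Q listed in the order the moves are played (with $q \geq 1$, since $n \geq 2$ and $K_1 = P$ force $K_2 = Q$), and put $W = A_q \cdots A_1 \in U(2)$. I would pit this against Picard's strategy $\sigma_P^{0}$ consisting entirely of $I$'s, and against Picard's strategy $\sigma_P^{1}$ whose first move is $F$ and all of whose subsequent moves are $I$.

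The key computation is that, because Picard moves first and all of his later moves are the identity, only his first move survives in the overall composition: running $\sigma_P^{0}$ against $(A_1, \dots, A_q)$ sends the coin to the final state $W\ket{q_0}$, whereas running $\sigma_P^{1}$ against $(A_1, \dots, A_q)$ sends it to $W F \ket{q_0}$. This holds verbatim whether the game ends on a move of Q or of Picard; in the latter case the trailing Picard move is $I$ under both $\sigma_P^{0}$ and $\sigma_P^{1}$, so it contributes nothing, and in any event that case is already subsumed by Theorem~\ref{thr:Q Lacks a Winning Strategy when Picard Plays Last}.

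The conclusion then follows from one line of linear algebra. Since $\ket{q_0}$ is a computational basis state, $\ket{q_0}$ and $F\ket{q_0}$ are the two \emph{orthogonal} basis vectors, so the unitary $W$ carries them to orthogonal vectors $W\ket{q_0}$ and $W F \ket{q_0}$; in particular these are not scalar multiples of one another. Q wins with probability $1.0$ precisely when the final state equals $e^{i\alpha}\ket{q_Q}$ for some $\alpha \in \mathbb{R}$, and at most one of $W\ket{q_0}$, $W F \ket{q_0}$ can have this form. Thus against at least one of $\sigma_P^{0}$, $\sigma_P^{1}$ the coin does not end up in Q's target state $\ket{q_Q}$, so $(A_1, \dots, A_q)$ is not a winning strategy; as it was arbitrary, Q has no winning strategy.

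I do not anticipate a genuine obstacle here — the argument uses only that strategies are non-adaptive (legitimising Picard's two probing strategies) and that $F$ fixes no basis state even up to a global phase, equivalently that its only fixed rays are those of $\ket{+}$ and $\ket{-}$ (cf.\ Theorem~\ref{thr:The Fixed Set of M_P in U(2)}). The single point requiring care is the bookkeeping of the interleaved composition order, which is why I would fix the round notation before evaluating the two final states.
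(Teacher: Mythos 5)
Your proof is correct and follows essentially the same approach as the paper: pit Q's fixed (non-adaptive) strategy against two Picard strategies that differ in a single move ($I$ versus $F$), then use the injectivity of the unitary composition to conclude that the two resulting final states cannot both be Q's target state. The only cosmetic difference is that you vary Picard's \emph{first} move in all cases and thereby handle even and odd $n$ uniformly, whereas the paper's proof treats the odd case by varying Picard's last move instead (reusing the argument of Theorem~\ref{thr:Q Lacks a Winning Strategy when Picard Plays Last}).
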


Theorems \ref{thr:Q Lacks a Winning Strategy when Picard Plays Last} and \ref{thr:Q Lacks a Winning Strategy when Picard Plays First} establish that Q cannot surely win in any game where Picard makes the first or the last move. Figures \ref{fig:No Winning Strategy if Picard Has the Last Move} and \ref{fig:No Winning Strategy if Picard Has the First Move} provide a visual explanation of the validity of these two theorems.

\begin{figure}[H]
	\centering
	\begin{tikzpicture} [scale = 1.4]
		\begin{yquant}[operators/every barrier/.append style={red, thick, shorten <= -3mm, shorten >= -3mm}]
			qubit {$\ket{q_{0}}$} GAME;
			hspace {0.3 cm} GAME;
			[ name = Q1 ]
			[ fill = RedPurple!15 ]
			box {$A_{1}$} GAME;
			%
			%
			hspace {0.3 cm} GAME;
			[ name = P1 ]
			[ fill = GreenLighter2!20 ]
			box {$I$} GAME;
			%
			%
			hspace {0.3 cm} GAME;
			[draw=none]
			box {$\dots$} GAME;
			hspace {0.3 cm} GAME;
			[ name = Q2 ]
			[ fill = RedPurple!15 ]
			box {$A_{\frac{n}{2}}$} GAME;
			%
			%
			hspace {0.3 cm} GAME;
			[ name = P2 ]
			[ fill = GreenLighter2!20 ]
			box {$I$} GAME;
			%
			%
			hspace {0.3 cm} GAME;
			[ fill = WordBlueDark!50 ]
			measure GAME;
			hspace {0.3 cm} GAME;
			output {$\ket{?}$} GAME;
			\node [ above = 0.3 cm ] at (Q1) {Q};
			\node [ above = 0.37 cm ] at (P1) {P};
			\node [ above = 0.3 cm ] at (Q2) {Q};
			\node [ above = 0.37 cm ] at (P2) {P};
		\end{yquant}
	\end{tikzpicture}
	\\
	\vspace{0.7 cm}
	\begin{tikzpicture} [scale = 1.4]
		\begin{yquant}[operators/every barrier/.append style={red, thick, shorten <= -3mm, shorten >= -3mm}]
			qubit {$\ket{q_{0}}$} GAME;
			hspace {0.3 cm} GAME;
			[ name = Q1 ]
			[ fill = RedPurple!15 ]
			box {$A_{1}$} GAME;
			%
			%
			hspace {0.3 cm} GAME;
			[ name = P1 ]
			[ fill = GreenLighter2!20 ]
			box {$I$} GAME;
			%
			%
			hspace {0.3 cm} GAME;
			[draw=none]
			box {$\dots$} GAME;
			hspace {0.3 cm} GAME;
			[ name = Q2 ]
			[ fill = RedPurple!15 ]
			box {$A_{\frac{n}{2}}$} GAME;
			%
			%
			hspace {0.3 cm} GAME;
			[ name = P2 ]
			[ fill = GreenLighter2!20 ]
			box {$F$} GAME;
			%
			%
			hspace {0.3 cm} GAME;
			[ fill = WordBlueDark!50 ]
			measure GAME;
			hspace {0.3 cm} GAME;
			output {$\ket{?}$} GAME;
			\node [ above = 0.3 cm ] at (Q1) {Q};
			\node [ above = 0.37 cm ] at (P1) {P};
			\node [ above = 0.3 cm ] at (Q2) {Q};
			\node [ above = 0.37 cm ] at (P2) {P};
		\end{yquant}
	\end{tikzpicture}
	\caption{To intuitively understand why Q cannot have a winning strategy if Picard plays last, it suffices to consider two of Picard's strategies: $\sigma_{P} = ( I, \dots, I, I )$ and $\sigma'_{P} = ( I, \dots, I, F )$. It is impossible for any single strategy $\sigma_{Q} = ( A_{1}, \dots, A_{\frac{n}{2}} )$ of Q to win with probability $1.0$ against both of them.}
	\label{fig:No Winning Strategy if Picard Has the Last Move}
\end{figure}
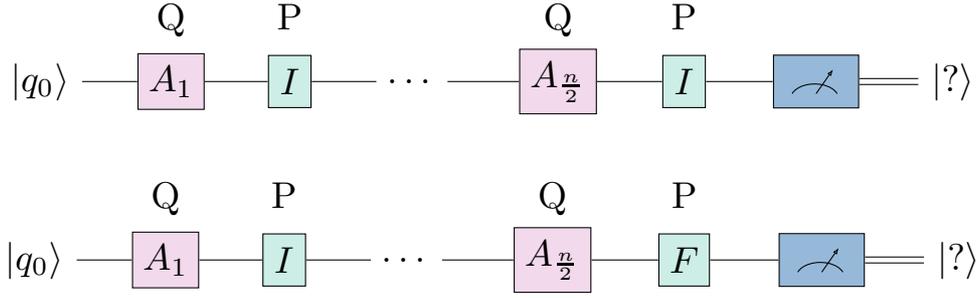

\begin{figure}[H]
	\centering
	\begin{tikzpicture} [scale = 1.4]
		\begin{yquant}[operators/every barrier/.append style={red, thick, shorten <= -3mm, shorten >= -3mm}]
			qubit {$\ket{q_{0}}$} GAME;
			hspace {0.3 cm} GAME;
			[ name = P1 ]
			[ fill = GreenLighter2!20 ]
			box {$I$} GAME;
			%
			%
			hspace {0.3 cm} GAME;
			[ name = Q1 ]
			[ fill = RedPurple!15 ]
			box {$A_{1}$} GAME;
			%
			%
			hspace {0.3 cm} GAME;
			[draw=none]
			box {$\dots$} GAME;
			hspace {0.3 cm} GAME;
			[ name = P2 ]
			[ fill = GreenLighter2!20 ]
			box {$I$} GAME;
			%
			%
			hspace {0.3 cm} GAME;
			[ name = Q2 ]
			[ fill = RedPurple!15 ]
			box {$A_{\frac{n}{2}}$} GAME;
			%
			%
			hspace {0.3 cm} GAME;
			[ fill = WordBlueDark!50 ]
			measure GAME;
			hspace {0.3 cm} GAME;
			output {$\ket{?}$} GAME;
			\node [ above = 0.3 cm ] at (Q1) {Q};
			\node [ above = 0.37 cm ] at (P1) {P};
			\node [ above = 0.3 cm ] at (Q2) {Q};
			\node [ above = 0.37 cm ] at (P2) {P};
		\end{yquant}
	\end{tikzpicture}
	\\
	\vspace{0.7 cm}
	\begin{tikzpicture} [scale = 1.4]
		\begin{yquant}[operators/every barrier/.append style={red, thick, shorten <= -3mm, shorten >= -3mm}]
			qubit {$\ket{q_{0}}$} GAME;
			hspace {0.3 cm} GAME;
			[ name = P1 ]
			[ fill = GreenLighter2!20 ]
			box {$F$} GAME;
			%
			%
			hspace {0.3 cm} GAME;
			[ name = Q1 ]
			[ fill = RedPurple!15 ]
			box {$A_{1}$} GAME;
			%
			%
			hspace {0.3 cm} GAME;
			[draw=none]
			box {$\dots$} GAME;
			hspace {0.3 cm} GAME;
			[ name = P2 ]
			[ fill = GreenLighter2!20 ]
			box {$I$} GAME;
			%
			%
			hspace {0.3 cm} GAME;
			[ name = Q2 ]
			[ fill = RedPurple!15 ]
			box {$A_{\frac{n}{2}}$} GAME;
			%
			%
			hspace {0.3 cm} GAME;
			[ fill = WordBlueDark!50 ]
			measure GAME;
			hspace {0.3 cm} GAME;
			output {$\ket{?}$} GAME;
			\node [ above = 0.3 cm ] at (Q1) {Q};
			\node [ above = 0.37 cm ] at (P1) {P};
			\node [ above = 0.3 cm ] at (Q2) {Q};
			\node [ above = 0.37 cm ] at (P2) {P};
		\end{yquant}
	\end{tikzpicture}
	\caption{To see why Q does not have a winning strategy if Picard plays first, it suffices to consider two of Picard's strategies: $\sigma_{P} = ( I, \dots, I, I )$ and $\sigma'_{P} = ( F, \dots, I, I )$. It is impossible for any single strategy $\sigma_{Q} = ( A_{1}, \dots, A_{\frac{n}{2}} )$ of Q to win with probability $1.0$ against both of them.}
	\label{fig:No Winning Strategy if Picard Has the First Move}
\end{figure}
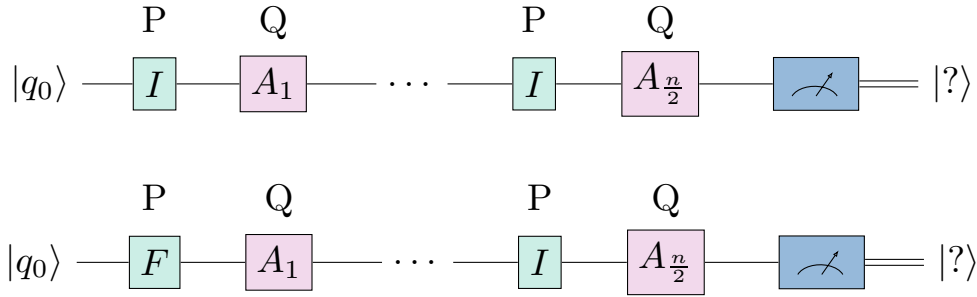

The picture is completed by the next Theorem \ref{thr:When Q Has a Winning Strategy} which asserts that Q has a winning strategy if and only if Q makes the first and the last move. This result clarifies that the overwhelmingly larger repertoire of moves of the quantum player by itself is not enough. It has to be combined with the advantage of making both the first and the last move in order to guarantee that the quantum player will surely win.

\begin{theorem}[When Q possesses a winning strategy] \label{thr:When Q Has a Winning Strategy}
	In any $n$-round game, $n \geq 2$, Q has a winning strategy iff Q makes the first and the last move.
\end{theorem}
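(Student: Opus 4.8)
The plan is to prove the biconditional by splitting it into its two implications. The forward (``only if'') direction will be immediate from the two preceding impossibility results, while the backward (``if'') direction will be a direct generalization of the construction encoded in Algorithm \ref{alg:Q's PQG Winning Strategy}.

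For ``only if'', I would argue contrapositively. Fix an $n$-round game $(K_1, \dots, K_n)$, $n \geq 2$, in which Q does not make both the first and the last move. If $K_1 \neq Q$, then $K_1 = P$, so Picard moves first and Theorem \ref{thr:Q Lacks a Winning Strategy when Picard Plays First} shows Q has no winning strategy; if instead $K_n \neq Q$, then Picard moves last and Theorem \ref{thr:Q Lacks a Winning Strategy when Picard Plays Last} shows Q has no winning strategy. Hence a winning strategy for Q forces $K_1 = K_n = Q$.

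For ``if'', assume $K_1 = K_n = Q$. Since, by Definition \ref{def:Extended PQGs}, no player moves twice in succession, the round assignment is forced to be the strictly alternating word $Q, P, Q, P, \dots, P, Q$; in particular $n$ is odd, $n = 2m+1$ with $m \geq 1$, Q acting in rounds $1, 3, \dots, 2m+1$ and Picard in rounds $2, 4, \dots, 2m$. I would then have Q follow Algorithm \ref{alg:Q's PQG Winning Strategy}. Let $\ket{q_0}$ be the initial (basis) state and $\ket{q_Q}$ be Q's target (basis) state. Choose $A_1 \in U(2)$ with $A_1 \ket{q_0} = \ket{+}$ (possible since both are unit vectors; e.g.\ $A_1 = H$ when $\ket{q_0} = \ket{0}$ and $A_1 = H F$ when $\ket{q_0} = \ket{1}$), choose $A_{2m+1} \in U(2)$ with $A_{2m+1} \ket{+} = \ket{q_Q}$, and let Q play $A_1$ in round $1$, the identity $I$ in every intermediate round $3, 5, \dots, 2m-1$, and $A_{2m+1}$ in round $2m+1$. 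By Theorem \ref{thr:The Fixed Set of M_P in U(2)}, $\ket{+} \in Fix(\{I, F\})$, i.e.\ $I \ket{+} = F \ket{+} = \ket{+}$; hence after round $1$ the coin is in $\ket{+}$ and remains there through every Picard move and every intermediate identity move of Q, so after round $2m+1$ it is in $\ket{q_Q}$ no matter what Picard does. As this works against every $\sigma_P$, it is a winning strategy for Q (for $m = 1$ this recovers exactly the original $PQG$).

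I do not expect a genuine obstacle here: the statement is essentially a synthesis of Theorems \ref{thr:Q Lacks a Winning Strategy when Picard Plays First}, \ref{thr:Q Lacks a Winning Strategy when Picard Plays Last}, and \ref{thr:The Fixed Set of M_P in U(2)}. The two points needing mild care are structural: that $K_1 = K_n = Q$ together with the no-consecutive-moves convention really does force the pattern $Q P Q \cdots Q$ (so that Q has a legal move before each of Picard's moves and the intermediate identity moves make sense), and that the complete neutralization of Picard in the middle of the game hinges solely on $\ket{+}$ being a common fixed point of $I$ and $F$, which Theorem \ref{thr:The Fixed Set of M_P in U(2)} guarantees.
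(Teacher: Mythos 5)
Your proof is correct and follows essentially the same route as the paper: the ``only if'' direction is read off from Theorems \ref{thr:Q Lacks a Winning Strategy when Picard Plays First} and \ref{thr:Q Lacks a Winning Strategy when Picard Plays Last}, and the ``if'' direction uses the strategy ``send the coin into $Fix(\{I,F\})$, idle, then map to the target,'' justified by Theorem \ref{thr:The Fixed Set of M_P in U(2)}. The only (inessential) difference is that you normalize the intermediate state to always be $\ket{+}$ by taking $A_1 = HF$ when $\ket{q_0} = \ket{1}$, whereas the paper plays $H$ unconditionally and lets the parked state be $\ket{+}$ or $\ket{-}$ accordingly.
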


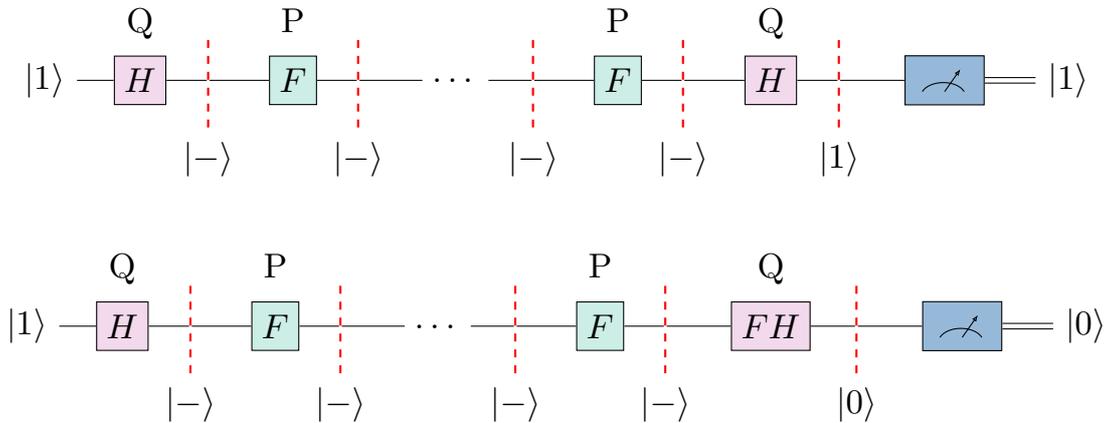
\begin{figure}[H]
	\centering
	\begin{tikzpicture} [scale = 1.3]
		\begin{yquant}[operators/every barrier/.append style={red, thick, shorten <= -3mm, shorten >= -3mm}]
			qubit {$\ket{1}$} GAME;
			hspace {0.2 cm} GAME;
			[ name = Q1 ]
			[ fill = RedPurple!15 ]
			box {$H$} GAME;
			[ name = B1 ]
			barrier GAME;
			hspace {0.2 cm} GAME;
			[ name = P1 ]
			[ fill = GreenLighter2!20 ]
			box {$F$} GAME;
			[ name = B2 ]
			barrier GAME;
			hspace {0.2 cm} GAME;
			[ draw = none ]
			box {$\dots$} GAME;
			[ name = B3 ]
			barrier GAME;
			hspace {0.2 cm} GAME;
			[ name = P2 ]
			[ fill = GreenLighter2!20 ]
			box {$F$} GAME;
			[ name = B4 ]
			barrier GAME;
			hspace {0.2 cm} GAME;
			[ name = Q2 ]
			[ fill = RedPurple!15 ]
			box {$H$} GAME;
			[ name = B5 ]
			barrier GAME;
			hspace {0.2 cm} GAME;
			[ fill = WordBlueDark!50 ]
			measure GAME;
			hspace {0.3 cm} GAME;
			output {$\ket{1}$} GAME;
			\node [ above = 0.3 cm ] at (Q1) {Q};
			\node [ below = 0.5 cm ] at (B1) {$\ket{-}$};
			\node [ above = 0.37 cm ] at (P1) {P};
			\node [ below = 0.5 cm ] at (B2) {$\ket{-}$};
			\node [ below = 0.5 cm ] at (B3) {$\ket{-}$};
			\node [ above = 0.37 cm ] at (P2) {P};
			\node [ below = 0.5 cm ] at (B4) {$\ket{-}$};
			\node [ above = 0.3 cm ] at (Q2) {Q};
			\node [ below = 0.5 cm ] at (B5) {$\ket{1}$};
		\end{yquant}
	\end{tikzpicture}
	\\
	\vspace{0.7 cm}
	\begin{tikzpicture} [scale = 1.3]
		\begin{yquant}[operators/every barrier/.append style={red, thick, shorten <= -3mm, shorten >= -3mm}]
			qubit {$\ket{1}$} GAME;
			hspace {0.2 cm} GAME;
			[ name = Q1 ]
			[ fill = RedPurple!15 ]
			box {$H$} GAME;
			[ name = B1 ]
			barrier GAME;
			hspace {0.2 cm} GAME;
			[ name = P1 ]
			[ fill = GreenLighter2!20 ]
			box {$F$} GAME;
			[ name = B2 ]
			barrier GAME;
			hspace {0.2 cm} GAME;
			[ draw = none ]
			box {$\dots$} GAME;
			[ name = B3 ]
			barrier GAME;
			hspace {0.2 cm} GAME;
			[ name = P2 ]
			[ fill = GreenLighter2!20 ]
			box {$F$} GAME;
			[ name = B4 ]
			barrier GAME;
			hspace {0.2 cm} GAME;
			[ name = Q2 ]
			[ fill = RedPurple!15 ]
			box {$F H$} GAME;
			[ name = B5 ]
			barrier GAME;
			hspace {0.2 cm} GAME;
			[ fill = WordBlueDark!50 ]
			measure GAME;
			hspace {0.3 cm} GAME;
			output {$\ket{0}$} GAME;
			\node [ above = 0.3 cm ] at (Q1) {Q};
			\node [ below = 0.5 cm ] at (B1) {$\ket{-}$};
			\node [ above = 0.37 cm ] at (P1) {P};
			\node [ below = 0.5 cm ] at (B2) {$\ket{-}$};
			\node [ below = 0.5 cm ] at (B3) {$\ket{-}$};
			\node [ above = 0.37 cm ] at (P2) {P};
			\node [ below = 0.5 cm ] at (B4) {$\ket{-}$};
			\node [ above = 0.3 cm ] at (Q2) {Q};
			\node [ below = 0.5 cm ] at (B5) {$\ket{0}$};
		\end{yquant}
	\end{tikzpicture}
	\caption{This figure depicts two winning strategies for Q two an $n$-round games where Q makes the first and the last move. In the first game the initial state of the coin and the target state for Q are both the same, i.e., $\ket{1}$. In the second game the initial state of the coin is also $\ket{1}$, but the target state for Q now is $\ket{0}$.}
	\label{fig:Winning Strategies for Q}
\end{figure}

One last conclusion we may draw from the above theorems is about the initial state of the coin and the target states of the players. In the original $PQG$ the both the initial state and Q's target state were the same, namely $\ket{0}$. Clearly, the particular choice of the initial state and the target states is of no importance. If there exists a winning strategy for Q with respect to specific initial and target states, then there exists a winning strategy for every combination of initial and target states.

\begin{corollary}[The impact of initial and target states] \label{crl:The Impact of Initial and Target States is Negligible}
	In any $n$-round game, $n \geq 2$, if Q has a winning strategy, then he has a winning strategy for every combination of initial and target states.
\end{corollary}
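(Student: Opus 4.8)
The plan is to reduce to the structural characterisation already established and then mimic Algorithm \ref{alg:Q's PQG Winning Strategy}. By Theorem \ref{thr:When Q Has a Winning Strategy}, the hypothesis that Q has \emph{some} winning strategy (for the given initial and target states) forces Q to make both the first and the last move. Combined with the no-consecutive-moves convention of Definition \ref{def:Extended PQGs}, this means the round sequence is strictly alternating and begins and ends with $Q$, so $n$ is odd, say $n = 2k - 1$ with $k \geq 2$; Q plays the odd rounds (a total of $k$ moves $A_1, \dots, A_k$) and Picard plays the even rounds (a total of $k - 1$ moves $F_1, \dots, F_{k-1}$, each $F_j \in \{ I, F \}$). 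It therefore suffices to prove the sharper statement: for \emph{every} initial basis state $\ket{q_0'}$ and \emph{every} pair of distinct basis states $\ket{q_P'} \neq \ket{q_Q'}$, Q has a winning strategy in this $QPQ\cdots PQ$ game.

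To build such a strategy, follow the three steps of Algorithm \ref{alg:Q's PQG Winning Strategy}. Since any two unit vectors of $\mathcal{H}_2$ are related by a unitary, choose $A_1 \in U(2)$ with $A_1 \ket{q_0'} = \ket{+}$; let the intervening Q-moves be trivial, $A_2 = \dots = A_{k-1} = I$ (this range is empty when $k = 2$, i.e.\ for the original three-round game); and choose $A_k \in U(2)$ with $A_k \ket{+} = \ket{q_Q'}$. Set $\sigma_Q = (A_1, \dots, A_k)$.

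It remains to check that $\sigma_Q$ wins with probability $1.0$ against \emph{every} Picard strategy $\sigma_P = (F_1, \dots, F_{k-1})$, $F_j \in \{ I, F \}$. After Q's first move the coin is in state $\ket{+}$. By Theorem \ref{thr:The Fixed Set of M_P in U(2)}, $\ket{+} \in Fix(\{ I, F \})$, so each of Picard's moves leaves the coin in $\ket{+}$, and the intervening trivial Q-moves (if any) obviously do the same. Hence, just before Q's final move, the coin is still in $\ket{+}$, and $A_k$ sends it to $\ket{q_Q'}$. Measuring in the computational basis then yields the basis state $\ket{q_Q'}$ with certainty, so Q wins with probability $1.0$ while Picard, whose target $\ket{q_P'}$ is the other basis state, wins with probability $0$. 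Since $\sigma_P$ was arbitrary, $\sigma_Q$ is a winning strategy, which proves the corollary.

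The argument is essentially bookkeeping, and the only point that needs care is the reduction in the first paragraph: one must confirm via Theorem \ref{thr:When Q Has a Winning Strategy} and the alternation rule that ``Q has a winning strategy'' really does force the $QPQ\cdots PQ$ shape, since the whole construction hinges on Q owning the first and last moves. An equivalent, slightly slicker route avoids rebuilding the strategy from scratch: starting from the assumed winning strategy $(A_1, \dots, A_k)$ for $(\ket{q_0}, \ket{q_P}, \ket{q_Q})$, replace $A_1$ by $A_1 V$ with $V \ket{q_0'} = \ket{q_0}$ and $A_k$ by $W A_k$ with $W \ket{q_Q} = \ket{q_Q'}$; because a probability-$1.0$ win forces the pre-measurement state to equal $\ket{q_Q}$ up to a global phase independently of Picard's choices, the coin traces the same path from $\ket{q_0}$ onward and terminates in $\ket{q_Q'}$, giving a winning strategy for the new data.
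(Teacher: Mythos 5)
Your proposal is correct and follows essentially the same route as the paper: the paper derives the corollary as an immediate consequence of Theorem \ref{thr:When Q Has a Winning Strategy}, whose proof already exhibits the strategies $(H, I, \dots, I, H)$ and $(H, I, \dots, I, FH)$ that work for every combination of initial and target basis states, exactly the construction you spell out (with arbitrary unitaries in place of $H$ and $FH$). The only difference is that you make the reduction to the $QPQ\cdots PQ$ shape and the role of $Fix(\{I,F\}) = \{\ket{+},\ket{-}\}$ explicit, which the paper leaves implicit.
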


\section{Conclusions} \label{sec:Conclusions}

Quantum games not only pose many interesting questions, but also motivate research that can have important applications in other related fields such quantum algorithms and quantum key distribution. This work was inspired by the iconic PQ penny flip game that, undoubtedly, helped create the field. We have approached this game using concepts from group theory. This allowed us to uncover the interesting connection of the original game with the dihedral group $D_{8}$. Interpreting the game in terms of stabilizers and fixed sets enabled us to easily explain and replicate Q's strategy. This in turn allowed to prove that there exist precisely two classes of winning strategies for Q. Each class contains many different strategies, but all these strategies are equivalent in the sense that they drive the coin through the same sequence of states. We established that there are exactly two different sequences of states that can guaranteed Q's win with probability $1.0$. What is noteworthy is the realization that even when the game takes place in larger dihedral groups or even in the entire $U(2)$, this fact remains true. The essence of the game can be succinctly summarized by saying that there are precisely two paths that lead to Q's win and, of course, no path that leads to Picard's win. When we examined extensions of the game without any restriction, we discovered a very important fact, namely that for the quantum player to surely win against the classical player the tremendous advantage of quantum actions is not enough. Q must also make the first and the last move, or else he is not certain to win.

There still many question to be answered, a lot of important issues have to be addressed. These results were based on the assumption that the initial state of the coin and the target states of the players were one the the basis states. If this is not the case, and, moreover, entanglement comes into play, how does that affect the progression of the game? As this is, in our view, a particularly interesting topic, we expect it to be the subject of a future work.

\appendix \label{Appendix A}

\appendixpage

\section{Proofs of the main results}

\subsection{Proofs for Section \ref{sec:Connecting PQG & $D_8$}}

We begin this Appendix by first giving the rather easy proof of Theorem \ref{thr:PQG Ambient Group}.

\begin{ManualTheorem}{3.1}[The ambient group of the $PQG$] \label{thr:PQG Ambient Group Appendix}
	The ambient group of the $PQG$ is $D_8$.
\end{ManualTheorem}
\begin{proof}[Proof of Theorem \ref{thr:PQG Ambient Group Appendix}]
{\small
	Let us recall the presentation (\ref{eq:Dihedral Group Presentation 1}) for the general dihedral group $D_n$.
	\begin{align}
		D_n = \langle s, t \ | \ s^2 = t^2 = (s t)^{n} = \mathds{1} \rangle \ \tag{$P_{1}$}
	\end{align}
	By making the concrete associations
	\begin{align} \label{eq:Associating 1, s, t To I, F, H}
		\mathds{1} \mapsto I
		\overset{(\ref{eq:PQ Matrices})}{=}
		\begin{bmatrix}
			1 & 0 \\
			0 & 1
		\end{bmatrix}, \quad
		s \mapsto F
		\overset{(\ref{eq:PQ Matrices})}{=}
		\begin{bmatrix}
			0 & 1 \\
			1 & 0
		\end{bmatrix}, \quad \text{and} \quad
		t \mapsto H
		\overset{(\ref{eq:PQ Matrices})}{=}
		\begin{bmatrix}
			\begin{array}{lr}
				\frac{\sqrt{2}}{2} & \frac{\sqrt{2}}{2} \\
				\frac{\sqrt{2}}{2} & -\frac{\sqrt{2}}{2}
			\end{array}
		\end{bmatrix} \ ,
	\end{align}
	we can readily verify the following facts:
	\begin{enumerate}
		\item	$
				F^{2} =
				\begin{bmatrix}
					1 & 0 \\
					0 & 1
				\end{bmatrix}
				= I,
				$
		\item	$
				H^{2} =
				\begin{bmatrix}
					1 & 0 \\
					0 & 1
				\end{bmatrix}
				= I,
				$ and
		\item	$
				FH =
				\begin{bmatrix}
					\begin{array}{lr}
						\cos \frac{2 \pi}{8} & - \sin \frac{2 \pi}{8} \\
						\sin \frac{2 \pi}{8} &   \cos \frac{2 \pi}{8}
					\end{array}
				\end{bmatrix}
				\overset{(\ref{eq:2D Rotator})}{=} R_{\frac{2 \pi}{8}},
				$
		\item	$
				(FH)^{k} =
				\begin{bmatrix}
					\begin{array}{lr}
						\cos \frac{2 \pi k}{8} & - \sin \frac{2 \pi k}{8} \\
						\sin \frac{2 \pi k}{8} &   \cos \frac{2 \pi k}{8}
					\end{array}
				\end{bmatrix}
				\overset{(\ref{eq:2D Rotator})}{=} R_{\frac{2 \pi k}{8}}
				$,
				for $0 \leq k \leq 7$, and
		\item	$
				(FH)^{8} =
				\begin{bmatrix}
					1 & 0 \\
					0 & 1
				\end{bmatrix}
				= I,
				$
	\end{enumerate}
	Hence, presentation (\ref{eq:Dihedral Group Presentation 1}) is satisfied by $F$ and $H$ for $n = 8$, meaning that $F$ and $H$ generate the dihedral group $D_8$: $D_8 = \langle F, H \rangle$.
}
\end{proof}

For some of the remaining proofs, it will be convenient to explicitly give the \emph{standard} matrix representation of $D_8$ by listing for every one of its elements the corresponding $2 \times 2$ matrix in the following Tables \ref{tbl:D8 Standard Representation of Rotations} and \ref{tbl:D8 Standard Representation of Reflections}.

\begin{table}[H]
	\centering
	\caption{The standard representation for each of the 8 rotations of the dihedral group $D_8$.}
	\label{tbl:D8 Standard Representation of Rotations}
	\footnotesize
	{
		\begin{tabular}{ !{\vrule width 1.0 pt} l || l !{\vrule width 1.0 pt} }
			\Xhline{2\arrayrulewidth}
			\cellcolor{RedPurple!15}
			&
			\cellcolor{RedPurple!15}
			\\
			\cellcolor{RedPurple!15}
			$
			\mathds{1} \mapsto R_{0} =
			\begin{bmatrix}
				\begin{array}{lr}
					\cos 0 & - \sin 0 \\
					\sin 0 &   \cos 0
				\end{array}
			\end{bmatrix}
			=
			\begin{bmatrix}
				\begin{array}{lr}
					1 & 0 \\
					0 & 1
				\end{array}
			\end{bmatrix} = I
			$
			&
			\cellcolor{RedPurple!15}
			$
			r \mapsto R_{\frac{2 \pi}{8}} =
			\begin{bmatrix}
				\begin{array}{lr}
					\cos \frac{2 \pi}{8} & - \sin \frac{2 \pi}{8} \\
					\sin \frac{2 \pi}{8} &   \cos \frac{2 \pi}{8}
				\end{array}
			\end{bmatrix}
			=
			\begin{bmatrix}
				\begin{array}{lr}
					\frac{\sqrt{2}}{2} & - \frac{\sqrt{2}}{2} \\
					\frac{\sqrt{2}}{2} &   \frac{\sqrt{2}}{2}
				\end{array}
			\end{bmatrix}
			$
			\\ [4 ex]
			\cellcolor{RedPurple!15}
			$
			r^{2} \mapsto R_{\frac{4 \pi}{8}} =
			\begin{bmatrix}
				\begin{array}{lr}
					\cos \frac{4 \pi}{8} & - \sin \frac{4 \pi}{8} \\
					\sin \frac{4 \pi}{8} &   \cos \frac{4 \pi}{8}
				\end{array}
			\end{bmatrix}
			=
			\begin{bmatrix}
				\begin{array}{lr}
					0 & - 1 \\
					1 & 0
				\end{array}
			\end{bmatrix}
			$
			&
			\cellcolor{RedPurple!15}
			$
			r^{3} \mapsto R_{\frac{6 \pi}{8}} =
			\begin{bmatrix}
				\begin{array}{lr}
					\cos \frac{6 \pi}{8} & - \sin \frac{6 \pi}{8} \\
					\sin \frac{6 \pi}{8} &   \cos \frac{6 \pi}{8}
				\end{array}
			\end{bmatrix}
			=
			\begin{bmatrix}
				\begin{array}{rr}
						- \frac{\sqrt{2}}{2} & - \frac{\sqrt{2}}{2} \\
						\frac{\sqrt{2}}{2} & - \frac{\sqrt{2}}{2}
				\end{array}
			\end{bmatrix}
			$
			\\ [4 ex]
			\cellcolor{RedPurple!15}
			$
			r^{4} \mapsto R_{\frac{8 \pi}{8}} =
			\begin{bmatrix}
				\begin{array}{lr}
					\cos \frac{8 \pi}{8} & - \sin \frac{8 \pi}{8} \\
					\sin \frac{8 \pi}{8} &   \cos \frac{8 \pi}{8}
				\end{array}
			\end{bmatrix}
			=
			\begin{bmatrix}
				\begin{array}{rr}
					-1 & 0 \\
					0 & -1
				\end{array}
			\end{bmatrix}
			$
			&
			\cellcolor{RedPurple!15}
			$
			r^{5} \mapsto R_{\frac{10 \pi}{8}} =
			\begin{bmatrix}
				\begin{array}{lr}
					\cos \frac{10 \pi}{8} & - \sin \frac{10 \pi}{8} \\
					\sin \frac{10 \pi}{8} &   \cos \frac{10 \pi}{8}
				\end{array}
			\end{bmatrix}
			=
			\begin{bmatrix}
				\begin{array}{lr}
						- \frac{\sqrt{2}}{2} &   \frac{\sqrt{2}}{2} \\
						- \frac{\sqrt{2}}{2} & - \frac{\sqrt{2}}{2}
				\end{array}
			\end{bmatrix}
			$
			\\ [4 ex]
			\cellcolor{RedPurple!15}
			$
			r^{6} \mapsto R_{\frac{12 \pi}{8}} =
			\begin{bmatrix}
				\begin{array}{lr}
					\cos \frac{12 \pi}{8} & - \sin \frac{12 \pi}{8} \\
					\sin \frac{12 \pi}{8} &   \cos \frac{12 \pi}{8}
				\end{array}
			\end{bmatrix}
			=
			\begin{bmatrix}
				\begin{array}{rr}
						0 & 1 \\
						-1 & 0
				\end{array}
			\end{bmatrix}
			$
			&
			\cellcolor{RedPurple!15}
			$
			r^{7} \mapsto R_{\frac{14 \pi}{8}} =
			\begin{bmatrix}
				\begin{array}{lr}
					\cos \frac{14 \pi}{8} & - \sin \frac{14 \pi}{8} \\
					\sin \frac{14 \pi}{8} &   \cos \frac{14 \pi}{8}
				\end{array}
			\end{bmatrix}
			=
			\begin{bmatrix}
				\begin{array}{rr}
						\frac{\sqrt{2}}{2} & \frac{\sqrt{2}}{2} \\ 
						- \frac{\sqrt{2}}{2} & \frac{\sqrt{2}}{2} 
				\end{array}
			\end{bmatrix}
			$
			\\
			\cellcolor{RedPurple!15}
			&
			\cellcolor{RedPurple!15}
			\\
			\Xhline{2\arrayrulewidth}
		\end{tabular}
	}
\end{table}

\begin{table}[H]
	\centering
	\caption{The standard representation for each of the 8 reflections of the dihedral group $D_8$.}
	\label{tbl:D8 Standard Representation of Reflections}
	\footnotesize
	{
		\begin{tabular}{ !{\vrule width 1.0 pt} l || l !{\vrule width 1.0 pt} }
			\Xhline{2\arrayrulewidth}
			\cellcolor{GreenLighter2!20}
			&
			\cellcolor{GreenLighter2!20}
			\\
			\cellcolor{GreenLighter2!20}
			$
			s \mapsto S_{0} =
			\begin{bmatrix}
				\begin{array}{lr}
					\cos 0 &   \sin 0 \\
					\sin 0 & - \cos 0
				\end{array}
			\end{bmatrix}
			=
			\begin{bmatrix}
				\begin{array}{lr}
					1 &   0 \\
					0 & - 1
				\end{array}
			\end{bmatrix}
			$
			&
			\cellcolor{GreenLighter2!20}
			$
			rs \mapsto S_{\frac{\pi}{8}} =
			\begin{bmatrix}
				\begin{array}{lr}
					\cos \frac{2 \pi}{8} &   \sin \frac{2 \pi}{8} \\
					\sin \frac{2 \pi}{8} & - \cos \frac{2 \pi}{8}
				\end{array}
			\end{bmatrix}
			=
			\begin{bmatrix}
				\begin{array}{lr}
					\frac{\sqrt{2}}{2} &   \frac{\sqrt{2}}{2} \\
					\frac{\sqrt{2}}{2} & - \frac{\sqrt{2}}{2}
				\end{array}
			\end{bmatrix} = H
			$
			\\ [4 ex]
			\cellcolor{GreenLighter2!20}
			$
			r^{2} s \mapsto S_{\frac{2 \pi}{8}} =
			\begin{bmatrix}
				\begin{array}{lr}
					\cos \frac{4 \pi}{8} &   \sin \frac{4 \pi}{8} \\
					\sin \frac{4 \pi}{8} & - \cos \frac{4 \pi}{8}
				\end{array}
			\end{bmatrix}
			=
			\begin{bmatrix}
				\begin{array}{lr}
					0 & 1 \\
					1 & 0
				\end{array}
			\end{bmatrix} = F
			$
			&
			\cellcolor{GreenLighter2!20}
			$
			r^{3} s \mapsto S_{\frac{3 \pi}{8}} =
			\begin{bmatrix}
				\begin{array}{lr}
					\cos \frac{6 \pi}{8} &   \sin \frac{6 \pi}{8} \\
					\sin \frac{6 \pi}{8} & - \cos \frac{6 \pi}{8}
				\end{array}
			\end{bmatrix}
			=
			\begin{bmatrix}
				\begin{array}{rr}
					- \frac{\sqrt{2}}{2} & \frac{\sqrt{2}}{2} \\
					\frac{\sqrt{2}}{2} & \frac{\sqrt{2}}{2}
				\end{array}
			\end{bmatrix}
			$
			\\ [4 ex]
			\cellcolor{GreenLighter2!20}
			$
			r^{4} s \mapsto S_{\frac{4 \pi}{8}} =
			\begin{bmatrix}
				\begin{array}{lr}
					\cos \frac{8 \pi}{8} &   \sin \frac{8 \pi}{8} \\
					\sin \frac{8 \pi}{8} & - \cos \frac{8 \pi}{8}
				\end{array}
			\end{bmatrix}
			=
			\begin{bmatrix}
				\begin{array}{rr}
					- 1 & 0 \\
					0 & 1
				\end{array}
			\end{bmatrix}
			$
			&
			\cellcolor{GreenLighter2!20}
			$
			r^{5} s \mapsto S_{\frac{5 \pi}{8}} =
			\begin{bmatrix}
				\begin{array}{lr}
					\cos \frac{10 \pi}{8} & \sin \frac{10 \pi}{8} \\
					\sin \frac{10 \pi}{8} & - \cos \frac{10 \pi}{8}
				\end{array}
			\end{bmatrix}
			=
			\begin{bmatrix}
				\begin{array}{lr}
					- \frac{\sqrt{2}}{2} & - \frac{\sqrt{2}}{2} \\
					- \frac{\sqrt{2}}{2} &   \frac{\sqrt{2}}{2}
				\end{array}
			\end{bmatrix}
			$
			\\ [4 ex]
			\cellcolor{GreenLighter2!20}
			$
			r^{6} s \mapsto S_{\frac{6 \pi}{8}} =
			\begin{bmatrix}
				\begin{array}{lr}
					\cos \frac{12 \pi}{8} &   \sin \frac{12 \pi}{8} \\
					\sin \frac{12 \pi}{8} & - \cos \frac{12 \pi}{8}
				\end{array}
			\end{bmatrix}
			=
			\begin{bmatrix}
				\begin{array}{rr}
					0 & - 1 \\
					- 1 &   0
				\end{array}
			\end{bmatrix}
			$
			&
			\cellcolor{GreenLighter2!20}
			$
			r^{7} s \mapsto S_{\frac{7 \pi}{8}} =
			\begin{bmatrix}
				\begin{array}{lr}
					\cos \frac{14 \pi}{8} &   \sin \frac{14 \pi}{8} \\
					\sin \frac{14 \pi}{8} & - \cos \frac{14 \pi}{8}
				\end{array}
			\end{bmatrix}
			=
			\begin{bmatrix}
				\begin{array}{rr}
					\frac{\sqrt{2}}{2} & - \frac{\sqrt{2}}{2} \\ 
					- \frac{\sqrt{2}}{2} & - \frac{\sqrt{2}}{2} 
				\end{array}
			\end{bmatrix}
			$
			\\
			\cellcolor{GreenLighter2!20}
			&
			\cellcolor{GreenLighter2!20}
			\\
			\Xhline{2\arrayrulewidth}
		\end{tabular}
	}
\end{table}

\subsection{Proofs for Section \ref{sec:Group Theoretic Analysis of PQG}}

It is quite straightforward to follow and verify the proofs given below, by keeping in mind that:

\begin{itemize}
	\item	under the matrix representation of the dihedral groups, the action of a dihedral group on any state of the quantum coin can be determined by simply multiplying the matrices representing the elements of the group with the ket corresponding to the state, and
	\item	a ket of the form $e^{i \theta} \ket{ \psi }$, with $\theta \in \mathbb{R}$, represents the same state as the ket $\ket{ \psi }$.
\end{itemize}

\begin{ManualProposition}{4.1}[The action of $D_{8}$ on $B$] \label{prp:The Action of D_8 on B Appendix} \
	\begin{enumerate}
		\item	$\ket{0}$ and $\ket{1}$ have the same orbit:
		\begin{align} \label{eq:Orbits of Ket 0 and Ket 1 in D_8 Appendix}
			D_{8} \star \ket{0} = D_{8} \star \ket{1} = \{ \ket{0}, \ket{+}, \ket{1}, \ket{-} \} \ .
		\end{align}
		\item	The orbit of $B$ is:
		\begin{align} \label{eq:Orbit of B in D_8 Appendix}
			D_{8} \star B = \{ \ket{0}, \ket{+}, \ket{1}, \ket{-} \} \ .
		\end{align}
	\end{enumerate}
\end{ManualProposition}
\begin{proof}[Proof of Proposition \ref{prp:The Action of D_8 on B Appendix}]
{\small
	We will make use of the standard matrix representation of the rotations and reflections of $D_8$ as given in Tables \ref{tbl:D8 Standard Representation of Rotations} and \ref{tbl:D8 Standard Representation of Reflections}.
	\begin{enumerate}
		\item	By systematically multiplying all the matrices in Tables \ref{tbl:D8 Standard Representation of Rotations} and \ref{tbl:D8 Standard Representation of Reflections} with $\ket{0}$ we get $\ket{0}, \ket{+}, \ket{1}, -\ket{-},$ $-\ket{0}, -\ket{+}, -\ket{1}$ and $\ket{-}$. Of course, $\ket{0}$ and $-\ket{0}$ represent the same state. This also applies to the pairs $\ket{1}$ and $-\ket{1}$, $\ket{+}$ and $-\ket{+}$, $\ket{-}$ and $-\ket{-}$. Thus, $D_{8} \star \ket{0} =\{ \ket{0}, \ket{+}, \ket{1}, \ket{-} \}$. In a symmetrical fashion, we may compute $D_{8} \star \ket{1}$ and verify that (\ref{eq:Orbits of Ket 0 and Ket 1 in D_8 Appendix}) holds.
		\item	Simply taking the union of the orbits $D_{8} \star \ket{0}$ and $D_{8} \star \ket{1}$ gives the desired result.
	\end{enumerate}
}
\end{proof}

\begin{ManualProposition}{4.2}[The stabilizers of $\ket{0}, \ket{+}, \ket{1}$ and $\ket{-}$ in $D_8$] \label{prp:The Stabilizers of 0, +, 1, - in D_8 Appendix} \
	\begin{itemize}
	\item	The stabilizers of $\ket{0}$ and $\ket{1}$ in $D_8$ are
	\begin{align} \label{eq:The Stabilizers of Ket 0 & Ket 1 in D_8 Appendix}
		D_{8} ( \ket{0} ) = \{ I, R_{\pi}, S_{0}, S_{\frac{4 \pi}{8}} \}
		\quad \text{and} \quad
		D_{8} ( \ket{1} ) = \{ I, R_{\pi}, S_{0}, S_{\frac{4 \pi}{8}} \} \ .
	\end{align}
	\item	The stabilizers of $\ket{+}$ and $\ket{-}$ are
	\begin{align} \label{eq:The Stabilizers of + & - in D_8 Appendix}
		D_{8} ( \ket{+} ) = \{ I, R_{\pi}, F, S_{\frac{6 \pi}{8}} \}
		\quad \text{and} \quad
		D_{8} ( \ket{-} ) = \{ I, R_{\pi}, F, S_{\frac{6 \pi}{8}} \} \ .
	\end{align}
\end{itemize}
\end{ManualProposition}
\begin{proof}[Proof of Proposition \ref{prp:The Stabilizers of 0, +, 1, - in D_8 Appendix}]
{\small
	We will only show how to find the stabilizer of $\ket{+}$, since the proofs regarding the states $\ket{0}, \ket{1}$ and $\ket{-}$ are completely analogous. It suffices to exhaustively multiply every matrix appearing in Tables \ref{tbl:D8 Standard Representation of Rotations} and \ref{tbl:D8 Standard Representation of Reflections} with $\ket{+}$ and note for which matrices the outcome is again $\ket{+}$. The stabilizer of $\ket{+}$ will contain precisely these matrices. These are the two rotations $I$ and $R_{\pi}$, through angles zero and $\pi$ (see Figure \ref{fig:Rotation Symmetries Regular Octagon}), and the two reflections $F$, about the line passing through vertices $2$ and $6$, and $S_{\frac{6 \pi}{8}}$, about the line passing through vertices $4$ and $8$ (see Figure \ref{fig:Reflection Symmetries Regular Octagon}). Therefore, we conclude that $D_{8} ( \ket{+} ) = \{ I, R_{\pi}, F, S_{\frac{6 \pi}{8}} \}$.
}
\end{proof}

\begin{ManualProposition}{4.3}[The fixed set of $\{ I, F \}$ in $D_8$] \label{prp:The Fixed Set of M_P in D_8 Appendix} \
	\begin{enumerate}
		\item	The fixed set of $F$ in $D_8$ is the set
		\begin{align} \label{eq:The Fixed Set of F in D_8 Appendix}
			Fix ( F ) = \{ \ket{+}, \ket{-} \} \ .
		\end{align}
		\item	The fixed set of $M_P = \{ I, F \}$ in $D_8$ is the set
		\begin{align} \label{eq:The Fixed Set of M_P in D_8 Appendix}
			Fix ( \{ I, F \} ) = \{ \ket{+}, \ket{-} \} \ .
		\end{align}
	\end{enumerate}
\end{ManualProposition}
\begin{proof}[Proof of Proposition \ref{prp:The Fixed Set of M_P in D_8 Appendix}] \
{\small
	\begin{enumerate}
		\item	We know from (\ref{eq:Orbit of B in D_8 Appendix}) that in $D_8$ the coin can be in one the states contained in $D_{8} \star B = \{ \ket{0}, \ket{+},$ $\ket{1},$ $\ket{-} \}$. By successively multiplying $F$ with these states, we find that: $F \ket{0} = \ket{1}$, $F \ket{+} = \ket{+}$, $F \ket{1} = \ket{0}$, and $F \ket{-} = \ket{-}$. These results show that the action of $F$ on the states $\ket{+}$ and $\ket{-}$ does not change the state of the coin. Hence, $Fix ( F ) = \{ \ket{+}, \ket{-} \}$ and (\ref{eq:The Fixed Set of F in D_8 Appendix}) holds.
		\item	The identity $I$ fixes every state in the orbit, so the intersection of $Fix ( I )$ with $Fix ( F )$ is just $Fix ( F )$, which verifies (\ref{eq:The Fixed Set of M_P in D_8 Appendix}).
	\end{enumerate}
}
\end{proof}

\subsection{Proofs for Section \ref{sec:Enlarging the Operational Space of the PQG}}

\begin{ManualTheorem}{5.1}[Characteristic properties of winning strategies] \label{thr:Characteristic Properties of Winning Strategies Appendix}
	If $(A_1, A_{2})$ is a winning strategy for Q, then:
	\begin{align}
		A_2 I A_1 \ket{0} &= A_2 F A_1 \ket{0} = \ket{0} \ , \label{eq:Characteristic Property I of Winning Strategies Appendix}
		\qquad \text{and}
		\\
		A_1 \ket{0} &\in Fix ( \{ F \} ) \ . \label{eq:Characteristic Property II of Winning Strategies Appendix}
	\end{align}
\end{ManualTheorem}
\begin{proof}[Proof of Theorem \ref{thr:Characteristic Properties of Winning Strategies Appendix}]
{\small
	By Definition \ref{def:Winning and Dominant Strategies} $(A_1, A_{2})$ is a winning strategy for Q if for every strategy of Picard, Q wins the game with probability $1.0$. If the coin, just prior to measurement, is in a state $a \ket{0} + b \ket{1}$, with $b \neq 0$, then the probability that Q will win the game is strictly less than $1.0$. Therefore, every winning strategy must eventually drive the coin to the state $\ket{0}$, no matter what Picard plays. This implies that $A_2 I A_1 \ket{0} = A_2 F A_1 \ket{0} = \ket{0}$.

	Let us assume in order to reach a contradiction that $\ket{\psi} = A_1 \ket{0}$ is not fixed by $F$. Then $F \ket{\psi} = \ket{\psi'}$, where state $\ket{\psi'}$ is different from state $\ket{\psi}$. However, according to (\ref{eq:Characteristic Property I of Winning Strategies Appendix}), $A_2 I A_1 \ket{0} = A_2 F A_1 \ket{0}$ $\Rightarrow$ $A_2 I \ket{\psi} = A_2 F \ket{\psi}$ $\Rightarrow$ $A_2 \ket{\psi} = A_2 \ket{\psi'}$ $\Rightarrow$ $\ket{\psi} = \ket{\psi'}$, which  contradicts our assumption that $\ket{\psi}$ and $\ket{\psi'}$ are different states. The last implication is valid because $A_2$, as a group element, has a unique inverse.
}
\end{proof}

\begin{ManualTheorem}{5.2}[The ambient group of the $PQG$ is $D_8$] \label{thr:Q's PQG Winning Strategies in D_8 Appendix}
	If we assume that $M_P = \{ I, F \}$ and $M_Q = D_8$, i.e., the ambient group of the $PQG$ is $D_8$, then the following hold.
	\begin{enumerate}
		\item	Q has exactly two classes of winning and dominant strategies
				\begin{align} \label{eq:D_8 Winning Strategy Classes Appendix}
					\mathcal{C}_{+} = [ (H, H) ] \quad \text{and} \quad \mathcal{C}_{-} = [ (S_{\frac{7 \pi}{8}}, S_{\frac{7 \pi}{8}}) ] \ ,
				\end{align}
				each containing $16$ equivalent strategies.
		\item	The winning state paths corresponding to $\mathcal{C}_{+}$ and $\mathcal{C}_{-}$ are
				\begin{align} \label{eq:D_8 Winning State Paths Appendix}
					\tau_{\mathcal{C}_{+}} = (\ket{0}, \ket{+}, \ket{0}) \quad \text{and} \quad \tau_{\mathcal{C}_{-}} = (\ket{0}, \ket{-}, \ket{0}) \ .
				\end{align}
		\item	Picard has no winning strategy.
	\end{enumerate}
\end{ManualTheorem}
\begin{proof}[Proof of Theorem \ref{thr:Q's PQG Winning Strategies in D_8 Appendix}]
{\small
	If the ambient group is $D_8$, the quantum coin will be in one of the states of the orbit $D_{8} \star B$, where $B$ is the computational basis. From (\ref{eq:Orbit of B in D_8 Appendix}) we know that $D_{8} \star B = \{ \ket{0}, \ket{+}, \ket{1}, \ket{-} \}$. Let $\sigma = (A_{1}, A_{2})$ be a winning strategy for Q. According to (\ref{eq:Characteristic Property I of Winning Strategies Appendix}), $A_2 I A_1 \ket{0} = A_2 F A_1 \ket{0} = \ket{0}$.
	\begin{enumerate}
		\item	We may distinguish $4$ cases, depending on the state of the coin after Q's first move $A_{1}$.
				\begin{enumerate}
				\item[(i)]	If Q leaves the coin at state $\ket{0}$, i.e., $A_1 \ket{0} = \ket{0}$, then (\ref{eq:Characteristic Property I of Winning Strategies Appendix}) implies that $A_2 I \ket{0} = A_2 F \ket{0} = \ket{0} \Rightarrow A_2 \ket{0} = A_2 \ket{1} = \ket{0} \Rightarrow \ket{0} = \ket{1}$, which is absurd. To arrive at this contradiction, we have used the fact that $A_2$, as a group element, has a unique inverse. This result shows that the first move of every winning strategy for Q must drive the coin to a state other than $\ket{0}$.
				\item[(ii)]	If Q sends the coin to state $\ket{1}$, i.e., $A_1 \ket{0} = \ket{1}$, then (\ref{eq:Characteristic Property I of Winning Strategies Appendix}) implies that $A_2 I \ket{1} = A_2 F \ket{1} = \ket{0} \Rightarrow A_2 \ket{1} = A_2 \ket{0} = \ket{0} \Rightarrow \ket{1} = \ket{0}$, which is also absurd for the same reason as in the previous case. Hence, the first move of every winning strategy for Q cannot send the coin to state $\ket{1}$.
				\item[(iii)]	If Q sends the coin to state $\ket{+}$, which can can be achieved through $4$ different ways: $H, R_{\frac{2 \pi}{8}}, S_{\frac{5 \pi}{8}}$ and $R_{\frac{10 \pi}{8}}$, then, no matter what Picard plays, the coin will remain in this state because $\ket{+}$ is fixed by $I$ and $F$, according to (\ref{eq:The Fixed Set of M_P in D_8 Appendix}). Finally, Q can send the coin back to the $\ket{0}$ state with $4$ different ways: $H, R_{\frac{14 \pi}{8}}, S_{\frac{5 \pi}{8}}$ and $R_{\frac{6 \pi}{8}}$. This means that Q has $16$ different winning strategies, which, in view of Definition \ref{def:Equivalent Strategies}, are equivalent. Thus, they constitute one equivalence class of $16$ winning strategies, which we designate by $\mathcal{C}_{+}$. Any one of them, e.g., $(H, H)$ can be taken as a representative of this class, so we may write $\mathcal{C}_{+} = [ (H, H) ]$.
				\item[(iv)]	In an analogous way, Q can send the coin to state $\ket{-}$ using $4$ different moves: $S_{\frac{7 \pi}{8}}, R_{\frac{14 \pi}{8}}, S_{\frac{3 \pi}{8}}$ or $R_{\frac{6 \pi}{8}}$. Picard is unable to change this state because $\ket{-}$ is also fixed by $I$ and $F$, according to (\ref{eq:The Fixed Set of M_P in D_8 Appendix}). This enables Q to send the coin back to $\ket{0}$ with $4$ different ways: $S_{\frac{7 \pi}{8}}, R_{\frac{2 \pi}{8}}, S_{\frac{3 \pi}{8}}$ or $R_{\frac{10 \pi}{8}}$. Once again Q has $16$ different winning strategies, which, in view of Definition \ref{def:Equivalent Strategies}, are equivalent. They make the second equivalence class of $16$ winning strategies, which is denoted by $\mathcal{C}_{-}$. Any one of them, for instance $(S_{\frac{7 \pi}{8}}, S_{\frac{7 \pi}{8}})$, can be taken as a representative of this class, so we may write $\mathcal{C}_{-} = [ (S_{\frac{7 \pi}{8}}, S_{\frac{7 \pi}{8}}) ]$.
		\end{enumerate}
				This concludes the proof of (\ref{eq:D_8 Winning Strategy Classes Appendix}).
		\item	Based on the above analysis of cases $(iii)$ and $(iv)$ it is straightforward to verify (\ref{eq:D_8 Winning State Paths Appendix}).
		\item	By Definition \ref{def:Winning and Dominant Strategies}, Picard has no winning strategy because if Q employs one of his winning strategies, Picard has $0.0$ probability to win the game.
	\end{enumerate}
}
\end{proof}

\begin{ManualTheorem}{5.3}[The smallest dihedral group for the $PQG$ is $D_8$] \label{thr:The smallest dihedral group of the PQG is D_8 Appendix}
	$D_8$ is the smallest of the dihedral groups such that $PQG$ can be meaningful played and in which Q has a quantum winning strategy.
\end{ManualTheorem}
\begin{proof}[Proof of Theorem \ref{thr:The smallest dihedral group of the PQG is D_8 Appendix}]
{\small
	Let us first clearly state the two assumptions on which this result is based:
	\begin{enumerate}
		\item	Picard's set of moves $M_P$ is $\{ I, F \}$, according to assumption (\ref{eq:The Moves of Picard}). As a classical player, Picard must certainly be able to flip the coin, or else the game will be meaningless. On the other hand, he should not be able to employ a true quantum move.
		\item	Q's actions $M_Q$ should contain at least one unitary operator other than the classical $I$ and $F$ operators, in order to exhibit quantumness.
	\end{enumerate}
	With the above clarifications in mind, let us examine whether any of the smaller dihedral groups $D_3, D_4, D_5, D_6$ and $D_7$ can serve as the operational space for a meaningful, or at least nontrivial, realization of the $PQG$.
	\begin{itemize}
		\item	The dihedral group $D_3$ does not contain the reflection $F$. One can verify this by comparing formula (\ref{eq:The F Reflector}) with formula (\ref{eq:Standard Representation r^k s in D_n}) for $n = 3$ and $k = 0, 1, 2$. This shows that $D_3$ does not satisfy assumption (\ref{eq:The Moves of Picard}) and, hence, is an inappropriate stage for the $PQG$.
		\item	$D_4$ contains the reflection $F$. However, the orbit $D_{4} \star B$ is $\{ \ket{0}, \ket{1} \}$. This means that Q can only flip the coin from heads to tails or vice versa. If $M_Q = D_4$, then the $PQG$ degenerates to the classical coin tossing game. Q is unable to employ a truly quantum strategy, something that contradicts the second assumption at the beginning of Section \ref{sec:Enlarging the Operational Space of the PQG} and goes against the spirit of the $PQG$. Moreover, in $D_4$ Q no longer possesses a winning strategy. For these reasons, it is meaningless to play the $PQG$ in $D_4$.
		\item	The dihedral groups $D_5, D_6$ and $D_7$ do not contain the reflection $F$ either. Once again the formulas (\ref{eq:The F Reflector}) and (\ref{eq:Standard Representation r^k s in D_n}) for $n = 5, 6$ and $7$ and $k = 0, 1, \dots, n - 1$ can be used to verify this fact. These groups do not satisfy assumption (\ref{eq:The Moves of Picard}) and are also inadmissible for the $PQG$.
	\end{itemize}
}
\end{proof}


\begin{ManualProposition}{5.4}[$D_{n}$ does not contain $F$ when $n$ odd] \label{prp:Absence of F in D_n for n Odd Appendix}
	If $n$ is odd, then the dihedral group $D_{n}$ does not contain $F$.
\end{ManualProposition}
\begin{proof}[Proof of Proposition \ref{prp:Absence of F in D_n for n Odd Appendix}]
{\small
	Let us recall the formulas (\ref{eq:Standard Representation r^k in D_n}) and (\ref{eq:Standard Representation r^k s in D_n}). For convenience, we repeat them below, noting that they are valid for every $n \geq 3$ and every $k, 0 \leq k \leq n - 1$.
	\begin{multicols}{2}
		\noindent
		\begin{align} \label{eq:Standard Representation r^k in D_n Appendix Proposition}
			r^{k} \mapsto R_{\frac{2 \pi k}{n}} =
			\begin{bmatrix}
				\begin{array}{lr}
					\cos \frac{2 \pi k}{n} & -\sin \frac{2 \pi k}{n} \\
					\sin \frac{2 \pi k}{n} & \cos \frac{2 \pi k}{n}
				\end{array}
			\end{bmatrix}
			\tag{\ref{eq:Standard Representation r^k in D_n}}
		\end{align}
		\begin{align} \label{eq:Standard Representation r^k s in D_n Appendix Proposition}
			r^{k} s \mapsto S_{\frac{\pi k}{n}} =
			\begin{bmatrix}
				\begin{array}{lr}
					\cos \frac{2 \pi k}{n} & \sin \frac{2 \pi k}{n} \\
					\sin \frac{2 \pi k}{n} & -\cos \frac{2 \pi k}{n}
				\end{array}
			\end{bmatrix}
			\tag{\ref{eq:Standard Representation r^k s in D_n}}
		\end{align}
	\end{multicols}
	Let us assume to the contrary that there is an odd $n$ such that $D_{n}$ does contain $F$. Then there must be a $k$, $0 \leq k < n$, such that
	\begin{align} \label{eq:Absence of F in $D_{n}$ for n Odd Appendix}
		F
		=
		\begin{bmatrix}
			0 & 1
			\\
			1 & 0
		\end{bmatrix}
		=
		\pm
		\begin{bmatrix}
			\begin{array}{lr}
				\cos \frac{2 \pi k}{n} & \sin \frac{2 \pi k}{n} \\
				\sin \frac{2 \pi k}{n} & -\cos \frac{2 \pi k}{n}
			\end{array}
		\end{bmatrix}
		\Rightarrow
		\left \{
		\begin{matrix*}[l]
			\cos \frac{2 \pi k}{n} = 0 \\
			\sin \frac{2 \pi k}{n} = 1
		\end{matrix*}
		\right \}
		&\text{ or }
		\left \{
		\begin{matrix*}[l]
			\cos \frac{2 \pi k}{n} = 0 \\
			\sin \frac{2 \pi k}{n} = -1
		\end{matrix*}
		\right \}
		\ .
		\tag{ \ref{prp:Absence of F in D_n for n Odd Appendix}.i }
	\end{align}
	The fact that $0 \leq k < n$, implies that $0 \leq \frac{2 \pi k}{n} < 2 \pi$. Hence, either $\frac{2 \pi k}{n} = \frac{\pi}{2}$ or $\frac{2 \pi k}{n} = \frac{3 \pi}{2}$. The former equation leads to $k = \frac{n}{4}$ and the latter to $k = \frac{3 n}{4}$, which are both impossible because $n$ is odd. Thus, we have arrived at a contradiction, which proves that $F$ does not exist in $D_{n}$ when $n$ is odd.
}
\end{proof}

For completeness of the exposition, we remind the reader of some very familiar notions, that will be invoked in our forthcoming proofs.

\begin{definition}[The unit circle] \label{def:The Unit Circle}
	The circle $S^{1}$ of unit radius centered at the origin, which will be henceforth called the \emph{unit circle}, is defined as
	\begin{align} \label{eq:The Unit Circle}
		S^{1} = \{ (x, y) \in \mathbb{R}^{2} \ : \ x^{2} + y^{2} = 1 \} \ .
	\end{align}
	The \emph{upper} semicircle $S^{1}_{y \geq 0}$ of the unit circle is
	\begin{align} \label{eq:The Upper semicircle}
		S^{1}_{y \geq 0} = \{ (x, y) \in S^{1} \ : \ y \geq 0 \} \ .
	\end{align}
	Symmetrically, the \emph{lower} semicircle $S^{1}_{y \leq 0}$ of the unit circle is
	\begin{align} \label{eq:The Lower semicircle}
		S^{1}_{y \leq 0} = \{ (x, y) \in S^{1} \ : \ y \leq 0 \} \ .
	\end{align}
	Given a point $\mathbf{x} = \begin{bmatrix} x \\ y \end{bmatrix} \in S^{1}$, its \emph{antipodal} point is $-\mathbf{x} = \begin{bmatrix} -x \\ -y \end{bmatrix} \in S^{1}$.
\end{definition}

\begin{figure}[H]
	\begin{minipage}[t]{0.32\textwidth}
		\centering
		\begin{tikzpicture}[scale = 1.7]
			\def \angle {360/8}
			\draw (-1.5, 0) -- (1.5, 0);
			\draw (0, -1.5) -- (0, 1.5);
			\draw [fill, thick, WordBlueDark]
				({cos(0 * \angle)}, {sin(0 * \angle)}) circle (0.75 pt) node [below right] {1};
			\draw [fill, thick, WordBlueDark]
				({cos(1 * \angle)}, {sin(1 * \angle)}) circle (0.75 pt) node [above right] {A};
			\draw [fill, thick, WordBlueDark]
				({cos(2 * \angle)}, {sin(2 * \angle)}) circle (0.75 pt) node [above left] {1};
			\draw [fill, thick, WordBlueDark]
				({cos(3 * \angle)}, {sin(3 * \angle)}) circle (0.75 pt) node [above left] {B};
			\draw [fill, thick, WordBlueDark]
				({cos(4 * \angle)}, {sin(4 * \angle)}) circle (0.75 pt) node [below left] {-1};
			\draw [fill, thick, WordBlueDark]
				({cos(5 * \angle)}, {sin(5 * \angle)}) circle (0.75 pt) node [below left] {$A^\prime$};
			\draw [fill, thick, WordBlueDark]
				({cos(6 * \angle)}, {sin(6 * \angle)}) circle (0.75 pt) node [below left] {-1};
			\draw [fill, thick, WordBlueDark]
				({cos(7 * \angle)}, {sin(7 * \angle)}) circle (0.75 pt) node [below right] {$B^\prime$};
			\draw [thin, dashed] ({cos(1 * \angle)}, {sin(1 * \angle)}) -- ({cos(5 * \angle)}, {sin(5 * \angle)});
			\draw [thin, dashed] ({cos(3 * \angle)}, {sin(3 * \angle)}) -- ({cos(7 * \angle)}, {sin(7 * \angle)});
			\draw [thick, WordBlueDark] (0, 0) circle [ radius = 1 cm ];
		\end{tikzpicture}
		\caption{The unit circle $S^{1}$ and the antipodal pairs $A, A^\prime$ and $B, B^\prime$.}
		\label{fig:The Unit Circle S^1}
	\end{minipage}
	\hfill
	\begin{minipage}[t]{0.32\textwidth}
		\centering
		\begin{tikzpicture}[scale = 1.7]
			\def \angle {360/8}
			\draw (-1.5, 0) -- (1.5, 0);
			\draw (0, -1.5) -- (0, 1.5);
			\draw [fill, thick, RedPurple]
				({cos(0 * \angle)}, {sin(0 * \angle)}) circle (0.75 pt) node [below right] {1};
			\draw [fill, thick, RedPurple]
				({cos(1 * \angle)}, {sin(1 * \angle)}) circle (0.75 pt) node [above right] {A};
			\draw [fill, thick, RedPurple]
				({cos(2 * \angle)}, {sin(2 * \angle)}) circle (0.75 pt) node [above left] {1};
			\draw [fill, thick, RedPurple]
				({cos(3 * \angle)}, {sin(3 * \angle)}) circle (0.75 pt) node [above left] {B};
			\draw [fill, thick, RedPurple]
				({cos(4 * \angle)}, {sin(4 * \angle)}) circle (0.75 pt) node [below left] {-1};
			\draw [thick, RedPurple] (1cm, 0cm) arc [start angle = 0, end angle = 180, radius = 1cm];
			\draw [thin, dashed] (1cm, 0cm) arc [start angle = 0, end angle = -180, radius = 1cm];
			\draw [thin, dashed, RedPurple] (0, 0) -- ({cos(1 * \angle)}, {sin(1 * \angle)});
			\scoped [on background layer]
				\filldraw [->, MyLightRed, line width = 0.3 mm] (0, 0) -- (0.5, 0) arc (0:\angle:0.5);
			\draw [->, RedPurple, line width = 0.3 mm] (0.5, 0) arc (0:\angle:0.5);
			\draw [->, RedPurple, line width = 0.3 mm] ({1.2 * cos(0.5 * \angle)}, {1.2 * sin(0.5 * \angle)}) node [RedPurple, right] {$\frac{2 \pi}{n}$} -- ({0.3 * cos(0.5 * \angle)}, {0.3 * sin(0.5 * \angle)});
			\draw [thin, dashed, RedPurple] (0, 0) -- ({cos(2 * \angle)}, {sin(2 * \angle)});
			\draw [->, RedPurple, line width = 0.3 mm] (0.6, 0) arc (0:2 * \angle:0.6);
			\draw [->, RedPurple, line width = 0.3 mm] ({1.2 * cos(1.5 * \angle)}, {1.2 * sin(1.5 * \angle)}) node [RedPurple, above] {$\frac{4 \pi}{n}$} -- ({0.3 * cos(1.5 * \angle)}, {0.3 * sin(1.5 * \angle)});
			\draw [thin, dashed, RedPurple] (0, 0) -- ({cos(3 * \angle)}, {sin(3 * \angle)});
			\draw [->, RedPurple, line width = 0.3 mm] (0.7, 0) arc (0:3 * \angle:0.7);
			\draw [->, RedPurple, line width = 0.3 mm] ({1.2 * cos(2.5 * \angle)}, {1.2 * sin(2.5 * \angle)}) node [RedPurple, above] {$\frac{6 \pi}{n}$} -- ({0.3 * cos(2.5 * \angle)}, {0.3 * sin(2.5 * \angle)});
		\end{tikzpicture}
		\caption{The upper semicircle of the unit circle, its end points and some intermediate points.}
		\label{fig:The Upper Semicircle of the Unit Circle}
	\end{minipage}
	\hfill
	\begin{minipage}[t]{0.32\textwidth}
		\centering
		\begin{tikzpicture}[scale = 1.7]
			\def \angle {360/8}
			\draw (-1.5, 0) -- (1.5, 0);
			\draw (0, -1.5) -- (0, 1.5);
			\draw [fill, thick, GreenLighter2]
			({cos(0 * \angle)}, {sin(0 * \angle)}) circle (0.75 pt) node [below right] {1};
			\draw [fill, thick, GreenLighter2]
			({cos(4 * \angle)}, {sin(4 * \angle)}) circle (0.75 pt) node [below left] {-1};
			\draw [fill, thick, GreenLighter2]
			({cos(5 * \angle)}, {sin(5 * \angle)}) circle (0.75 pt) node [below left] {$A^\prime$};
			\draw [fill, thick, GreenLighter2]
			({cos(6 * \angle)}, {sin(6 * \angle)}) circle (0.75 pt) node [below left] {-1};
			\draw [fill, thick, GreenLighter2]
			({cos(7 * \angle)}, {sin(7 * \angle)}) circle (0.75 pt) node [below right] {$B^\prime$};
			\draw [thick, GreenLighter2] (1cm, 0cm) arc [start angle = 0, end angle = -180, radius = 1cm];
			\draw [thin, dashed] (1cm, 0cm) arc [start angle = 0, end angle = 180, radius = 1cm];
		\end{tikzpicture}
		\caption{The lower semicircle of the unit circle, its end points and some intermediate points.}
		\label{fig:The Lower Semicircle of the Unit Circle}
	\end{minipage}
\end{figure}

It will also be helpful to recall some well-known trigonometric identities (see \cite{Beecher2016}):

{\small
\begin{align}
	\cos ( \theta + \frac{\pi}{2} ) &= - \sin \theta
	&\sin ( \theta + \frac{\pi}{2} ) &= \cos \theta \label{eq:Cos + Sin Theta + Pi/2 Identity}
	\\
	\cos ( \theta + \pi ) &= - \cos \theta
	&\sin ( \theta + \pi ) &= - \sin \theta \label{eq:Cos + Sin Theta + Pi Identity}
	\\
	\sin \theta + \sin \varphi &= 2 \sin ( \frac{ \theta + \varphi }{2} ) \cos ( \frac{ \theta - \varphi }{2} )
	&\sin \theta - \sin \varphi &= 2 \cos ( \frac{ \theta + \varphi }{2} ) \sin ( \frac{ \theta - \varphi }{2} ) \label{eq:Sin Theta +- Sin Varphi Identity}
	\\
	\cos ( \theta + \varphi ) &= \cos \theta \cos \varphi - \sin \theta \sin \varphi
	&\cos ( \theta - \varphi ) &= \cos \theta \cos \varphi + \sin \theta \sin \varphi \label{eq:Cos Addition Subtraction Identity}
	\\
	\sin ( \theta + \varphi ) &= \sin \theta \cos \varphi + \cos \theta \sin \varphi
	&\sin ( \theta - \varphi ) &= \sin \theta \cos \varphi - \cos \theta \sin \varphi \label{eq:Sin Addition Subtraction Identity}
\end{align}
}

\begin{lemma} \label{lem:Criterion for Orbit Coincidence} \
	\begin{enumerate}
		\item	If $\ket{1} \in G \star \ket{0}$, then $G \star \ket{0} = G \star \ket{1}$, where $G$ is any group of linear operators.
		\item	If $\ket{0} \in G \star \ket{1}$, then $G \star \ket{0} = G \star \ket{1}$, where $G$ is any group of linear operators.
	\end{enumerate}
\end{lemma}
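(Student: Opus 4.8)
The plan is to invoke the elementary fact that the $G$-orbits of distinct kets either coincide or are disjoint, and to prove the relevant inclusions directly from the group action axioms $(A_1)$ and $(A_2)$ of Definition \ref{def:Group Action}. I would prove part 1 in detail and then obtain part 2 by symmetry.

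For part 1, the hypothesis $\ket{1} \in G \star \ket{0}$ means that there exists $g \in G$ with $g \star \ket{0} = \ket{1}$ (recall that throughout we identify kets that differ only by a global phase $e^{i \theta}$, so this equality is to be read modulo such a phase). Applying $g^{-1}$ to both sides and using $(A_1)$ and $(A_2)$, we get $g^{-1} \star \ket{1} = g^{-1} \star (g \star \ket{0}) = (g^{-1} g) \star \ket{0} = \mathds{1} \star \ket{0} = \ket{0}$, which shows that $\ket{0} \in G \star \ket{1}$.

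Next I would establish the two inclusions. For $G \star \ket{0} \subseteq G \star \ket{1}$: take an arbitrary $x \in G \star \ket{0}$, so $x = h \star \ket{0}$ for some $h \in G$; then $x = h \star (g^{-1} \star \ket{1}) = (h g^{-1}) \star \ket{1} \in G \star \ket{1}$. For the reverse inclusion $G \star \ket{1} \subseteq G \star \ket{0}$: take $y = h \star \ket{1}$ with $h \in G$; then $y = h \star (g \star \ket{0}) = (h g) \star \ket{0} \in G \star \ket{0}$. Combining the two inclusions gives $G \star \ket{0} = G \star \ket{1}$. Part 2 is just part 1 with the roles of $\ket{0}$ and $\ket{1}$ interchanged: if $\ket{0} \in G \star \ket{1}$, the same argument (with $g$ now satisfying $g \star \ket{1} = \ket{0}$) yields $G \star \ket{1} = G \star \ket{0}$.

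The only point that requires a little care — and the one I would flag as the main obstacle, although it is a mild one — is the bookkeeping with the global-phase identification: when we write $g \star \ket{0} = \ket{1}$ we really mean $g \star \ket{0} = e^{i \theta} \ket{1}$ for some $\theta \in \mathbb{R}$, and one has to check that these phase factors can be absorbed consistently throughout the chain of equalities. They can, since $e^{i \theta} \ket{\psi}$ and $\ket{\psi}$ denote the same state and the operators act linearly, so multiplying a ket by a phase before or after applying a group element does not affect which state is obtained. No genuinely new idea beyond the standard partition-into-orbits argument is needed.
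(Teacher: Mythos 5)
Your proof is correct and follows essentially the same route as the paper's: obtain $g$ with $g \star \ket{0} = \ket{1}$, deduce $\ket{0} = g^{-1} \star \ket{1}$, and establish both inclusions via the group action axioms, with part 2 by symmetry. Your extra remark about absorbing global phases is a harmless clarification of a convention the paper already builds into Definition \ref{def:Orbits & Stabilizers}.
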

\begin{proof}[Proof of Lemma \ref{lem:Criterion for Orbit Coincidence}] \
{\small
	\begin{enumerate}
		\item	By Definition \ref{def:Orbits & Stabilizers}, we know that if $\ket{1} \in G \star \ket{0}$, then there exists an element $g_2 \in G$ such that $\ket{1} = g_2 \star \ket{0}$ (\ref{lem:Criterion for Orbit Coincidence}.i). Every member of $\ket{x_1} \in G \star \ket{1}$ has the form $\ket{x_1} = g_1 \star \ket{1}$ (\ref{lem:Criterion for Orbit Coincidence}.ii) for some $g_1 \in G$. If we combine Definition \ref{def:Group Action} with (\ref{lem:Criterion for Orbit Coincidence}.i) and (\ref{lem:Criterion for Orbit Coincidence}.ii), we deduce that $\ket{x_1} = \left( g_1 g_2 \right) \star \ket{0}$, that is $\ket{x_1} \in G \star \ket{0}$ too. Hence, $G \star \ket{1} \subset G \star \ket{0}$ (\ref{lem:Criterion for Orbit Coincidence}.iii).

		At the same time, Definition \ref{def:Group Action} together with (\ref{lem:Criterion for Orbit Coincidence}.i), imply that $\ket{0} = g^{-1}_{2} \star \ket{1}$ (\ref{lem:Criterion for Orbit Coincidence}.iv). Every member of $\ket{x_2} \in G \star \ket{0}$ has the form $\ket{x_2} = g_3 \star \ket{0}$ (\ref{lem:Criterion for Orbit Coincidence}.v) for some $g_3 \in G$. If we combine Definition \ref{def:Group Action} with (\ref{lem:Criterion for Orbit Coincidence}.iv) and (\ref{lem:Criterion for Orbit Coincidence}.v), we deduce that $\ket{x_2} = \left( g_3 g^{-1}_{2} \right) \star \ket{1}$, that is $\ket{x_2} \in G \star \ket{1}$ too. Therefore, $G \star \ket{0} \subset G \star \ket{1}$ (\ref{lem:Criterion for Orbit Coincidence}.vi). Together (\ref{lem:Criterion for Orbit Coincidence}.iii) and (\ref{lem:Criterion for Orbit Coincidence}.vi) establish that $G \star \ket{0} = G \star \ket{1}$.
		\item	The proof is completely symmetrical.
	\end{enumerate}
}
\end{proof}

\begin{lemma} \label{lem:The Action of D_n on B}
	The action of $D_n$ on the basis kets $\ket{0}$ and $\ket{1}$ gives rise to the following two sequences of kets $\ket{\varphi_k}$ and $\ket{\chi_k}$, where $0 \leq k \leq n - 1$:
	\begin{multicols}{2}
		\noindent
		\begin{align} \label{eq:Orbit of ket 0 in D_n Appendix}
			\ket{\varphi_k}
			=
			\begin{bmatrix}
				\cos \frac{2 \pi k}{n}
				\\
				\sin \frac{2 \pi k}{n}
			\end{bmatrix}
		\end{align}
		\begin{align} \label{eq:Orbit of ket 1 in D_n Appendix}
			\ket{\chi_k}
			=
			\begin{bmatrix*}[r]
				-\sin \frac{2 \pi k}{n}
				\\
				\cos \frac{2 \pi k}{n}
			\end{bmatrix*}
		\end{align}
	\end{multicols}
\end{lemma}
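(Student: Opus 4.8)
The plan is to compute, directly and explicitly, the image of each element of $D_n$ acting on the two basis kets $\ket{0} = \begin{bmatrix} 1 \\ 0 \end{bmatrix}$ and $\ket{1} = \begin{bmatrix} 0 \\ 1 \end{bmatrix}$ under the standard matrix representation. Since, as noted just after (\ref{def:General Dihedral Group Elements}), every element of $D_n$ can be written uniquely as $r^{k}$ or $r^{k} s$ with $0 \leq k \leq n - 1$, it suffices to apply the two matrix families $R_{\frac{2 \pi k}{n}}$ and $S_{\frac{\pi k}{n}}$ of (\ref{eq:Standard Representation r^k in D_n}) and (\ref{eq:Standard Representation r^k s in D_n}) to $\ket{0}$ and $\ket{1}$ and read off the results.

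First I would handle the rotations. Multiplying the rotator $R_{\frac{2 \pi k}{n}}$ by $\ket{0}$ simply extracts its first column, giving $\begin{bmatrix} \cos \frac{2 \pi k}{n} \\ \sin \frac{2 \pi k}{n} \end{bmatrix} = \ket{\varphi_k}$, while multiplying it by $\ket{1}$ extracts its second column, giving $\begin{bmatrix} -\sin \frac{2 \pi k}{n} \\ \cos \frac{2 \pi k}{n} \end{bmatrix} = \ket{\chi_k}$, in agreement with (\ref{eq:Orbit of ket 0 in D_n Appendix}) and (\ref{eq:Orbit of ket 1 in D_n Appendix}). Next I would handle the reflections. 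Multiplying the reflector $S_{\frac{\pi k}{n}}$ by $\ket{0}$ again extracts its first column, which is once more $\begin{bmatrix} \cos \frac{2 \pi k}{n} \\ \sin \frac{2 \pi k}{n} \end{bmatrix} = \ket{\varphi_k}$, so the reflections contribute nothing new to the collection obtained from $\ket{0}$. Finally, multiplying $S_{\frac{\pi k}{n}}$ by $\ket{1}$ extracts its second column, $\begin{bmatrix} \sin \frac{2 \pi k}{n} \\ -\cos \frac{2 \pi k}{n} \end{bmatrix}$, which equals $-\ket{\chi_k} = e^{i \pi} \ket{\chi_k}$; invoking the convention recorded in Definition \ref{def:Orbits & Stabilizers} that kets differing by a global phase represent the same state, this is again just $\ket{\chi_k}$.

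The argument involves no genuine difficulty, since everything reduces to reading off columns of two explicit $2 \times 2$ matrices. The one place to be careful is the last computation, where the naive matrix--vector product produces $-\ket{\chi_k}$ rather than $\ket{\chi_k}$, and one must remember to identify these two as the same physical state rather than record a spurious extra ket. Collecting the four computations, the action of $D_n$ on $\{ \ket{0}, \ket{1} \}$ yields precisely the two sequences $\ket{\varphi_k}$ and $\ket{\chi_k}$, $0 \leq k \leq n - 1$, as claimed.
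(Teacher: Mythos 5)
Your proposal is correct and follows essentially the same route as the paper: apply the standard matrices $R_{\frac{2\pi k}{n}}$ and $S_{\frac{\pi k}{n}}$ to $\ket{0}$ and $\ket{1}$, read off the appropriate columns, and observe that the reflection applied to $\ket{1}$ yields $-\ket{\chi_k}$, which is identified with $\ket{\chi_k}$ up to global phase. No gaps.
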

\begin{proof}[Proof of Lemma \ref{lem:The Action of D_n on B}]
{\small
	Let us first recall the formulas (\ref{eq:Standard Representation r^k in D_n}) and (\ref{eq:Standard Representation r^k s in D_n}). For convenience, we repeat them below, noting that they are valid for every $n \geq 3$ and every $k, 0 \leq k \leq n - 1$.
	\begin{multicols}{2}
		\noindent
		\begin{align} \label{eq:Standard Representation r^k in D_n Appendix}
			r^{k} \mapsto R_{\frac{2 \pi k}{n}} =
			\begin{bmatrix}
				\begin{array}{lr}
					\cos \frac{2 \pi k}{n} & -\sin \frac{2 \pi k}{n} \\
					\sin \frac{2 \pi k}{n} & \cos \frac{2 \pi k}{n}
				\end{array}
			\end{bmatrix}
			\tag{\ref{eq:Standard Representation r^k in D_n}}
		\end{align}
		\begin{align} \label{eq:Standard Representation r^k s in D_n Appendix}
			r^{k} s \mapsto S_{\frac{\pi k}{n}} =
			\begin{bmatrix}
				\begin{array}{lr}
					\cos \frac{2 \pi k}{n} & \sin \frac{2 \pi k}{n} \\
					\sin \frac{2 \pi k}{n} & -\cos \frac{2 \pi k}{n}
				\end{array}
			\end{bmatrix}
			\tag{\ref{eq:Standard Representation r^k s in D_n}}
		\end{align}
	\end{multicols}
	The action of the standard matrix representation of $D_n$ on the computational basis $B$ is given by the matrix-vector multiplication of the matrices (\ref{eq:Standard Representation r^k in D_n Appendix}) and (\ref{eq:Standard Representation r^k s in D_n Appendix}) with the kets $\ket{0} = \begin{bmatrix} 1 \\ 0 \end{bmatrix}$ and $\ket{1} = \begin{bmatrix} 0 \\ 1 \end{bmatrix}$. The resulting products are the following.
	\begin{multicols}{2}
		\noindent
		\begin{align} \label{eq:Rotational Orbit of ket 0 in D_n Appendix}
			\begin{bmatrix}
				\begin{array}{lr}
					\cos \frac{2 \pi k}{n} & -\sin \frac{2 \pi k}{n} \\
					\sin \frac{2 \pi k}{n} & \cos \frac{2 \pi k}{n}
				\end{array}
			\end{bmatrix}
			\begin{bmatrix}
				1
				\\
				0
			\end{bmatrix}
			=
			\begin{bmatrix}
				\cos \frac{2 \pi k}{n}
				\\
				\sin \frac{2 \pi k}{n}
			\end{bmatrix}
			\tag{ \ref{lem:The Action of D_n on B}.i }
		\end{align}
		\begin{align} \label{eq:Reflection Orbit of ket 0 in D_n Appendix}
			\begin{bmatrix}
				\begin{array}{lr}
					\cos \frac{2 \pi k}{n} & \sin \frac{2 \pi k}{n} \\
					\sin \frac{2 \pi k}{n} & -\cos \frac{2 \pi k}{n}
				\end{array}
			\end{bmatrix}
			\begin{bmatrix}
				1
				\\
				0
			\end{bmatrix}
			=
			\begin{bmatrix}
				\cos \frac{2 \pi k}{n}
				\\
				\sin \frac{2 \pi k}{n}
			\end{bmatrix}
			\tag{ \ref{lem:The Action of D_n on B}.ii }
		\end{align}
	\end{multicols}
	\begin{multicols}{2}
		\noindent
		\begin{align} \label{eq:Rotational Orbit of ket 1 in D_n Appendix}
			\begin{bmatrix}
				\begin{array}{lr}
					\cos \frac{2 \pi k}{n} & -\sin \frac{2 \pi k}{n} \\
					\sin \frac{2 \pi k}{n} & \cos \frac{2 \pi k}{n}
				\end{array}
			\end{bmatrix}
			\begin{bmatrix}
				0
				\\
				1
			\end{bmatrix}
			=
			\begin{bmatrix}
				-\sin \frac{2 \pi k}{n}
				\\
				\cos \frac{2 \pi k}{n}
			\end{bmatrix}
			\tag{ \ref{lem:The Action of D_n on B}.iii }
		\end{align}
		\begin{align} \label{eq:Reflection Orbit of ket 1 in D_n Appendix}
			\begin{bmatrix}
				\begin{array}{lr}
					\cos \frac{2 \pi k}{n} & \sin \frac{2 \pi k}{n} \\
					\sin \frac{2 \pi k}{n} & -\cos \frac{2 \pi k}{n}
				\end{array}
			\end{bmatrix}
			\begin{bmatrix}
				0
				\\
				1
			\end{bmatrix}
			=
			\begin{bmatrix}
				\sin \frac{2 \pi k}{n}
				\\
				-\cos \frac{2 \pi k}{n}
			\end{bmatrix}
			\tag{ \ref{lem:The Action of D_n on B}.iv }
		\end{align}
	\end{multicols}
	By comparing (\ref{eq:Rotational Orbit of ket 0 in D_n Appendix}) and (\ref{eq:Reflection Orbit of ket 0 in D_n Appendix}) we see that the action of the rotations and the reflections of $D_n$ on the basis ket $\ket{0}$ gives rise to precisely the same kets, specifically those that have the form
	\begin{align}
		\begin{bmatrix}
			\cos \frac{2 \pi k}{n}
			\\
			\sin \frac{2 \pi k}{n}
		\end{bmatrix}
		\ , \ 0 \leq k \leq n - 1 \ . \tag{\ref{eq:Orbit of ket 0 in D_n Appendix}}
	\end{align}
	Symmetrically, (\ref{eq:Rotational Orbit of ket 1 in D_n Appendix}) and (\ref{eq:Reflection Orbit of ket 1 in D_n Appendix}), together with the fact that $\ket{\psi}$ and $- \ket{\psi}$ stand for the same state, reveal that the action of the rotations and the reflections of $D_n$ on the basis ket $\ket{1}$ leads to the same kets, namely those shown below.
	\begin{align}
		\begin{bmatrix*}[r]
			-\sin \frac{2 \pi k}{n}
			\\
			\cos \frac{2 \pi k}{n}
		\end{bmatrix*}
		\ , \ 0 \leq k \leq n - 1 \ . \tag{\ref{eq:Orbit of ket 1 in D_n Appendix}}
	\end{align}
}
\end{proof}

\begin{lemma}[The action of $D_{n}$ on $B$ when $n = 4 m$] \label{lem:The Action of D_n on B for n 4-Multiple Appendix}
	If $n \geq 3$ is a multiple of $4$, then the action of the dihedral group $D_{n}$ on the computational basis $B$ is
	\begin{align} \label{eq:Orbit of B in D_n for n 4-Multiple Appendix Lemma}
		D_{n} \star \ket{0} = D_{n} \star \ket{1} = D_{n} \star B =
		\{ \cos \frac{2 \pi k}{n} \ket{0} + \sin \frac{2 \pi k}{n} \ket{1} : 0 \leq k < \frac{n}{2} \} \ .
	\end{align}
\end{lemma}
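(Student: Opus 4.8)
The plan is to build directly on Lemma~\ref{lem:The Action of D_n on B}, which already identifies the orbit of $\ket{0}$ with the family of kets $\ket{\varphi_k} = \cos\frac{2\pi k}{n}\ket{0} + \sin\frac{2\pi k}{n}\ket{1}$, $0 \le k \le n-1$, and the orbit of $\ket{1}$ with the family $\ket{\chi_k} = -\sin\frac{2\pi k}{n}\ket{0} + \cos\frac{2\pi k}{n}\ket{1}$, $0 \le k \le n-1$. What remains is purely trigonometric bookkeeping, exploiting two facts: first, $n$ being a multiple of $4$ makes both $n/2$ and $n/4$ integers; second, $\ket{\psi}$ and $-\ket{\psi}$ denote the same physical state.

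First I would collapse the orbit of $\ket{0}$ to the claimed index range. Using \eqref{eq:Cos + Sin Theta + Pi Identity}, for $0 \le k < n/2$ one has $\ket{\varphi_{k + n/2}} = \cos(\tfrac{2\pi k}{n} + \pi)\ket{0} + \sin(\tfrac{2\pi k}{n} + \pi)\ket{1} = -\ket{\varphi_k}$, which represents the same state as $\ket{\varphi_k}$. Since every index in $\{0, 1, \dots, n-1\}$ either lies in $\{0, \dots, n/2 - 1\}$ or has the form $k + n/2$ with $k$ in that range, this shows $D_n \star \ket{0} = \{\ket{\varphi_k} : 0 \le k < n/2\}$, which is exactly the set on the right-hand side of \eqref{eq:Orbit of B in D_n for n 4-Multiple Appendix Lemma}.

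Next I would establish $D_n \star \ket{1} = D_n \star \ket{0}$. The cleanest route is Lemma~\ref{lem:Criterion for Orbit Coincidence}: it suffices to exhibit $\ket{1}$ inside $D_n \star \ket{0}$. Because $4 \mid n$, the index $k = n/4$ is a legitimate value, and $\ket{\varphi_{n/4}} = \cos\frac{\pi}{2}\ket{0} + \sin\frac{\pi}{2}\ket{1} = \ket{1}$, so $\ket{1} \in D_n \star \ket{0}$ and the two orbits coincide. Alternatively one can argue directly: by \eqref{eq:Cos + Sin Theta + Pi/2 Identity}, $\ket{\chi_k} = \cos(\tfrac{2\pi k}{n} + \tfrac{\pi}{2})\ket{0} + \sin(\tfrac{2\pi k}{n} + \tfrac{\pi}{2})\ket{1} = \ket{\varphi_{k + n/4}}$ with the index read modulo $n$, so the family $\{\ket{\chi_k}\}$ is merely a cyclic relabelling of $\{\ket{\varphi_k}\}$ and yields the same set of states. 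Finally, $D_n \star B = D_n \star \ket{0} \cup D_n \star \ket{1} = D_n \star \ket{0}$ by Definition~\ref{def:Orbits & Stabilizers}, which closes the chain of equalities.

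I do not expect a genuine obstacle here; the only point requiring care is to keep the index arithmetic consistent modulo $n$ and to invoke the antipodal-identification convention at every step where a sign appears, since otherwise the reduction from $n$ representatives to $n/2$ representatives (and the matching of the $\ket{\chi_k}$ with the $\ket{\varphi_k}$) would not go through.
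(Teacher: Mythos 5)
Your proof is correct and follows essentially the same route as the paper's: the paper likewise reduces the $\ket{\varphi_k}$ to the range $0 \le k < n/2$ via the antipodal identity $\ket{\varphi_{k+n/2}} = -\ket{\varphi_k}$, identifies $\ket{\chi_k}$ with $\ket{\varphi_{(k+n/4) \bmod n}}$ (equivalently, notes $\ket{1} = \ket{\varphi_{n/4}}$ and invokes Lemma~\ref{lem:Criterion for Orbit Coincidence}), and takes the union to get $D_n \star B$. The only item the paper adds that you omit is the (not strictly necessary for the stated set equality) observation that the $n/2$ surviving representatives are pairwise distinct, since their angles $\frac{2\pi k}{n}$ all lie in $[0,\pi)$.
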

\begin{proof}[Proof of Lemma \ref{lem:The Action of D_n on B for n 4-Multiple Appendix}]
{\small
	In this case we assume that
	\begin{align} \label{eq:Special Case for n 4-Multiple}
		n = 4 m \ , \ m \geq 1 \ . \tag{ \ref{lem:The Action of D_n on B for n 4-Multiple Appendix}.i }
	\end{align}
	Consequently, (\ref{eq:Orbit of ket 0 in D_n Appendix}) and (\ref{eq:Orbit of ket 1 in D_n Appendix}) become:
	\begin{align} \label{eq:Orbit of B in D_n for n 4-Multiple Extended Form Appendix}
		\ket{\varphi_k}
		=
		\begin{bmatrix}
			\cos \frac{\pi k}{2 m}
			\\
			\sin \frac{\pi k}{2 m}
		\end{bmatrix}
		\quad {\rm and} \quad
		\ket{\chi_k}
		=
		\begin{bmatrix*}[r]
			- \sin \frac{\pi k}{2 m}
			\\
			\cos \frac{\pi k}{2 m}
		\end{bmatrix*}
		\ , \ 0 \leq k \leq 4m - 1 \ . \tag{ \ref{lem:The Action of D_n on B for n 4-Multiple Appendix}.ii }
	\end{align}
	These kets are not all different. To understand this let us first observe that
	\begin{align}
		\ket{\chi_{k}}
		\overset
		{ ( \ref{eq:Orbit of B in D_n for n 4-Multiple Extended Form Appendix} ) }
		{ = }
		\begin{bmatrix*}[r]
			- \sin \frac{\pi k}{2 m}
			\\
			\cos \frac{\pi k}{2 m}
		\end{bmatrix*}
		\overset
		{ ( \ref{eq:Cos + Sin Theta + Pi/2 Identity} ) }
		{ = }
		\begin{bmatrix}
			\cos ( \frac{\pi k}{2 m} + \frac{\pi}{2} )
			\\
			\sin ( \frac{\pi k}{2 m} + \frac{\pi}{2} )
		\end{bmatrix}
		=
		\begin{bmatrix}
			\cos ( \frac{\pi (k + m)}{2 m} )
			\\
			\sin ( \frac{\pi (k + m)}{2 m} )
		\end{bmatrix}
		\ .
		\tag{ \ref{lem:The Action of D_n on B for n 4-Multiple Appendix}.iii }
	\end{align}
	When $k$ ranges from $0$ to $3 m - 1$, equations (\ref{lem:The Action of D_n on B for n 4-Multiple Appendix}.ii) and (\ref{lem:The Action of D_n on B for n 4-Multiple Appendix}.iii) immediately give that
	\begin{align}
		\ket{\chi_{k}}
		=
		\ket{\varphi_{ k + m } }
		\ , \ 0 \leq k \leq 3 m - 1 \ .
		\tag{ \ref{lem:The Action of D_n on B for n 4-Multiple Appendix}.iv }
	\end{align}
	It remains to ascertain what happens when $k$ ranges from $3 m$ to $4 m - 1$. Then, $k + m$ ranges from $4 m$ to $4 m + ( m - 1 )$ and, according to (\ref{lem:The Action of D_n on B for n 4-Multiple Appendix}.iii), the kets $\ket{\chi_k}$ assume the values
	\begin{align}
		\begin{bmatrix}
			\cos ( 2 \pi + \frac{\pi 0}{2 m} )
			\\
			\sin ( 2 \pi + \frac{\pi 0}{2 m} )
		\end{bmatrix}
		=
		\begin{bmatrix}
			\cos \frac{\pi 0}{2 m}
			\\
			\sin \frac{\pi 0}{2 m}
		\end{bmatrix}
		\overset
		{ ( \ref{eq:Orbit of B in D_n for n 4-Multiple Extended Form Appendix} ) }
		{ = }
		\ket{\varphi_{ 0 } },
		\dots,
		\begin{bmatrix}
			\cos ( 2 \pi + \frac{\pi (m - 1)}{2 m} )
			\\
			\sin ( 2 \pi + \frac{\pi (m - 1)}{2 m} )
		\end{bmatrix}
		=
		\begin{bmatrix}
			\cos \frac{\pi (m - 1)}{2 m}
			\\
			\sin \frac{\pi (m - 1)}{2 m}
		\end{bmatrix}
		\overset
		{ ( \ref{eq:Orbit of B in D_n for n 4-Multiple Extended Form Appendix} ) }
		{ = }
		\ket{\varphi_{ m - 1 } }
		\ .
		\tag{ \ref{lem:The Action of D_n on B for n 4-Multiple Appendix}.v }
	\end{align}
	If we combine equations (\ref{lem:The Action of D_n on B for n 4-Multiple Appendix}.iv) and (\ref{lem:The Action of D_n on B for n 4-Multiple Appendix}.v) we derive that 
	\begin{align} \label{eq:Relation I Among kets for n 4-Multiple Appendix}
		\ket{\chi_{k}} = \ket{\varphi_{ (k + m) \bmod n} }
		\ , \ 0 \leq k \leq 4m - 1 \ ,
		\tag{ \ref{lem:The Action of D_n on B for n 4-Multiple Appendix}.vi }
	\end{align}
	which shows that all the kets of the $\ket{\chi_k}$ sequence also appear in the $\ket{\varphi_k}$ sequence.

	Another way to arrive at this conclusion is to observe that the $D_n$-orbits of $\ket{0}$ and $\ket{1}$ consist of kets appearing in the sequences $\ket{\varphi_k}$ and $\ket{\chi_k}$, respectively. In view of the fact that $\ket{1} = \ket{\chi_0}$ appears in the $\ket{\varphi_k}$ sequence as $\ket{\varphi_m}$, Lemma \ref{lem:Criterion for Orbit Coincidence} asserts that $D_{n} \star \ket{0} = D_{n} \star \ket{1} = D_{n} \star B$.

	Furthermore, it also happens that only $2 m$ of the kets in the $\ket{\varphi_k}$ sequence are distinct ($\ket{\psi}$ and $- \ket{\psi}$ represent the same state). In particular, it holds that
	\begin{align} \label{eq:Relation II Among kets for n 4-Multiple Appendix}
		\ket{\varphi_k} = - \ket{\varphi_{ k + 2 m }}
		\ , \ 0 \leq k \leq 2 m - 1 \ ,
		\tag{ \ref{lem:The Action of D_n on B for n 4-Multiple Appendix}.vii }
	\end{align}
	that is kets $\ket{\varphi_k}$ and $\ket{\varphi_{ k + 2 m }}$ correspond to antipodal points in the unit circle (Figure \ref{fig:Orbit of B in D_n for n 4-Multiple} gives a geometric depiction of the situation). The latter is easily proved as follows:
	\begin{align}
		\ket{\varphi_k}
		\overset
		{ ( \ref{eq:Orbit of B in D_n for n 4-Multiple Extended Form Appendix} ) }
		{ = }
		\begin{bmatrix}
			\cos \frac{\pi k}{2 m}
			\\
			\sin \frac{\pi k}{2 m}
		\end{bmatrix}
		\overset
		{ ( \ref{eq:Cos + Sin Theta + Pi Identity} ) }
		{ = }
		\begin{bmatrix}
			- \cos ( \frac{\pi k}{2 m} + \pi )
			\\
			- \sin ( \frac{\pi k}{2 m} + \pi )
		\end{bmatrix}
		=
		\begin{bmatrix}
			- \cos \frac{\pi k + 2 \pi m}{2 m}
			\\
			- \sin \frac{\pi k + 2 \pi m}{2 m}
		\end{bmatrix}
		\overset
		{ ( \ref{eq:Orbit of B in D_n for n 4-Multiple Extended Form Appendix} ) }
		{ = }
		- \ket{\varphi_{ k + 2 m }}
		\ , \ 0 \leq k \leq 2 m - 1 \ .
		\tag{ \ref{lem:The Action of D_n on B for n 4-Multiple Appendix}.viii }
	\end{align}
	When $k$ ranges from $0$ to $2 m - 1$, formula (\ref{eq:Orbit of B in D_n for n 4-Multiple Extended Form Appendix}) gives the  first $2 m$ kets in the $\ket{\varphi_k}$ sequence
	\begin{align} \label{eq:Complete Orbit of ket 0 in D_n for n 4-Multiple Appendix}
		\begin{bmatrix}
			1
			\\
			0
		\end{bmatrix},
		\begin{bmatrix}
			\cos \frac{\pi}{2 m}
			\\
			\sin \frac{\pi}{2 m}
		\end{bmatrix},
		\dots,
		\begin{bmatrix}
			\cos \frac{\pi (2 m - 1)}{2 m}
			\\
			\sin \frac{\pi (2 m - 1)}{2 m}
		\end{bmatrix} \ .
		\tag{ \ref{lem:The Action of D_n on B for n 4-Multiple Appendix}.ix }
	\end{align}
	These are all distinct because each one of them corresponds to a unique different point that lies on the upper semicircle of the unit circle and makes an angle $\frac{\pi k}{2 m}$, where $0 \leq k \leq 2 m - 1$, with the positive $x$-axis, as shown in Figure \ref{fig:The Upper Semicircle of the Unit Circle}. Finally, by noting that $2 m - 1 < 2m = \frac{n}{2}$, we verify that (\ref{eq:Orbit of B in D_n for n 4-Multiple Appendix Lemma}) holds.
}
\end{proof}

\begin{lemma}[The action of $D_{n}$ on $B$ when $n = 2 m$] \label{lem:The Action of D_n on B for n Even Appendix}
	If $n \geq 3$ is even, but not a multiple of $4$, then the action of the dihedral group $D_{n}$ on the computational basis $B$ is
	\begin{align}
		D_{n} \star \ket{0}
		&=
		\{ \cos \frac{2 \pi k}{n} \ket{0} + \sin \frac{2 \pi k}{n} \ket{1} : 0 \leq k < \frac{n}{2} \} \ , \label{eq:Orbit of Ket 0 in D_n for n Even but Not 4-Multiple Appendix Lemma}
		\\
		D_{n} \star \ket{1}
		&=
		\{ -\sin \frac{2 \pi k}{n} \ket{0} + \cos \frac{2 \pi k}{n} \ket{1} : 0 \leq k < \frac{n}{2} \} \ , \label{eq:Orbit of Ket 1 in D_n for n Even but Not 4-Multiple Appendix Lemma}
		\\
		D_{n} \star B
		&= \{ \cos \frac{2 \pi k}{n} \ket{0} + \sin \frac{2 \pi k}{n} \ket{1} : 0 \leq k < \frac{n}{2} \}
		\cup \{ -\sin \frac{2 \pi k}{n} \ket{0} + \cos \frac{2 \pi k}{n} \ket{1} : 0 \leq k < \frac{n}{2} \} \ . \label{eq:Orbit of B in D_n for n Even but Not 4-Multiple Appendix Lemma}
	\end{align}
\end{lemma}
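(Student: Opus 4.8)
The plan is to build directly on Lemma~\ref{lem:The Action of D_n on B}, which already identifies $D_{n} \star \ket{0}$ with the collection of kets $\ket{\varphi_k}$ and $D_{n} \star \ket{1}$ with the collection of kets $\ket{\chi_k}$, for $0 \leq k \leq n - 1$; the only remaining task is to decide which among these $n$ kets represent the same physical state when $n = 2 m$ with $m$ odd. The crucial observation is that in this case $\frac{2 \pi m}{n} = \pi$, so the rotator $R_{\frac{2 \pi m}{n}}$ is simply $-I$. Invoking identities (\ref{eq:Cos + Sin Theta + Pi Identity}) (or directly multiplying $-I$ onto $\ket{\varphi_k}$ and $\ket{\chi_k}$), one gets $\ket{\varphi_{k + m}} = - \ket{\varphi_{k}}$ and $\ket{\chi_{k + m}} = - \ket{\chi_{k}}$ for $0 \leq k \leq m - 1$. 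Since $\ket{\psi}$ and $- \ket{\psi}$ stand for the same state, the orbit $D_{n} \star \ket{0}$ is exhausted by $\ket{\varphi_{0}}, \dots, \ket{\varphi_{m - 1}}$ and the orbit $D_{n} \star \ket{1}$ by $\ket{\chi_{0}}, \dots, \ket{\chi_{m - 1}}$, with $m = \frac{n}{2}$; this is precisely the content of (\ref{eq:Orbit of Ket 0 in D_n for n Even but Not 4-Multiple Appendix Lemma}) and (\ref{eq:Orbit of Ket 1 in D_n for n Even but Not 4-Multiple Appendix Lemma}), once distinctness of the listed representatives is checked.

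Second, I would verify that the $m$ kets in each of the two lists are pairwise inequivalent. The ket $\ket{\varphi_k}$ corresponds to the point of the unit circle at angle $\frac{2 \pi k}{n} = \frac{\pi k}{m}$; two unit vectors represent the same state exactly when their angles agree modulo $\pi$, and for $0 \leq k < m$ these angles lie in the half-open interval $[0, \pi)$, hence are pairwise distinct modulo $\pi$. The identical argument applies to the $\ket{\chi_k}$, whose angles are $\frac{\pi k}{m} + \frac{\pi}{2}$ for $0 \leq k < m$ and therefore range over a subinterval of $[\frac{\pi}{2}, \frac{3 \pi}{2})$ of length strictly less than $\pi$, so again no two are equal or antipodal. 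A picture like Figure~\ref{fig:Orbit of B in D_n for n Even} makes this transparent.

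Third, the orbit of the whole basis follows at once: by Definition~\ref{def:Orbits & Stabilizers}, $D_{n} \star B = D_{n} \star \ket{0} \cup D_{n} \star \ket{1}$, which is exactly the union displayed in (\ref{eq:Orbit of B in D_n for n Even but Not 4-Multiple Appendix Lemma}). As a remark explaining why this case genuinely differs from $n = 4 m$, these two orbits are in fact disjoint: $\ket{\varphi_j}$ and $\ket{\chi_k}$ could represent the same state only if $\frac{\pi (j - k)}{m} \equiv \frac{\pi}{2} \pmod{\pi}$, i.e.\ $2 (j - k) \equiv m \pmod{2 m}$, which is impossible since the left-hand side is even while $m$ is odd. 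Equivalently $\ket{1} \notin D_{n} \star \ket{0}$, in harmony with Lemma~\ref{lem:Criterion for Orbit Coincidence}; by contrast, when $4 \mid n$ one has $\ket{1} = \ket{\varphi_{n / 4}} \in D_{n} \star \ket{0}$ and the two orbits coincide, which is Lemma~\ref{lem:The Action of D_n on B for n 4-Multiple Appendix}.

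There is no serious obstacle in this lemma; the one point demanding care is the bookkeeping of which kets coincide as states, i.e.\ consistently quotienting by the global phase $\pm 1$ and keeping all index ranges half-open so that no two listed kets turn out to be antipodal. Recognising that $4 \nmid n$ (equivalently $m$ odd) is exactly what prevents $\ket{1}$ from falling into the orbit of $\ket{0}$ is the single conceptually meaningful ingredient, and it is worth stating explicitly since it is what distinguishes this lemma from the multiple-of-four case.
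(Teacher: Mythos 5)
Your proof is correct and follows the same overall skeleton as the paper's: start from Lemma~\ref{lem:The Action of D_n on B}, pair off $\ket{\varphi_k}$ with $\ket{\varphi_{k+m}}$ and $\ket{\chi_k}$ with $\ket{\chi_{k+m}}$ as antipodal (hence physically identical) kets, check that the $m$ surviving representatives in each list are pairwise distinct via their angles on the upper semicircle, and take the union for $D_n \star B$. Where you genuinely diverge is in establishing that the two orbits do not overlap. The paper proves $\ket{\varphi_{k_1}} \neq \pm \ket{\chi_{k_2}}$ by writing out the component equations and running an eight-case analysis through the sum-to-product identities (\ref{eq:Sin Theta +- Sin Varphi Identity})--(\ref{eq:Sin Addition Subtraction Identity}), each case ending in a contradiction with $m$ odd. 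You instead encode each real unit ket by its angle, observe that two such kets represent the same state iff their angles agree modulo $\pi$, and reduce the coincidence condition to $2(j-k) \equiv m \pmod{2m}$, which fails on parity since the left side is even and $m$ is odd. This is a substantial simplification: it isolates the single arithmetic fact (oddness of $m$) that drives the whole case distinction, and it is the same mechanism you correctly invoke to explain why $4 \mid n$ behaves differently ($\ket{1} = \ket{\varphi_{n/4}}$ lands in the orbit of $\ket{0}$, so Lemma~\ref{lem:Criterion for Orbit Coincidence} collapses the two orbits). The only caveat worth making explicit is the one you implicitly rely on: for \emph{real} unit vectors the only global phases preserving reality are $\pm 1$, which is exactly why ``same state'' reduces to ``angles equal modulo $\pi$''; one line saying this would make the argument airtight. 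Note also that strictly speaking the displayed equalities of the lemma only require the union, so your disjointness remark, like the paper's, is supplementary --- but it is the content that actually distinguishes this case from $n = 4m$, so keeping it is the right call.
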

\begin{proof}[Proof of Lemma \ref{lem:The Action of D_n on B for n Even Appendix}]
{\small
	In this case we know that
	\begin{align} \label{eq:Special Case for n Even but Not 4-Multiple}
		n = 2 m \ , \ \text{where} \ m \ \text{is odd and} \ m \geq 3 \ . \tag{ \ref{lem:The Action of D_n on B for n Even Appendix}.i }
	\end{align}
	As a result now (\ref{eq:Orbit of ket 0 in D_n Appendix}) and (\ref{eq:Orbit of ket 1 in D_n Appendix}) give:
	\begin{align} \label{eq:Orbit of B in D_n for n Even but Not 4-Multiple Extended Form Appendix}
		\ket{\varphi_k}
		=
		\begin{bmatrix}
			\cos \frac{\pi k}{m}
			\\
			\sin \frac{\pi k}{m}
		\end{bmatrix}
		\quad {\rm and} \quad
		\ket{\chi_k}
		=
		\begin{bmatrix*}[r]
			- \sin \frac{\pi k}{m}
			\\
			\cos \frac{\pi k}{m}
		\end{bmatrix*}
		\ , \ 0 \leq k \leq 2m - 1 \ \text{ and } m \text{ odd}. \tag{ \ref{lem:The Action of D_n on B for n Even Appendix}.ii }
	\end{align}
	Once again we encounter the phenomenon that the kets in the above sequences are not all different. Only $m$ of the kets in the $\ket{\varphi_k}$ sequence and only $m$ of the kets in the $\ket{\chi_k}$ sequence are distinct (as always, we keep in mind that $\ket{\psi}$ and $- \ket{\psi}$ represent the same state). In particular, it holds that
	\begin{align} \label{eq:Relation I Among kets for n Even Appendix}
		\ket{\varphi_k} = - \ket{\varphi_{ k + m }}
		\ , \ 0 \leq k \leq m - 1 \ ,
		\tag{ \ref{lem:The Action of D_n on B for n Even Appendix}.iii }
	\end{align}
	that is kets $\ket{\varphi_k}$ and $\ket{\varphi_{ k + m }}$ correspond to antipodal points in the unit circle (Figure \ref{fig:Orbit of B in D_n for n Even} gives a geometric depiction of the situation). This can be shown as follows:
	\begin{align}
		\ket{\varphi_k}
		\overset
		{ ( \ref{eq:Orbit of B in D_n for n Even but Not 4-Multiple Extended Form Appendix} ) }
		{ = }
		\begin{bmatrix}
			\cos \frac{\pi k}{m}
			\\
			\sin \frac{\pi k}{m}
		\end{bmatrix}
		\overset
		{ ( \ref{eq:Cos + Sin Theta + Pi Identity} ) }
		{ = }
		\begin{bmatrix}
			- \cos ( \frac{\pi k}{m} + \pi )
			\\
			- \sin ( \frac{\pi k}{m} + \pi )
		\end{bmatrix}
		=
		\begin{bmatrix}
			- \cos \frac{\pi k + \pi m}{m}
			\\
			- \sin \frac{\pi k + \pi m}{m}
		\end{bmatrix}
		\overset
		{ ( \ref{eq:Orbit of B in D_n for n Even but Not 4-Multiple Extended Form Appendix} ) }
		{ = }
		- \ket{\varphi_{ k + m }}
		\ , \ 0 \leq k \leq m - 1 \ .
		\tag{ \ref{lem:The Action of D_n on B for n Even Appendix}.iv }
	\end{align}
	When $k$ ranges from $0$ to $m - 1$, formula (\ref{eq:Orbit of B in D_n for n Even but Not 4-Multiple Extended Form Appendix}) gives the first $m$ kets in the $\ket{\varphi_k}$ sequence
	\begin{align} \label{eq:Complete Orbit of ket 0 in D_n for n Even Appendix}
		\begin{bmatrix}
			1
			\\
			0
		\end{bmatrix},
		\begin{bmatrix}
			\cos \frac{\pi}{m}
			\\
			\sin \frac{\pi}{m}
		\end{bmatrix},
		\dots,
		\begin{bmatrix}
			\cos \frac{\pi (m - 1)}{m}
			\\
			\sin \frac{\pi (m - 1)}{m}
		\end{bmatrix} \ .
		\tag{ \ref{lem:The Action of D_n on B for n Even Appendix}.v }
	\end{align}
	The above kets correspond to the $m$ points $p_0, p_1, \dots, p_{m - 1}$ that lie on the upper semicircle of the unit circle and make angles $0 < \frac{\pi}{m} < \frac{2 \pi}{m} < \dots < \frac{\pi (m - 1)}{m}$, respectively, with the positive $x$-axis, as shown in Figure \ref{fig:The Upper Semicircle of the Unit Circle}. The associated angles lie in the interval $[0, \pi)$ because $\frac{\pi (m - 1)}{m} < \pi$ and, therefore, the points $p_0, p_1, \dots, p_{m - 1}$ are all distinct. Finally, by noting that $m - 1 < m = \frac{n}{2}$, we verify that (\ref{eq:Orbit of Ket 0 in D_n for n Even but Not 4-Multiple Appendix Lemma}) holds.

	Analogously, it also holds that
	\begin{align} \label{eq:Relation II Among kets for n Even Appendix}
		\ket{\chi_k} = - \ket{\chi_{ k + m }}
		\ , \ 0 \leq k \leq m - 1 \ ,
		\tag{ \ref{lem:The Action of D_n on B for n Even Appendix}.vi }
	\end{align}
	that is kets $\ket{\chi_k}$ and $\ket{\chi_{ k + m }}$ too correspond to antipodal points in the unit circle (again consult Figure \ref{fig:Orbit of B in D_n for n Even}). This is also shown as follows:
	\begin{align}
		\ket{\chi_k}
		\overset
		{ ( \ref{eq:Orbit of B in D_n for n Even but Not 4-Multiple Extended Form Appendix} ) }
		{ = }
		\begin{bmatrix*}[r]
			- \sin \frac{\pi k}{m}
			\\
			\cos \frac{\pi k}{m}
		\end{bmatrix*}
		\overset
		{ ( \ref{eq:Cos + Sin Theta + Pi Identity} ) }
		{ = }
		\begin{bmatrix*}[r]
			\sin ( \frac{\pi k}{m} + \pi )
			\\
			- \cos ( \frac{\pi k}{m} + \pi )
		\end{bmatrix*}
		=
		\begin{bmatrix*}[r]
			\sin \frac{\pi k + \pi m}{m}
			\\
			- \cos \frac{\pi k + \pi m}{m}
		\end{bmatrix*}
		\overset
		{ ( \ref{eq:Orbit of B in D_n for n Even but Not 4-Multiple Extended Form Appendix} ) }
		{ = }
		- \ket{\chi_{ k + m }}
		\ , \ 0 \leq k \leq m - 1 \ .
		\tag{ \ref{lem:The Action of D_n on B for n Even Appendix}.vii }
	\end{align}
	When $k$ ranges from $0$ to $m - 1$, formula (\ref{eq:Orbit of B in D_n for n Even but Not 4-Multiple Extended Form Appendix}) gives the first $m$ kets in the $\ket{\chi_k}$ sequence
	\begin{align} \label{eq:Complete Orbit of ket 1 in D_n for n Even Appendix}
		\begin{bmatrix}
			0
			\\
			1
		\end{bmatrix},
		\begin{bmatrix*}[r]
			- \sin \frac{\pi}{m}
			\\
			\cos \frac{\pi}{m}
		\end{bmatrix*},
		\dots,
		\begin{bmatrix*}[r]
			- \sin \frac{\pi (m - 1)}{m}
			\\
			\cos \frac{\pi (m - 1)}{m}
		\end{bmatrix*} \ .
		\tag{ \ref{lem:The Action of D_n on B for n Even Appendix}.viii }
	\end{align}
	These kets correspond to the $m$ points $q_0, q_1, \dots, q_{m - 1}$ that lie on the unit circle and make angles $\frac{\pi}{2} < \frac{\pi}{2} + \frac{\pi}{m} < \frac{\pi}{2} + \frac{2 \pi}{m} < \dots < \frac{\pi}{2} + \frac{\pi (m - 1)}{m}$, respectively, with the positive $x$-axis. The associated angles lie in the interval $[\frac{\pi}{2}, \frac{\pi}{2} + \pi)$ because $\frac{\pi (m - 1)}{m} < \pi$, i.e., the points $q_0, q_1, \dots, q_{m - 1}$ are all distinct. Taking into account that $m - 1 < m = \frac{n}{2}$, we have established that (\ref{eq:Orbit of Ket 1 in D_n for n Even but Not 4-Multiple Appendix Lemma}) holds.

	The important observation in this case is that
	\begin{itemize}
		\item	no ket (or its opposite) from the sequence (\ref{eq:Complete Orbit of ket 0 in D_n for n Even Appendix}) appears in the sequence (\ref{eq:Complete Orbit of ket 1 in D_n for n Even Appendix}), that is $\ket{\chi_{k_2}} \neq \pm \ket{\varphi_{k_1}}$, $\forall k_1, k_2$, where $0 \leq k_1, k_2 \leq m - 1$, and
		\item	no ket (or its opposite) from the sequence (\ref{eq:Complete Orbit of ket 1 in D_n for n Even Appendix}) appears in the sequence (\ref{eq:Complete Orbit of ket 0 in D_n for n Even Appendix}), i.e., $\ket{\varphi_{k_1}} \neq \pm \ket{\chi_{k_2}}$, $\forall k_1, k_2$, where $0 \leq k_1, k_2 \leq m - 1$.
	\end{itemize}
	To verify these claims, let us suppose to the contrary that there exist $k_1, k_2$, where $0 \leq k_1, k_2 \leq m - 1$, such that
	\begin{align} \label{eq:No Coincidence Among kets for n Even Appendix}
		\begin{bmatrix}
			\cos \frac{\pi k_1}{m}
			\\
			\sin \frac{\pi k_1}{m}
		\end{bmatrix}
		=
		\pm
		\begin{bmatrix*}[r]
			- \sin \frac{\pi k_2}{m}
			\\
			\cos \frac{\pi k_2}{m}
		\end{bmatrix*}
		\Rightarrow
		\left \{
		\begin{matrix*}[l]
			\cos \frac{\pi k_1}{m} = - \sin \frac{\pi k_2}{m} \\
			\sin \frac{\pi k_1}{m} =   \cos \frac{\pi k_2}{m}
		\end{matrix*}
		\right \}
		\text{ or }
		\left \{
		\begin{matrix*}[l]
			\cos \frac{\pi k_1}{m} = \sin \frac{\pi k_2}{m} \\
			\sin \frac{\pi k_1}{m} = - \cos \frac{\pi k_2}{m}
		\end{matrix*}
		\right \}
		\tag{ \ref{lem:The Action of D_n on B for n Even Appendix}.ix }
		\ .
	\end{align}
	The above suppositions inescapably lead to the following sequence of implications.
	\begin{align} \label{eq:No Coincidence Proof for n Even Appendix}
		\left \{
		\begin{matrix}
			\cos \frac{\pi k_1}{m} + \sin \frac{\pi k_2}{m} = 0 \\
			\sin \frac{\pi k_1}{m} - \cos \frac{\pi k_2}{m} = 0
		\end{matrix}
		\right \}
		&\text{ or }
		\left \{
		\begin{matrix}
			\cos \frac{\pi k_1}{m} - \sin \frac{\pi k_2}{m} = 0 \\
			\sin \frac{\pi k_1}{m} + \cos \frac{\pi k_2}{m} = 0
		\end{matrix}
		\right \}
		\overset
		{ ( \ref{eq:Cos + Sin Theta + Pi/2 Identity} ) }
		{ \Rightarrow }
		\nonumber \\
		\left \{
		\begin{matrix}
			\sin ( \frac{\pi k_1}{m} + \frac{\pi}{2} ) + \sin \frac{\pi k_2}{m} = 0 \\
			\sin \frac{\pi k_1}{m} - \sin ( \frac{\pi k_2}{m} + \frac{\pi}{2} ) = 0
		\end{matrix}
		\right \}
		&\text{ or }
		\left \{
		\begin{matrix}
			\sin ( \frac{\pi k_1}{m} + \frac{\pi}{2} ) - \sin \frac{\pi k_2}{m} = 0 \\
			\sin \frac{\pi k_1}{m} + \sin ( \frac{\pi k_2}{m} + \frac{\pi}{2} ) = 0
		\end{matrix}
		\right \}
		\overset
		{ ( \ref{eq:Sin Theta +- Sin Varphi Identity} ) }
		{ \Rightarrow }
		\nonumber \\
		\left \{
		\begin{matrix}
			2 \sin \left( \frac{\pi (k_1 + k_2)}{2 m} + \frac{\pi}{4} \right) \cos ( \frac{\pi (k_1 - k_2)}{2 m} + \frac{\pi}{4} ) = 0 \\
			2 \cos \left( \frac{\pi (k_1 + k_2)}{2 m} + \frac{\pi}{4} \right) \sin ( \frac{\pi (k_1 - k_2)}{2 m} - \frac{\pi}{4} ) = 0
		\end{matrix}
		\right \}
		&\text{ or }
		\left \{
		\begin{matrix}
			2 \cos \left( \frac{\pi (k_1 + k_2)}{2 m} + \frac{\pi}{4} \right) \sin ( \frac{\pi (k_1 - k_2)}{2 m} + \frac{\pi}{4} ) = 0 \\
			2 \sin \left( \frac{\pi (k_1 + k_2)}{2 m} + \frac{\pi}{4} \right) \cos ( \frac{\pi (k_1 - k_2)}{2 m} - \frac{\pi}{4} ) = 0
		\end{matrix}
		\right \}
		\nonumber
		\tag{ \ref{lem:The Action of D_n on B for n Even Appendix}.x }
	\end{align}
	To proceed further it is convenient to distinguish the following cases.
	\begin{itemize}
		\item	The first case gives the system
				$
				\left \{
				\begin{matrix}
					\sin \left( \frac{\pi (k_1 + k_2)}{2 m} + \frac{\pi}{4} \right) = 0 \\
					\cos \left( \frac{\pi (k_1 + k_2)}{2 m} + \frac{\pi}{4} \right) = 0
				\end{matrix}
				\right \}
				$
				, which is clearly impossible because there is no $\varphi$ such that $\sin \varphi = \cos \varphi = 0$.
		\item	The next case involves the system
				$
				\left \{
				\begin{matrix}
					\sin \left( \frac{\pi (k_1 + k_2)}{2 m} + \frac{\pi}{4} \right) = 0 \\
					\sin ( \frac{\pi (k_1 - k_2)}{2 m} - \frac{\pi}{4} ) = 0
				\end{matrix}
				\right \}
				$
				. Using ( \ref{eq:Sin Addition Subtraction Identity} ), this system can be transformed to the equivalent
				$
				\left \{
				\begin{matrix}
					\sin \left( \frac{\pi (k_1 + k_2)}{2 m} \right) \cos \frac{\pi}{4} +
					\cos \left( \frac{\pi (k_1 + k_2)}{2 m} \right) \sin \frac{\pi}{4} = 0 \\
					\sin \left( \frac{\pi (k_1 - k_2)}{2 m} \right) \cos \frac{\pi}{4} -
					\cos \left( \frac{\pi (k_1 - k_2)}{2 m} \right) \sin \frac{\pi}{4} = 0
				\end{matrix}
				\right \}
				$
				, which, in turn, implies that
				$
				\left \{
				\begin{matrix}
					\tan \left( \frac{\pi (k_1 + k_2)}{2 m} \right) = - 1 \\
					\tan \left( \frac{\pi (k_1 - k_2)}{2 m} \right) = 1
				\end{matrix}
				\right \}
				$.
				The fact that $0 \leq k_1, k_2 \leq m - 1$, implies that $0 \leq \frac{\pi (k_1 + k_2)}{2 m} < \pi$ and $- \frac{\pi}{2} < - \frac{\pi (m - 1)}{2 m} \leq \frac{\pi (k_1 - k_2)}{2 m} \leq \frac{\pi (m - 1)}{2 m} < \frac{\pi}{2}$. Hence, we derive that $\frac{\pi (k_1 + k_2)}{2 m} = \frac{3 \pi}{4}$ and $\frac{\pi (k_1 - k_2)}{2 m} = \frac{\pi}{4}$. By adding the last two equations, we conclude that $\frac{2 \pi k_1}{2 m} = \pi \Rightarrow k_1 = m$, which is also impossible because we know that $k_1 \leq m - 1$.
		\item	The next system
				$
				\left \{
				\begin{matrix}
					\cos ( \frac{\pi (k_1 - k_2)}{2 m} + \frac{\pi}{4} ) = 0 \\
					\cos \left( \frac{\pi (k_1 + k_2)}{2 m} + \frac{\pi}{4} \right) = 0
				\end{matrix}
				\right \}
				$
				can be conveniently transformed via (\ref{eq:Cos Addition Subtraction Identity}) to the equivalent system
				$
				\left \{
				\begin{matrix}
					\cos \left( \frac{\pi (k_1 - k_2)}{2 m} \right) \cos \frac{\pi}{4} -
					\sin \left( \frac{\pi (k_1 - k_2)}{2 m} \right) \sin \frac{\pi}{4} = 0 \\
					\cos \left( \frac{\pi (k_1 + k_2)}{2 m} \right) \cos \frac{\pi}{4} -
					\sin \left( \frac{\pi (k_1 + k_2)}{2 m} \right) \sin \frac{\pi}{4} = 0
				\end{matrix}
				\right \}
				$
				, which implies that
				$
				\left \{
				\begin{matrix}
					\tan \left( \frac{\pi (k_1 - k_2)}{2 m} \right) = 1 \\
					\tan \left( \frac{\pi (k_1 + k_2)}{2 m} \right) = 1
				\end{matrix}
				\right \}
				$.
				The fact that $0 \leq k_1, k_2 \leq m - 1$, implies that $0 \leq \frac{\pi (k_1 + k_2)}{2 m} < \pi$ and $- \frac{\pi}{2} < - \frac{\pi (m - 1)}{2 m} \leq \frac{\pi (k_1 - k_2)}{2 m} \leq \frac{\pi (m - 1)}{2 m} < \frac{\pi}{2}$. Hence, we derive that $\frac{\pi (k_1 + k_2)}{2 m} = \frac{\pi (k_1 - k_2)}{2 m} = \frac{\pi}{4}$. By adding the last two equations, we conclude that $\frac{2 \pi k_1}{2 m} = \frac{\pi}{2} \Rightarrow k_1 = \frac{m}{2}$, which is also impossible because we know from (\ref{eq:Special Case for n Even but Not 4-Multiple}) that $m$ is odd.
		\item	The next system
				$
				\left \{
				\begin{matrix}
					\cos ( \frac{\pi (k_1 - k_2)}{2 m} + \frac{\pi}{4} ) = 0 \\
					\sin ( \frac{\pi (k_1 - k_2)}{2 m} - \frac{\pi}{4} ) = 0
				\end{matrix}
				\right \}
				$
				can be rewritten via (\ref{eq:Cos + Sin Theta + Pi/2 Identity}) as
				$
				\left \{
				\begin{matrix*}[r]
					\cos ( \frac{\pi (k_1 - k_2)}{2 m} + \frac{\pi}{4} ) = 0 \\
					- \cos ( \frac{\pi (k_1 - k_2)}{2 m} + \frac{\pi}{4} ) = 0
				\end{matrix*}
				\right \}
				$,
				i.e., $\cos ( \frac{\pi (k_1 - k_2)}{2 m} + \frac{\pi}{4} ) = 0$. The fact that $0 \leq k_1, k_2 \leq m - 1$, implies that $- \frac{\pi}{4} < - \frac{\pi (m - 1)}{2 m} + \frac{\pi}{4} \leq \frac{\pi (k_1 - k_2)}{2 m} + \frac{\pi}{4} \leq \frac{\pi (m - 1)}{2 m} + \frac{\pi}{4} < \frac{3 \pi}{4}$. Thus, $\frac{\pi (k_1 - k_2)}{2 m} + \frac{\pi}{4} = \frac{\pi}{2} \Rightarrow \frac{\pi (k_1 - k_2)}{2 m} = \frac{\pi}{4} \Rightarrow k_1 - k_2 = \frac{m}{2}$. This is absurd because $k_1 - k_2$ is an integer and $m$ is odd, as we recall from (\ref{eq:Special Case for n Even but Not 4-Multiple}).
		\item	In the next case we encounter the system
				$
				\left \{
				\begin{matrix}
					\cos \left( \frac{\pi (k_1 + k_2)}{2 m} + \frac{\pi}{4} \right) = 0 \\
					\sin \left( \frac{\pi (k_1 + k_2)}{2 m} + \frac{\pi}{4} \right) = 0
				\end{matrix}
				\right \}
				$
				, which is clearly impossible because there is no $\varphi$ such that $\sin \varphi = \cos \varphi = 0$.
		\item	The next case concerns the system
				$
				\left \{
				\begin{matrix}
					\cos \left( \frac{\pi (k_1 + k_2)}{2 m} + \frac{\pi}{4} \right) = 0 \\
					\cos ( \frac{\pi (k_1 - k_2)}{2 m} - \frac{\pi}{4} ) = 0
				\end{matrix}
				\right \}
				$
				that can be transformed via (\ref{eq:Cos Addition Subtraction Identity}) to the equivalent system
				$
				\left \{
				\begin{matrix}
					\cos \left( \frac{\pi (k_1 + k_2)}{2 m} \right) \cos \frac{\pi}{4} -
					\sin \left( \frac{\pi (k_1 + k_2)}{2 m} \right) \sin \frac{\pi}{4} = 0 \\
					\cos \left( \frac{\pi (k_1 - k_2)}{2 m} \right) \cos \frac{\pi}{4} +
					\sin \left( \frac{\pi (k_1 - k_2)}{2 m} \right) \sin \frac{\pi}{4} = 0
				\end{matrix}
				\right \}
				$
				, which gives
				$
				\left \{
				\begin{matrix*}[r]
					\tan \left( \frac{\pi (k_1 + k_2)}{2 m} \right) = 1 \\
					\tan \left( \frac{\pi (k_1 - k_2)}{2 m} \right) = - 1
				\end{matrix*}
				\right \}
				$.
				The fact that $0 \leq k_1, k_2 \leq m - 1$, implies that $0 \leq \frac{\pi (k_1 + k_2)}{2 m} < \pi$ and $- \frac{\pi}{2} < - \frac{\pi (m - 1)}{2 m} \leq \frac{\pi (k_1 - k_2)}{2 m} \leq \frac{\pi (m - 1)}{2 m} < \frac{\pi}{2}$. Therefore, we derive that $\frac{\pi (k_1 + k_2)}{2 m} = \frac{\pi}{4}$ and $\frac{\pi (k_1 - k_2)}{2 m} = - \frac{\pi}{4}$. By subtracting the latter from the former, we derive that $\frac{2 \pi k_2}{2 m} = \frac{\pi}{2} \Rightarrow k_2 = \frac{m}{2}$, which is also impossible because we know from (\ref{eq:Special Case for n Even but Not 4-Multiple}) that $m$ is odd.
		\item	Moving to the next case, we have to deal with the system
				$
				\left \{
				\begin{matrix}
					\sin ( \frac{\pi (k_1 - k_2)}{2 m} + \frac{\pi}{4} ) = 0 \\
					\sin \left( \frac{\pi (k_1 + k_2)}{2 m} + \frac{\pi}{4} \right) = 0
				\end{matrix}
				\right \}
				$
				. Using ( \ref{eq:Sin Addition Subtraction Identity} ), this system can be transformed to the equivalent
				$
				\left \{
				\begin{matrix}
					\sin \left( \frac{\pi (k_1 - k_2)}{2 m} \right) \cos \frac{\pi}{4} +
					\cos \left( \frac{\pi (k_1 - k_2)}{2 m} \right) \sin \frac{\pi}{4} = 0 \\
					\sin \left( \frac{\pi (k_1 + k_2)}{2 m} \right) \cos \frac{\pi}{4} +
					\cos \left( \frac{\pi (k_1 + k_2)}{2 m} \right) \sin \frac{\pi}{4} = 0
				\end{matrix}
				\right \}
				$
				, which, in turn, implies that
				$
				\left \{
				\begin{matrix*}[r]
					\tan \left( \frac{\pi (k_1 - k_2)}{2 m} \right) = - 1 \\
					\tan \left( \frac{\pi (k_1 + k_2)}{2 m} \right) = - 1
				\end{matrix*}
				\right \}
				$.
				The fact that $0 \leq k_1, k_2 \leq m - 1$, implies that $0 \leq \frac{\pi (k_1 + k_2)}{2 m} < \pi$ and $- \frac{\pi}{2} < - \frac{\pi (m - 1)}{2 m} \leq \frac{\pi (k_1 - k_2)}{2 m} \leq \frac{\pi (m - 1)}{2 m} < \frac{\pi}{2}$. Thus, we derive that $\frac{\pi (k_1 + k_2)}{2 m} = \frac{3 \pi}{4}$ and $\frac{\pi (k_1 - k_2)}{2 m} = - \frac{\pi}{4}$. By adding the last two equations, we conclude that $\frac{2 \pi k_1}{2 m} = \frac{\pi}{2} \Rightarrow k_1 = \frac{m}{2}$, which is of course impossible, since we know from (\ref{eq:Special Case for n Even but Not 4-Multiple}) that $m$ is odd.
		\item	Finally, we come to the last case concerning the system
				$
				\left \{
				\begin{matrix}
					\sin ( \frac{\pi (k_1 - k_2)}{2 m} + \frac{\pi}{4} ) = 0 \\
					\cos ( \frac{\pi (k_1 - k_2)}{2 m} - \frac{\pi}{4} ) = 0
				\end{matrix}
				\right \}
				$.
				This system can be rewritten using (\ref{eq:Cos + Sin Theta + Pi/2 Identity}) as
				$
				\left \{
				\begin{matrix*}[r]
					\sin ( \frac{\pi (k_1 - k_2)}{2 m} + \frac{\pi}{4} ) = 0 \\
					\sin ( \frac{\pi (k_1 - k_2)}{2 m} + \frac{\pi}{4} ) = 0
				\end{matrix*}
				\right \}
				$,
				i.e., $\sin ( \frac{\pi (k_1 - k_2)}{2 m} + \frac{\pi}{4} ) = 0$. The fact that $0 \leq k_1, k_2 \leq m - 1$, implies that $- \frac{\pi}{4} < - \frac{\pi (m - 1)}{2 m} + \frac{\pi}{4} \leq \frac{\pi (k_1 - k_2)}{2 m} + \frac{\pi}{4} \leq \frac{\pi (m - 1)}{2 m} + \frac{\pi}{4} < \frac{3 \pi}{4}$. Hence, $\frac{\pi (k_1 - k_2)}{2 m} + \frac{\pi}{4} = 0 \Rightarrow \frac{\pi (k_1 - k_2)}{2 m} = - \frac{\pi}{4} \Rightarrow k_1 - k_2 = - \frac{m}{2}$. This is also absurd because $k_1 - k_2$ is an integer and $m$ is odd, as we recall from (\ref{eq:Special Case for n Even but Not 4-Multiple}).
	\end{itemize}
	Thus, we have shown that the first $m$ kets in the $\ket{\varphi_k}$ sequence are all different from the first $m$ kets in the $\ket{\chi_k}$ sequence, which establishes the validity of (\ref{eq:Orbit of B in D_n for n Even but Not 4-Multiple Appendix Lemma}).
}
\end{proof}

By combining the results of Lemmata \ref{lem:The Action of D_n on B for n 4-Multiple Appendix} and \ref{lem:The Action of D_n on B for n Even Appendix} we can immediately prove Theorem \ref{thr:The Action of D_n on B Appendix Theorem}.

\begin{ManualTheorem}{5.5}[The action of $D_{n}$ on $B$] \label{thr:The Action of D_n on B Appendix Theorem}
	The action of the general dihedral group $D_{n}, n \geq 3,$ on the computational basis $B$ depends on whether $n$ is a multiple of $4$ or $n$ is even but not a multiple of $4$. Specifically,
	\begin{enumerate}
		\item	if $n$ is a multiple of $4$, then the action of the dihedral group $D_{n}$ on the computational basis $B$ is
		\begin{align} \label{eq:Orbit of B in D_n for n 4-Multiple Appendix}
			D_{n} \star \ket{0} = D_{n} \star \ket{1} = D_{n} \star B =
			\{ \cos \frac{2 \pi k}{n} \ket{0} + \sin \frac{2 \pi k}{n} \ket{1} : 0 \leq k < \frac{n}{2} \} \ ,
		\end{align}
		\item	if $n$ is even but not a multiple of $4$, then the action of the dihedral group $D_{n}$ on the computational basis $B$ is
		\begin{align}
			D_{n} \star \ket{0}
			&=
			\{ \cos \frac{2 \pi k}{n} \ket{0} + \sin \frac{2 \pi k}{n} \ket{1} : 0 \leq k < \frac{n}{2} \} \ , \label{eq:Orbit of Ket 0 in D_n for n Even but Not 4-Multiple Appendix}
			\\
			D_{n} \star \ket{1}
			&=
			\{ -\sin \frac{2 \pi k}{n} \ket{0} + \cos \frac{2 \pi k}{n} \ket{1} : 0 \leq k < \frac{n}{2} \} \ , \label{eq:Orbit of Ket 1 in D_n for n Even but Not 4-Multiple Appendix}
			\\
			D_{n} \star B
			&= \{ \cos \frac{2 \pi k}{n} \ket{0} + \sin \frac{2 \pi k}{n} \ket{1} : 0 \leq k < \frac{n}{2} \}
			\cup \{ -\sin \frac{2 \pi k}{n} \ket{0} + \cos \frac{2 \pi k}{n} \ket{1} : 0 \leq k < \frac{n}{2} \} \ . \label{eq:Orbit of B in D_n for n Even but Not 4-Multiple Appendix}
		\end{align}
	\end{enumerate}
\end{ManualTheorem}

We may now give the proof of Theorem \ref{thr:The Fixed Set of M_P in D_n Appendix}.

\begin{ManualTheorem}{5.6}[The fixed set of $\{ I, F \}$ in $D_n$] \label{thr:The Fixed Set of M_P in D_n Appendix}
	The fixed set of $M_P = \{ I, F \}$ in the general dihedral group $D_{n}, n \geq 3,$ depends on whether $n$ is a multiple of $8$ or not.
	\begin{enumerate}
		\item	If $n$ is a multiple of $8$, then:
		\begin{align} \label{eq:The Fixed Set of F in D_n for n 8-Multiple Appendix}
			Fix ( \{ I, F \} ) = Fix ( F ) = \{ \ket{+}, \ket{-} \} \ .
		\end{align}
		\item	In every other case:
		\begin{align} \label{eq:The Fixed Set of F in D_n for n Not 8-Multiple Appendix}
			Fix ( \{ I, F \} ) = Fix ( F ) = \emptyset \ .
		\end{align}
	\end{enumerate}
\end{ManualTheorem}
\begin{proof}[Proof of Theorem \ref{thr:The Fixed Set of M_P in D_n Appendix}]
{\small
	\
	\begin{enumerate}
		\item	Let us first consider the case where $n$ is a multiple of $8$:
				\begin{align} \label{eq:Special Case for n 8-Multiple}
					n = 8 m \ , \ m \geq 1 \ . \tag{ \ref{thr:The Fixed Set of M_P in D_n Appendix}.i } 
				\end{align}
				Consequently, (\ref{eq:Orbit of ket 0 in D_n Appendix}) and (\ref{eq:Orbit of ket 1 in D_n Appendix}) become:
				\begin{align} \label{eq:Orbit of B in D_n for n 8-Multiple Extended Form Appendix}
					\ket{\varphi_k}
					=
					\begin{bmatrix}
						\cos \frac{\pi k}{4 m}
						\\
						\sin \frac{\pi k}{4 m}
					\end{bmatrix}
					\quad {\rm and} \quad
					\ket{\chi_k}
					=
					\begin{bmatrix*}[r]
						- \sin \frac{\pi k}{4 m}
						\\
						\cos \frac{\pi k}{4 m}
					\end{bmatrix*}
					\ . \tag{ \ref{thr:The Fixed Set of M_P in D_n Appendix}.ii }
				\end{align}
				According to (\ref{eq:Orbit of B in D_n for n 4-Multiple Appendix Lemma}) the range of $k$ is $0 \leq k < 4 m$. By setting $k = m$ in (\ref{eq:Orbit of B in D_n for n 4-Multiple Appendix Lemma}) we derive that $\cos \frac{2 \pi m}{8 m} \ket{0} + \sin \frac{2 \pi m}{8 m} \ket{1} = \cos \frac{\pi}{4} \ket{0} + \sin \frac{\pi}{4} \ket{1} = \ket{+}$ belongs to $D_{n} \star B$. Likewise, by setting $k = 3 m$ in (\ref{eq:Orbit of B in D_n for n 4-Multiple Appendix Lemma}) we get that $\cos \frac{2 \pi 3 m}{8 m} \ket{0} + \sin \frac{2 \pi 3 m}{8 m} \ket{1} = \cos \frac{3 \pi}{4} \ket{0} + \sin \frac{3 \pi}{4} \ket{1} = - \ket{-}$ belongs to $D_{n} \star B$. The above calculations show that the states $\ket{+}$ and $\ket{-}$ belong to the orbit of $B$. We already know that $F$ fixes these kets (recall Proposition \ref{prp:The Fixed Set of M_P in D_8 Appendix}). What remains is to prove that $F$ fixes no other state in the orbit of $B$. So, let us suppose to the contrary that $F$ also fixes some ket other than $\ket{+}$ and $\ket{-}$. This means that there exists a $k$, $0 \leq k < 4 m$ but $k \neq m, 3 m$, such that
				\begin{align} \label{eq:F Fixes No Other Ket for n 8-Multiple Appendix}
					F
					\begin{bmatrix}
						\cos \frac{\pi k}{4 m}
						\\
						\sin \frac{\pi k}{4 m}
					\end{bmatrix}
					=
					\begin{bmatrix}
						\sin \frac{\pi k}{4 m}
						\\
						\cos \frac{\pi k}{4 m}
					\end{bmatrix}
					=
					\pm
					\begin{bmatrix*}[r]
						\cos \frac{\pi k}{4 m}
						\\
						\sin \frac{\pi k}{4 m}
					\end{bmatrix*}
					\Rightarrow
					\left \{
					\begin{matrix*}[l]
						\sin \frac{\pi k}{4 m} = \cos \frac{\pi k}{4 m} \\
						\cos \frac{\pi k}{4 m} = \sin \frac{\pi k}{4 m}
					\end{matrix*}
					\right \}
					&\text{ or }
					\left \{
					\begin{matrix*}[l]
						\sin \frac{\pi k}{4 m} = - \cos \frac{\pi k}{4 m} \\
						\cos \frac{\pi k}{4 m} = - \sin \frac{\pi k}{4 m}
					\end{matrix*}
					\right \}
					\Rightarrow
					\nonumber \\
					\tan \left( \frac{\pi k}{4 m} \right) = 1
					\quad &\text{ or } \quad
					\tan \left( \frac{\pi k}{4 m} \right) = - 1
					\tag{ \ref{thr:The Fixed Set of M_P in D_n Appendix}.iii }
					\ .
				\end{align}
				The fact that $0 \leq k < 4 m$, implies that $0 \leq \frac{\pi k}{4 m} < \pi$. Therefore, either $\frac{\pi k}{4 m} = \frac{\pi}{4}$ or $\frac{\pi k}{4 m} = \frac{3 \pi}{4}$. The former equation leads to $k = m$ and the latter to $k = 3 m$, which correspond to kets $\ket{+}$ and $\ket{-}$, respectively. No other values for $k$ arise and, thus, $F$ fixes no other state.
		\item	If $n$ is not a multiple of $8$, then we may distinguish the following cases.
				\begin{itemize}
					\item	$n$ is a multiple of $4$, but not a multiple of $8$. This implies that $n = 4 m$, where $m$ is a positive \emph{odd} integer. Accordingly, (\ref{eq:Orbit of ket 0 in D_n Appendix}) and (\ref{eq:Orbit of ket 1 in D_n Appendix}) become:
							\begin{align} \label{eq:Orbit of B in D_n for n 4-Multiple Extended Form Appendix Theorem}
								\ket{\varphi_k}
								=
								\begin{bmatrix}
									\cos \frac{\pi k}{2 m}
									\\
									\sin \frac{\pi k}{2 m}
								\end{bmatrix}
								\quad {\rm and} \quad
								\ket{\chi_k}
								=
								\begin{bmatrix*}[r]
									- \sin \frac{\pi k}{2 m}
									\\
									\cos \frac{\pi k}{2 m}
								\end{bmatrix*}
								\ . \tag{ \ref{thr:The Fixed Set of M_P in D_n Appendix}.iv }
							\end{align}
							According to (\ref{eq:Orbit of B in D_n for n 4-Multiple Appendix Lemma}) the range of $k$ is $0 \leq k < 2 m$. Let us first assume that there exists a $k$ such that $\frac{\pi k}{2 m} = \frac{\pi}{4}$. But this is absurd because then $k$ must be equal to $\frac{m}{2}$. Similarly, the existence of a $k$ such that $\frac{\pi k}{2 m} = \frac{3 \pi}{4}$ is impossible because then $k$ must be equal to $\frac{3 m}{2}$. The previous calculations establish that $\ket{+}$ and $\ket{-}$ do not belong to the orbit of $B$. We now show that $F$ fixes no ket in the orbit of $B$. If $F$ did fix some ket, then there would be a $k$, $0 \leq k < 2 m$, such that
							\begin{align} \label{eq:F Fixes No Other Ket for n 4-Multiple Appendix}
								F
								\begin{bmatrix}
									\cos \frac{\pi k}{2 m}
									\\
									\sin \frac{\pi k}{2 m}
								\end{bmatrix}
								=
								\begin{bmatrix}
									\sin \frac{\pi k}{2 m}
									\\
									\cos \frac{\pi k}{2 m}
								\end{bmatrix}
								=
								\pm
								\begin{bmatrix*}[r]
									\cos \frac{\pi k}{2 m}
									\\
									\sin \frac{\pi k}{2 m}
								\end{bmatrix*}
								\Rightarrow
								\left \{
								\begin{matrix*}[l]
									\sin \frac{\pi k}{2 m} = \cos \frac{\pi k}{2 m} \\
									\cos \frac{\pi k}{2 m} = \sin \frac{\pi k}{2 m}
								\end{matrix*}
								\right \}
								&\text{ or }
								\left \{
								\begin{matrix*}[l]
									\sin \frac{\pi k}{2 m} = - \cos \frac{\pi k}{2 m} \\
									\cos \frac{\pi k}{2 m} = - \sin \frac{\pi k}{2 m}
								\end{matrix*}
								\right \}
								\Rightarrow
								\nonumber \\
								\tan \left( \frac{\pi k}{2 m} \right) = 1
								\quad &\text{ or } \quad
								\tan \left( \frac{\pi k}{2 m} \right) = - 1
								\tag{ \ref{thr:The Fixed Set of M_P in D_n Appendix}.v }
								\ .
							\end{align}
							The fact that $0 \leq k < 2 m$, implies that $0 \leq \frac{\pi k}{2 m} < \pi$. Therefore, either $\frac{\pi k}{2 m} = \frac{\pi}{4}$ or $\frac{\pi k}{2 m} = \frac{3 \pi}{4}$. The former equation leads to $k = \frac{m}{2}$ and the latter to $k = \frac{3 m}{2}$, which are both impossible. Hence, $F$ fixes no state from the orbit of $B$.
					\item	$n$ is even, but not a multiple of $4$. This implies that $n = 2 m$, where $m$ is a positive \emph{odd} integer. Then (\ref{eq:Orbit of ket 0 in D_n Appendix}) and (\ref{eq:Orbit of ket 1 in D_n Appendix}) become:
							\begin{align} \label{eq:Orbit of B in D_n for n Even but Not 4-Multiple Extended Form Appendix Theorem}
								\ket{\varphi_k}
								=
								\begin{bmatrix}
									\cos \frac{\pi k}{m}
									\\
									\sin \frac{\pi k}{m}
								\end{bmatrix}
								\quad {\rm and} \quad
								\ket{\chi_k}
								=
								\begin{bmatrix*}[r]
									- \sin \frac{\pi k}{m}
									\\
									\cos \frac{\pi k}{m}
								\end{bmatrix*}
								\ . \tag{ \ref{thr:The Fixed Set of M_P in D_n Appendix}.vi }
							\end{align}
							According to (\ref{eq:Orbit of B in D_n for n Even but Not 4-Multiple Appendix Lemma}) the range of $k$ is $0 \leq k < m$. Let us assume that there exists a $k$ such that $\frac{\pi k}{m} = \frac{\pi}{4}$. But this is absurd because then $k$ must be equal to $\frac{m}{4}$. Similarly, the existence of a $k$ such that $\frac{\pi k}{m} = \frac{3 \pi}{4}$ is impossible because then $k$ must be equal to $\frac{3 m}{4}$, since $m$ is odd. The previous calculations establish that $\ket{+}$ and $\ket{-}$ do not belong to the orbit of $B$. We now show that $F$ fixes no ket in the orbit of $B$. If $F$ did fix some ket, then there would be a $k$, $0 \leq k < m$, such that
							\begin{align} \label{eq:F Fixes No Other Ket for n Even Appendix}
								F
								\begin{bmatrix}
									\cos \frac{\pi k}{m}
									\\
									\sin \frac{\pi k}{m}
								\end{bmatrix}
								=
								\begin{bmatrix}
									\sin \frac{\pi k}{m}
									\\
									\cos \frac{\pi k}{m}
								\end{bmatrix}
								=
								\pm
								\begin{bmatrix*}[r]
									\cos \frac{\pi k}{m}
									\\
									\sin \frac{\pi k}{m}
								\end{bmatrix*}
								\Rightarrow
								\left \{
								\begin{matrix*}[l]
									\sin \frac{\pi k}{m} = \cos \frac{\pi k}{m} \\
									\cos \frac{\pi k}{m} = \sin \frac{\pi k}{m}
								\end{matrix*}
								\right \}
								&\text{ or }
								\left \{
								\begin{matrix*}[l]
									\sin \frac{\pi k}{m} = - \cos \frac{\pi k}{m} \\
									\cos \frac{\pi k}{m} = - \sin \frac{\pi k}{m}
								\end{matrix*}
								\right \}
								\Rightarrow
								\nonumber \\
								\tan \left( \frac{\pi k}{m} \right) = 1
								\quad &\text{ or } \quad
								\tan \left( \frac{\pi k}{m} \right) = - 1
								\tag{ \ref{thr:The Fixed Set of M_P in D_n Appendix}.vii }
								\ .
							\end{align}
							The fact that $0 \leq k < m$, implies that $0 \leq \frac{\pi k}{m} < \pi$. Hence, either $\frac{\pi k}{m} = \frac{\pi}{4}$ or $\frac{\pi k}{m} = \frac{3 \pi}{4}$. The former equation leads to $k = \frac{m}{4}$ and the latter to $k = \frac{3 m}{4}$, which are both impossible because $m$ is odd. Hence, $F$ fixes no state from the orbit of $B$.
				\end{itemize}

	\end{enumerate}
}
\end{proof}

The preceding results allow to easily prove Theorem \ref{thr:Q's PQG Winning Strategies in D_{8 n} Appendix}.

\begin{ManualTheorem}{5.7}[The ambient group of the $PQG$ is $D_{8 n}$] \label{thr:Q's PQG Winning Strategies in D_{8 n} Appendix}
	If $M_P = \{ I, F \}$ and $M_Q = D_{8 n}$, i.e., the ambient group of the $PQG$ is $D_{8 n}$, where $n \geq 1$, then the following hold.
	\begin{enumerate}
		\item	Q has exactly two classes of winning and dominant strategies
		\begin{align} \label{eq:D_{8 n} Winning Strategy Classes Appendix}
			\mathcal{C}_{+} = [ (H, H) ] \quad \text{and} \quad \mathcal{C}_{-} = [ (S_{\frac{7 \pi}{8}}, S_{\frac{7 \pi}{8}}) ] \ ,
		\end{align}
		each containing $16$ equivalent strategies.
		\item	The winning state paths corresponding to $\mathcal{C}_{+}$ and $\mathcal{C}_{-}$ are
		\begin{align} \label{eq:D_{8 n} Winning State Paths Appendix}
			\tau_{\mathcal{C}_{+}} = (\ket{0}, \ket{+}, \ket{0}) \quad \text{and} \quad \tau_{\mathcal{C}_{-}} = (\ket{0}, \ket{-}, \ket{0}) \ .
		\end{align}
		\item	Picard has no winning strategy.
	\end{enumerate}
\end{ManualTheorem}
\begin{proof}[Proof of Theorem \ref{thr:Q's PQG Winning Strategies in D_{8 n} Appendix}]
{\small \
	\begin{enumerate}
		\item	The two classes of winning strategies of Q, $\mathcal{C}_{+}$ and $\mathcal{C}_{-}$, which were establish by Theorem \ref{thr:Q's PQG Winning Strategies in D_8 Appendix}, are also present in every dihedral group $D_{8 n}$.

				Let us first suppose that in some dihedral group larger than $D_{8}$ there exists a third class $\mathcal{C}'$ and consider a strategy $\sigma = (A_{1}, A_{2})$ in this class. Then the action of $A_{1}$ must drive the coin into some state other than $\ket{+}$ or $\ket{-}$. However, (\ref{eq:Characteristic Property II of Winning Strategies Appendix}) asserts that $A_1 \ket{0} \in Fix ( \{ F \} )$, which, in view of (\ref{eq:The Fixed Set of F in D_n for n 8-Multiple Appendix}), implies that $A_1 \ket{0} \in \{ \ket{+}, \ket{-} \}$, a contradiction. Hence, there are just two classes of winning strategies $\mathcal{C}_{+}$ and $\mathcal{C}_{-}$. It remains to prove that $\mathcal{C}_{+}$ and $\mathcal{C}_{-}$ do not contain any new winning strategy. So, let us temporarily suppose that $\sigma = (A_{1}, A_{2})$ is a ``new winning'' strategy, that is other than those established in $D_{8}$. Let us first examine the possibility that $A_{1}$ sends the coin to state $\ket{+}$ and assume that $A_{1}$ is other than $H, R_{\frac{2 \pi}{8}}, S_{\frac{5 \pi}{8}}$ and $R_{\frac{10 \pi}{8}}$. If $A_{1}$ is a rotation, then, according to (\ref{eq:Standard Representation r^k in D_n}), there exist $n \geq 1$ and $k, 0 \leq k < 8 n,$ such that
				$A_{1}
				=
				\begin{bmatrix}
					\begin{array}{lr}
						\cos \frac{2 \pi k}{8 n} & - \sin \frac{2 \pi k}{8 n} \\
						\sin \frac{2 \pi k}{8 n} & \cos \frac{2 \pi k}{8 n}
					\end{array}
				\end{bmatrix}
				$. Therefore,
				\begin{align}
					\begin{bmatrix}
						\begin{array}{lr}
							\cos \frac{2 \pi k}{8 n} & - \sin \frac{2 \pi k}{8 n} \\
							\sin \frac{2 \pi k}{8 n} & \cos \frac{2 \pi k}{8 n}
						\end{array}
					\end{bmatrix}
					\begin{bmatrix}
						\begin{array}{lr}
							1 \\
							0
						\end{array}
					\end{bmatrix}
					=
					\pm
					\frac{1}{\sqrt{2}}
					\begin{bmatrix}
						\begin{array}{lr}
							1 \\
							1
						\end{array}
					\end{bmatrix}
					\Rightarrow
					\begin{bmatrix}
						\begin{array}{lr}
							\cos \frac{2 \pi k}{8 n} \\
							\sin \frac{2 \pi k}{8 n}
						\end{array}
					\end{bmatrix}
					=
					\pm
					\frac{1}{\sqrt{2}}
					\begin{bmatrix}
						\begin{array}{lr}
							1 \\
							1
						\end{array}
					\end{bmatrix}
					\Rightarrow
					\nonumber \\
					\left \{
					\begin{matrix}
						\cos \frac{2 \pi k}{8 n} = \frac{1}{\sqrt{2}} \\
						\sin \frac{2 \pi k}{8 n} = \frac{1}{\sqrt{2}}
					\end{matrix}
					\right \}
					\text{ or }
					\left \{
					\begin{matrix}
						\cos \frac{2 \pi k}{8 n} = - \frac{1}{\sqrt{2}} \\
						\sin \frac{2 \pi k}{8 n} = - \frac{1}{\sqrt{2}}
					\end{matrix}
					\right \}
					\ .
					\nonumber
				\end{align}
				The fact that $0 \leq k < 8 n$, implies that $0 \leq \frac{2 \pi k}{8 n} < 2 \pi$. Hence, either $\frac{2 \pi k}{8 n} = \frac{\pi}{4}$ or $\frac{2 \pi k}{8 n} = \frac{5\pi}{4}$. The former equation leads to $k = n$ and the latter to $k = 5 n$.
				This means that
				$A_{1}
				=
				\begin{bmatrix}
					\begin{array}{lr}
						\cos \frac{2 \pi}{8} & - \sin \frac{2 \pi}{8} \\
						\sin \frac{2 \pi}{8} & \cos \frac{2 \pi}{8}
					\end{array}
				\end{bmatrix}
				=
				R_{\frac{2 \pi}{8}}
				$ or
				$A_{1}
				=
				\begin{bmatrix}
					\begin{array}{lr}
						\cos \frac{10 \pi}{8} & - \sin \frac{10 \pi}{8} \\
						\sin \frac{10 \pi}{8} & \cos \frac{10 \pi}{8}
					\end{array}
				\end{bmatrix}
				=
				R_{\frac{10 \pi}{8}}
				$, which contradicts our assumption that $A_{1}$ is different from $R_{\frac{2 \pi}{8}}$ and $R_{\frac{10 \pi}{8}}$.

				We arrive at similar contradictions if we assume that $A_{1}$ is a reflection different from $H$ or $S_{\frac{5 \pi}{8}}$ or that $A_{1}$ drives the coin to state $\ket{-}$. Thus, we conclude that, other than those already existing in $D_{8}$, there are no more winning strategies for Q in the larger dihedral groups $D_{8 n}$.
		\item	Based on the above analysis it is straightforward to see that (\ref{eq:D_{8 n} Winning State Paths Appendix}) holds.
		\item	By Definition \ref{def:Winning and Dominant Strategies}, Picard has no winning strategy because if Q employs one of his winning strategies, Picard has $0.0$ probability to win the game.
	\end{enumerate}
}
\end{proof}

The next two theorem settle the most general case, where the ambient group is $U(2)$.

\begin{ManualTheorem}{5.8}[The fixed set of $\{ I, F \}$ in $U(2)$] \label{thr:The Fixed Set of M_P in U(2) Appendix}
	Under the action of $U(2)$ on the computational basis $B$, the fixed set of $M_P = \{ I, F \}$ is
	\begin{align} \label{eq:The Fixed Set of F in U(2) Appendix}
		Fix ( \{ I, F \} ) = Fix ( F ) = \{ \ket{+}, \ket{-} \} \ .
	\end{align}
\end{ManualTheorem}
\begin{proof}[Proof of Theorem \ref{thr:The Fixed Set of M_P in U(2) Appendix}]
{\small
	In this most general case we must find the eigenvalues and the eigenkets of the flip operator $F$. We may start by formulating the characteristic equation of $F$:
	\begin{align} \label{eq:F Characteristic Equation}
		\det (F - \lambda I)
		=
		\begin{vmatrix}
			\begin{array}{lr}
				- \lambda & 1 \\
				1 & - \lambda
			\end{array}
		\end{vmatrix}
		=
		\lambda^{2} - 1 = (\lambda + 1) (\lambda - 1)
		\Rightarrow
		\lambda = \pm 1
		\ .
		\tag{ \ref{thr:The Fixed Set of M_P in U(2) Appendix}.i }
	\end{align}
	Hence, the two eigenvalues of $F$ are $\lambda = 1$ and $\lambda = -1$. If
	$
	\begin{bmatrix}
		x_1
		\\
		x_2
	\end{bmatrix}
	$ is an eigenket corresponding to the eigenvalue $\lambda = 1$, then $(F - \lambda I) \ket{\psi} = \mathbf{0}$. So, in the first case where the eigenvalue is $1$ we have
	\begin{align} \label{eq:F Eigenket for lambda = 1}
		\begin{bmatrix}
			\begin{array}{lr}
				- 1 & 1 \\
				1 & - 1
			\end{array}
		\end{bmatrix}
		\begin{bmatrix}
			x_1
			\\
			x_2
		\end{bmatrix}
		=
		\begin{bmatrix}
			0
			\\
			0
		\end{bmatrix}
		\Rightarrow
		\left \{
		\begin{matrix}
			- x_1 + x_2 = 0 \\
			x_1 - x_2 = 0
		\end{matrix}
		\right \}
		\ .
		\tag{ \ref{thr:The Fixed Set of M_P in U(2) Appendix}.ii }
	\end{align}
	The general solution is $x_1 = z$ and $x_2 = z$, where $z \in \mathbb{C}$. In matrix form
	$
	\begin{bmatrix}
		x_1
		\\
		x_2
	\end{bmatrix}
	$
	can be written as
	$
	\begin{bmatrix}
		z
		\\
		z
	\end{bmatrix}
	=
	z
	\begin{bmatrix}
		1
		\\
		1
	\end{bmatrix}
	$.
	The normalization condition is satisfied if we take $z = \frac{1}{\sqrt{2}}$. Thus, the eigenket corresponding to the eigenvalue $1$ is
	$
	\frac{1}{\sqrt{2}}
	\begin{bmatrix}
		1
		\\
		1
	\end{bmatrix}
	=
	\ket{+}
	$,
	which can be viewed as a basis for the eigenspace corresponding to the eigenvalue $1$.

	Symmetrically, when the eigenvalue is $\lambda = -1$ we have
	\begin{align} \label{eq:F Eigenket for lambda = -1}
		\begin{bmatrix}
			\begin{array}{lr}
				1 & 1 \\
				1 & 1
			\end{array}
		\end{bmatrix}
		\begin{bmatrix}
			x_1
			\\
			x_2
		\end{bmatrix}
		=
		\begin{bmatrix}
			0
			\\
			0
		\end{bmatrix}
		\Rightarrow
		\left \{
		\begin{matrix}
			x_1 + x_2 = 0 \\
			x_1 + x_2 = 0
		\end{matrix}
		\right \}
		\ .
		\tag{ \ref{thr:The Fixed Set of M_P in U(2) Appendix}.iii }
	\end{align}
	The general solution is $x_1 = z$ and $x_2 = - z$, where $z \in \mathbb{C}$. In matrix form
	$
	\begin{bmatrix}
		x_1
		\\
		x_2
	\end{bmatrix}
	$
	can be written as
	$
	\begin{bmatrix}
		z
		\\
		- z
	\end{bmatrix}
	=
	z
	\begin{bmatrix}
		1
		\\
		- 1
	\end{bmatrix}
	$.
	Again, the normalization condition is satisfied if we take $z = \frac{1}{\sqrt{2}}$. Hence, the eigenket corresponding to the eigenvalue $1$ is
	$
	\frac{1}{\sqrt{2}}
	\begin{bmatrix}
		1
		\\
		- 1
	\end{bmatrix}
	=
	\ket{-}
	$,
	which can be considered as a basis for the eigenspace corresponding to the eigenvalue $-1$.

	We have, therefore, established that $F$ fixes both $\ket{+}$ and $\ket{-}$, since:
	 \begin{align} \label{eq:The Fixed Set of F in U(2) Extended Form Appendix}
	 	\begin{bmatrix}
	 		0 & 1
	 		\\
	 		1 & 0
	 	\end{bmatrix}
	 	\ket{+}
	 	=
	 	\ket{+}
	 	\quad {\rm and} \quad
	 	\begin{bmatrix}
			0 & 1
			\\
			1 & 0
		\end{bmatrix}
		\ket{-}
		=
		- \ket{-}
	\ . \tag{ \ref{thr:The Fixed Set of M_P in U(2) Appendix}.iv }
	\end{align}
}
\end{proof}

\begin{ManualTheorem}{5.9}[The ambient group of the $PQG$ is $U(2)$] \label{thr:Q's PQG Winning Strategies in U(2) Appendix}
	If $M_P = \{ I, F \}$ and $M_Q = U(2)$, i.e., the ambient group of the $PQG$ is $U(2)$, then the following hold.
	\begin{enumerate}
		\item	Q has exactly two classes of winning and dominant strategies, each containing infinite equivalent strategies:
				\begin{align} \label{eq:U(2) Winning Strategy Classes Appendix}
					\mathcal{C}_{+} = [ ( A_{1} (\theta_{1}), A_{2} (\theta_{2}) ) ] \quad \text{and} \quad \mathcal{C}_{-} = [ ( B_{1} (\theta_{3}), B_{2} (\theta_{4}) ) ] \ ,
				\end{align}
				where
				\begin{itemize}
					\item	$A_{1} (\theta_{1})$ is one of $H (\theta_{1}), R_{\frac{2 \pi}{8}} (\theta_{1}), S_{\frac{5 \pi}{8}} (\theta_{1})$ or $R_{\frac{10 \pi}{8}} (\theta_{1})$,
					\item	$A_{2} (\theta_{2})$ is one of $H (\theta_{2}), R_{\frac{14 \pi}{8}} (\theta_{2}), S_{\frac{5 \pi}{8}} (\theta_{2})$ or $R_{\frac{6 \pi}{8}} (\theta_{2})$,
					\item	$B_{1} (\theta_{3})$ is one of $S_{\frac{7 \pi}{8}} (\theta_{3}), R_{\frac{14 \pi}{8}} (\theta_{3}), S_{\frac{3 \pi}{8}} (\theta_{3})$ or $R_{\frac{6 \pi}{8}} (\theta_{3})$,
					\item	$B_{2} (\theta_{4})$ is one of $S_{\frac{7 \pi}{8}} (\theta_{4}), R_{\frac{2 \pi}{8}} (\theta_{4}), S_{\frac{3 \pi}{8}} (\theta_{4})$ or $R_{\frac{10 \pi}{8}} (\theta_{4})$, and
					\item	$\theta_{1}, \theta_{2}, \theta_{3}, \theta_{4}$ are possibly different real parameters.
				\end{itemize}
		\item	The winning state paths corresponding to $\mathcal{C}_{+}$ and $\mathcal{C}_{-}$ are
		\begin{align} \label{eq:U(2) Winning State Paths Appendix}
			\tau_{\mathcal{C}_{+}} = (\ket{0}, \ket{+}, \ket{0}) \quad \text{and} \quad \tau_{\mathcal{C}_{-}} = (\ket{0}, \ket{-}, \ket{0}) \ .
		\end{align}
		\item	Picard has no winning strategy.
	\end{enumerate}
\end{ManualTheorem}
\begin{proof}[Proof of Theorem \ref{thr:Q's PQG Winning Strategies in U(2) Appendix}]
{\small \
	\begin{enumerate}
		\item	The two classes of winning strategies of Q, $\mathcal{C}_{+}$ and $\mathcal{C}_{-}$, which were establish by Theorem \ref{thr:Q's PQG Winning Strategies in D_8 Appendix}, are still present in $U(2)$. However, they now contain infinitely many equivalent strategies. To see why this is so, let us consider a winning strategy $\sigma = (A_{1}, A_{2})$ in $\mathcal{C}_{+}$. We may associate to this strategy the collection of infinitely many strategies $( A_{1} (\theta_{1}), A_{2} (\theta_{2}) )$, where $\theta_{1}, \theta_{2} \in \mathbb{R}$. Every strategy in this collection of strategies is equivalent to $(A_{1}, A_{2})$ because the action of every operator $A \in U(2)$ on a ket $\ket{\psi}$ is the same as the action of $e^{i \theta} A \in U(2)$ on $\ket{\psi}$. The same holds for every strategy in $\mathcal{C}_{-}$. Hence, we may conclude that in $U(2)$ the two classes $\mathcal{C}_{+}$ and $\mathcal{C}_{-}$ contain infinitely many equivalent strategies.

		Let us assume that there exists a third class $\mathcal{C}'$ and consider a strategy $\sigma = (A_{1}, A_{2})$ in this class. Then the action of $A_{1}$ must drive the coin into some state other than $\ket{+}$ or $\ket{-}$. However, (\ref{eq:Characteristic Property II of Winning Strategies Appendix}) asserts that $A_1 \ket{0} \in Fix ( \{ F \} )$, which, in view of (\ref{eq:The Fixed Set of F in U(2) Appendix}), implies that $A_1 \ket{0} \in \{ \ket{+}, \ket{-} \}$, a contradiction. Consequently, there are just two classes of winning strategies $\mathcal{C}_{+}$ and $\mathcal{C}_{-}$.

		It remains to prove that $\mathcal{C}_{+}$ and $\mathcal{C}_{-}$ do not contain any new winning strategy that are not of the form stated in (\ref{eq:U(2) Winning Strategy Classes Appendix}). To arrive at a contradiction, let us suppose that $\sigma = (A_{1}, A_{2})$ is a ``new winning'' strategy. If $A_{1}$
		$
		=
		\begin{bmatrix}
			\begin{array}{lr}
				z_{1} & w_{1} \\
				z_{2} & w_{2}
			\end{array}
		\end{bmatrix}
		$,
		then its columns form an orthonormal basis for $\mathcal{H}_2$ because it is a unitary operator. This means that its rows and columns satisfy the following relations:
		\begin{align}
			z_{1}^{*} z_{1} + z_{2}^{*} z_{2} &= 1 \ ,
			\label{eq:Orhonormal Condition I}
			\tag{ \ref{thr:Q's PQG Winning Strategies in U(2) Appendix}.i }
			\\
			w_{1}^{*} w_{1} + w_{2}^{*} w_{2} &= 1 \ , \quad \text{and}
			\label{eq:Orhonormal Condition II}
			\tag{ \ref{thr:Q's PQG Winning Strategies in U(2) Appendix}.ii }
			\\
			z_{1}^{*} w_{1} + z_{2}^{*} w_{2} &= 0 \ .
			\label{eq:Orhonormal Condition III}
			\tag{ \ref{thr:Q's PQG Winning Strategies in U(2) Appendix}.iii }
		\end{align}
		First, we investigate the possibility that $A_{1}$ sends the coin to state $\ket{+}$, assuming that $A_{1}$ is other than $H (\theta_{1}), R_{\frac{2 \pi}{8}} (\theta_{1}), S_{\frac{5 \pi}{8}} (\theta_{1})$ or $R_{\frac{10 \pi}{8}} (\theta_{1})$. In this case,
		\begin{align} \label{eq:Column 1 of A_1}
			\begin{bmatrix}
				\begin{array}{lr}
					z_{1} & w_{1} \\
					z_{2} & w_{2}
				\end{array}
			\end{bmatrix}
			\begin{bmatrix}
				\begin{array}{lr}
					1 \\
					0
				\end{array}
			\end{bmatrix}
			=
			e^{i \theta_{1}}
			\begin{bmatrix}
				\begin{array}{lr}
					\frac{1}{\sqrt{2}} \\
					\frac{1}{\sqrt{2}}
				\end{array}
			\end{bmatrix}
			\Rightarrow
			\begin{bmatrix}
				\begin{array}{lr}
					z_{1} \\
					z_{2}
				\end{array}
			\end{bmatrix}
			=
			e^{i \theta_{1}}
			\begin{bmatrix}
				\begin{array}{lr}
					\frac{1}{\sqrt{2}} \\
					\frac{1}{\sqrt{2}}
				\end{array}
			\end{bmatrix}
			\ ,
			\ \text{for \ some} \ \theta_{1} \in \mathbb{R} \ .
			\tag{ \ref{thr:Q's PQG Winning Strategies in U(2) Appendix}.iv }
		\end{align}
		If we combine (\ref{eq:Column 1 of A_1}) with (\ref{eq:Orhonormal Condition III}) we derive that
		\begin{align} \label{eq:Relation Between w1 and w2}
			z_{1}^{*} w_{1} + z_{2}^{*} w_{2} = 0
			\Rightarrow
			\frac{e^{- i \theta_{1}}}{\sqrt{2}} (w_{1} +  w_{2}) = 0
			\Rightarrow
			w_{2} = - w_{1}
			\ .
			\tag{ \ref{thr:Q's PQG Winning Strategies in U(2) Appendix}.v }
		\end{align}
		In view of (\ref{eq:Relation Between w1 and w2}), (\ref{eq:Orhonormal Condition III}) becomes
		\begin{align} \label{eq:Modulus of w1 and w2}
			2 | w_{1} |^{2} = 1 \Rightarrow | w_{1} | = \frac{1}{\sqrt{2}}
			\ .
			\tag{ \ref{thr:Q's PQG Winning Strategies in U(2) Appendix}.vi }
		\end{align}
		All complex numbers of the form $e^{i \varphi} \frac{1}{\sqrt{2}}$, where $\varphi \in \mathbb{R}$, are solutions of the equation (\ref{eq:Modulus of w1 and w2}). We may therefore choose $\varphi = \theta_{1}$ and set $w_{1} = \frac{e^{i \theta_{1}}}{\sqrt{2}}$, in which case, $w_{2}$ becomes $- \frac{e^{i \theta_{1}}}{\sqrt{2}}$. Hence, $A_{1}$ is in fact
		$
		e^{i \theta_{1}}
		\begin{bmatrix}
			\begin{array}{lr}
				\frac{1}{\sqrt{2}} & \frac{1}{\sqrt{2}} \\
				\frac{1}{\sqrt{2}} & - \frac{1}{\sqrt{2}}
			\end{array}
		\end{bmatrix}
		$,
		which means that $A_{1}$ is one of $H (\theta_{1}), R_{\frac{2 \pi}{8}} (\theta_{1}), S_{\frac{5 \pi}{8}} (\theta_{1})$ or $R_{\frac{10 \pi}{8}} (\theta_{1})$, in stark contrast to our initial assumption.

		We arrive at similar contradictions if we assume that $A_{1}$ drives the coin to state $\ket{-}$. Thus, we conclude that, other than the strategies described by (\ref{eq:U(2) Winning Strategy Classes Appendix}), there are no more winning strategies for Q.
		\item	Based on the above analysis it is straightforward to see that (\ref{eq:U(2) Winning State Paths Appendix}) holds.
		\item	By Definition \ref{def:Winning and Dominant Strategies}, Picard has no winning strategy because if Q employs one of his winning strategies, Picard has $0.0$ probability to win the game.
	\end{enumerate}
}
\end{proof}

\subsection{Proofs for Section \ref{sec:Extending the PQG}}

We now prove the important Theorem \ref{thr:Picard Lacks a Winning Strategy Appendix}.

\begin{ManualTheorem}{6.1}[Picard lacks a winning strategy] \label{thr:Picard Lacks a Winning Strategy Appendix}
	Picard does not have a winning strategy in any $n$-round game, $n \geq 2$, as long as Q makes at least one move.
\end{ManualTheorem}
\begin{proof}[Proof of Theorem \ref{thr:Picard Lacks a Winning Strategy Appendix}]
	{\small
		Let us first note that according to Definition \ref{def:Extended PQGs} and our assumptions at the beginning of Section \ref{sec:Extending the PQG}, in every $n$-round game with $n \geq 2$, Q makes at least one move. As a matter of fact, any $n$-round game has one of the following forms.
		\begin{enumerate}
			\item	$(Q, P, \dots, Q, P)$, in which case if Q employs the strategy $(H, I, \dots, I)$, the state of the coin prior to measurement will either be $\ket{+}$, if the initial state of the coin is $\ket{0}$, or $\ket{-}$, if the initial state of the coin is $\ket{1}$. This is because $Fix ( \{ I, F \} ) = \{ \ket{+}, \ket{-} \}$, so the coin will stay in one of these states no matter which strategy Picard uses. When the coin is measured, Picard will have exactly $0.5$ probability to win irrespective of which is his target state. Thus, Picard does not possess a winning strategy because, by Definition \ref{def:Winning and Dominant Strategies}, a winning strategy means that he wins the game with probability $1.0$.
			\item	$(P, Q, \dots, P, Q)$, where again, if the same strategy $(H, I, \dots, I)$ is used by Q, will prevent Picard from surely winning the game. The coin will be in one of its basis states (which one depends on the initial state and Picard's first move) when Q acts on it for the first time. His action will drive the coin to state $\ket{+}$ (if the coin was at state $\ket{0}$) or $\ket{-}$ (if the coin was at state $\ket{1}$). This will be the state of the coin prior to measurement no matter which strategy Picard uses because $Fix ( \{ I, F \} ) = \{ \ket{+}, \ket{-} \}$.  When the coin is measured, Picard will have exactly $0.5$ probability to win irrespective of which is his target state. Hence, Picard does not have a winning strategy because, by Definition \ref{def:Winning and Dominant Strategies}, a winning strategy means that he wins the game with probability $1.0$.
			\item	$(Q, P, \dots, Q, P, Q)$, in which case Q has a winning strategy. If the initial state is the same as Q's target state, then Q's winning strategy is $(H, I, \dots, I, H)$; if it is different then Q's winning strategy is $(H, I, \dots, I, FH)$. In this case Picard has precisely $0.0$ probability to win, so he certainly does not possess a winning strategy.
			\item	$(P, Q, \dots, P, Q, P)$, where once more the strategy $(H, I, \dots, I)$ can be used by Q to prevent Picard from surely winning the game. The coin will be in one of its basis states (which one depends on the initial state and Picard's first move) when Q acts on it for the first time. His action will drive the coin to state $\ket{+}$ (if the coin was at state $\ket{0}$) or $\ket{-}$ (if the coin was at state $\ket{1}$). This will be the state of the coin prior to measurement no matter which strategy Picard uses because $Fix ( \{ I, F \} ) = \{ \ket{+}, \ket{-} \}$.  When the coin is measured, Picard will have exactly $0.5$ probability to win irrespective of which is his target state. Therefore, Picard does not have a winning strategy because, by Definition \ref{def:Winning and Dominant Strategies}, a winning strategy means that he wins the game with probability $1.0$.
		\end{enumerate}
	}
\end{proof}

The next couple of theorems give important negative results for Q by specifying the games in which he cannot surely win.

\begin{ManualTheorem}{6.2}[Q lacks a winning strategy when Picard plays last] \label{thr:Q Lacks a Winning Strategy when Picard Plays Last Appendix}
	Q does not have a winning strategy in any $n$-round game, $n \geq 2$, in which Picard makes the last move.
\end{ManualTheorem}
\begin{proof}[Proof of Theorem \ref{thr:Q Lacks a Winning Strategy when Picard Plays Last Appendix}]
	{\small
		Any $n$-round game in which Picard makes the last move has one of the following two forms.
		\begin{enumerate}
			\item	$(Q, P, \dots, Q, P)$, where $n$ is even and both Picard and Q make $\frac{n}{2}$ moves. Let us assume to the contrary that there exists a winning strategy $\sigma_{Q} = ( A_{1}, \dots, A_{\frac{n}{2}} )$ for Q. According to Definition \ref{def:Winning and Dominant Strategies}, the fact that $\sigma_{Q}$ is a winning strategy means that for every strategy of Picard, Q wins the game with probability $1.0$. Since this holds for every strategy of Picard, it must also hold for the strategies $\sigma_{P} = ( I, \dots, I, I )$ and $\sigma'_{P} = ( I, \dots, I, F )$. The former implies that after Q's last action the coin must be at the basis state $\ket{q_Q}$, whereas the latter implies that after Q's last action the coin must be at the opposite basis state, which is absurd. Thus, Q does not possess a winning strategy.
			\item	$(P, Q, \dots, P, Q, P)$, where $n$ is odd, Q makes $\frac{n}{2}$ moves and Picard make $\frac{n}{2} + 1$ moves. In order to arrive at a contradiction, we assume to the contrary that there exists a winning strategy $\sigma_{Q} = ( A_{1}, \dots, A_{\frac{n}{2}} )$ for Q. In view of Definition \ref{def:Winning and Dominant Strategies}, if Q employs $\sigma_{Q}$, he will win the game with probability $1.0$ no matter which strategy Picard chooses. If Picard uses $\sigma_{P} = ( I, \dots, I, I )$, then the fact that $\sigma_{Q}$ is a winning strategy means that after Q's last action the coin must be at the basis state $\ket{q_Q}$. On the other hand, if Picard uses $\sigma_{P} = ( I, \dots, I, F )$, then the fact that $\sigma_{Q}$ is a winning strategy means that after Q's last action the coin must be at the opposite basis state. This contradiction proves that Q does not possess a winning strategy.
		\end{enumerate}
	}
\end{proof}

\begin{ManualTheorem}{6.3}[Q lacks a winning strategy when Picard plays first] \label{thr:Q Lacks a Winning Strategy when Picard Plays First Appendix}
	Q does not have a winning strategy in any $n$-round game, $n \geq 2$, in which Picard makes the first move.
\end{ManualTheorem}
\begin{proof}[Proof of Theorem \ref{thr:Q Lacks a Winning Strategy when Picard Plays First Appendix}]
	{\small
		Any $n$-round game in which Picard makes the first move has one of the following two forms.
		\begin{enumerate}
			\item	$(P, Q, \dots, P, Q)$, where $n$ is even and both Picard and Q make $\frac{n}{2}$ moves. Let us assume to the contrary that there exists a winning strategy $\sigma_{Q} = ( A_{1}, \dots, A_{\frac{n}{2}} )$ for Q. According to Definition \ref{def:Winning and Dominant Strategies}, the fact that $\sigma_{Q}$ is a winning strategy means that for every strategy of Picard, Q wins the game with probability $1.0$. Since this holds for every strategy of Picard, it must also hold for the strategies $\sigma_{P} = ( I, \dots, I, I )$ and $\sigma'_{P} = ( F, \dots, I, I )$. The former implies that
			\begin{align} \label{eq:Q's Winning Strategy Property I}
				A_{\frac{n}{2}} \dots A_{1} \ket{q_0} = \ket{q_Q} \Rightarrow C \ket{q_0} = \ket{q_Q}
				\ ,
				\tag{ \ref{thr:Q Lacks a Winning Strategy when Picard Plays First Appendix}.i }
			\end{align}
			where $C = A_{\frac{n}{2}} \dots A_{1}$. Since the composition of unitary operators produces a unitary operator, we know that $C$ is unitary. On the other hand, the latter implies that
			\begin{align} \label{eq:Q's Winning Strategy Property II}
				A_{\frac{n}{2}} \dots A_{1} F \ket{q_0} = \ket{q_Q} \Rightarrow C F \ket{q_0} = \ket{q_Q}
				\ .
				\tag{ \ref{thr:Q Lacks a Winning Strategy when Picard Plays First Appendix}.ii }
			\end{align}
			If $\ket{q_0} = \ket{0}$, then $F \ket{q_0} = \ket{1}$, whereas if $\ket{q_0} = \ket{1}$, then $F \ket{q_0} = \ket{0}$. If we combine this last result with (\ref{eq:Q's Winning Strategy Property I}) and (\ref{eq:Q's Winning Strategy Property II}), we conclude that
			\begin{align} \label{eq:Q's Winning Strategy Property III}
				C \ket{0} = C \ket{1} = \ket{q_Q}
				\ ,
				\tag{ \ref{thr:Q Lacks a Winning Strategy when Picard Plays First Appendix}.iii }
			\end{align}
			which is of course impossible because $C$ is unitary. Thus, Q does not possess a winning strategy.
			\item	$(P, Q, \dots, P, Q, P)$, where $n$ is odd, Q makes $\frac{n}{2}$ moves and Picard make $\frac{n}{2} + 1$ moves. In order to arrive at a contradiction, we assume to the contrary that there exists a winning strategy $\sigma_{Q} = ( A_{1}, \dots, A_{\frac{n}{2}} )$ for Q. In view of Definition \ref{def:Winning and Dominant Strategies}, if Q employs $\sigma_{Q}$, he will win the game with probability $1.0$ no matter which strategy Picard chooses. If Picard uses $\sigma_{P} = ( I, \dots, I, I )$, then the fact that $\sigma_{Q}$ is a winning strategy means that after Q's last action the coin must be at the basis state $\ket{q_Q}$. On the other hand, if Picard uses $\sigma_{P} = ( I, \dots, I, F )$, then the fact that $\sigma_{Q}$ is a winning strategy means that after Q's last action the coin must be at the opposite basis state. This contradiction proves that Q does not possess a winning strategy.
		\end{enumerate}
	}
\end{proof}

The next Theorem \ref{thr:When Q Has a Winning Strategy Appendix} gives a positive answer to the question of whether Q, and in effect the quantum player, can surely win and under what circumstances.

\begin{ManualTheorem}{6.4}[When Q possesses a winning strategy] \label{thr:When Q Has a Winning Strategy Appendix}
	In any $n$-round game, $n \geq 2$, Q has a winning strategy iff Q makes the first and the last move.
\end{ManualTheorem}
\begin{proof}[Proof of Theorem \ref{thr:When Q Has a Winning Strategy Appendix}]
	{\small
		The one direction, i.e., if Q has a winning strategy then he must make the first and the last move, is an immediate consequence of Theorems \ref{thr:Q Lacks a Winning Strategy when Picard Plays Last Appendix} and \ref{thr:Q Lacks a Winning Strategy when Picard Plays First Appendix}.

		It remains to prove the other direction, that is if Q makes the first and the last move, then Q possesses a winning strategy. If the initial state of the coin $\ket{q_0}$ is the same as the target state of Q, then $\sigma_{Q} = ( H, I, \dots, I, H )$ a winning strategy for Q. Q's first action will drive the coin to either $\ket{+}$ (if the initial state is $\ket{0}$) or $\ket{-}$ (if the initial state is $\ket{1}$). In any case both $\{ \ket{+}, \ket{-} \}$ are fixed by $\{ I, F \}$, as Theorem \ref{thr:The Fixed Set of M_P in U(2) Appendix} asserts. Q's last move will send the coin back to $\ket{q_0}$. If the initial state of the coin $\ket{q_0}$ is different from the target state of Q, then it is trivial to check that $\sigma_{Q} = ( H, I, \dots, I, FH )$ a winning strategy for Q.
	}
\end{proof}

\begin{ManualCorollary}{6.5}[The impact of initial and target states] \label{crl:The Impact of Initial and Target States is Negligible Apendix}
	In any $n$-round game, $n \geq 2$, if Q has a winning strategy, then he has a winning strategy for every combination of initial and target states.
\end{ManualCorollary}
\begin{proof}[Proof of Corollary \ref{crl:The Impact of Initial and Target States is Negligible Apendix}]
	{\small
		An immediate consequence of Theorem \ref{thr:When Q Has a Winning Strategy Appendix}.
	}
\end{proof}

\bibliographystyle{ieeetr}
\bibliography{PQG_DiGBibliography}

\end{document}